\documentclass[11pt]{article}
\usepackage{a4wide}
\usepackage[usenames,dvipsnames]{color}
\usepackage{amssymb,amsmath,graphics,color,enumerate}
\usepackage{tikz}
\usepackage{xspace}
\usetikzlibrary{arrows,shapes,snakes,automata,backgrounds,petri,patterns} 
\usepackage{todonotes}
\usepackage[utf8x,utf8]{inputenc} 
\usepackage[T1]{fontenc}
\usepackage{booktabs}
\usepackage{array}
\usepackage{mathtools}
\usepackage{algorithm}
\usepackage[noend]{algpseudocode}
\usepackage[numbers,sort]{natbib}

\usepackage{standalone}
\usepackage{tikz}
\usetikzlibrary{arrows}
\usetikzlibrary{patterns}

\usepackage{epsfig}
\usepackage{subfig}
\usepackage{amsmath,amsfonts,amssymb,epsfig,color,amsthm}
\usepackage{comment}
\usepackage{thmtools}
\usepackage{thm-restate}
\usepackage{mdframed}
\usepackage[absolute]{textpos}

\newtheorem{theorem}{Theorem}[section]
\newtheorem{lemma}[theorem]{Lemma}
\newtheorem{corollary}[theorem]{Corollary}

\newtheorem{observation}[theorem]{Observation}

\newcommand{\Cc}{{\ensuremath{\mathcal{C}}}}

\newcommand{\Ff}{{\ensuremath{\mathcal{F}}}}

\newcommand{\cc}{\mathop{\rm{cc}}}
\newcommand{\poly}{\mathop{\rm{poly}}}
\newcommand{\Oh}{{\ensuremath{\mathcal{O}}}}
\newcommand{\Ohstar}{{\ensuremath{\mathcal{O}^*}}}

\newcommand{\PM}{{\ensuremath{\mathcal{PM}}}}

\newcommand{\Z}{\mathbb{Z}}
\newcommand{\Zq}{\mathbb{Z}_{\ge 0}}

\newcommand{\anonymyze}[1]{#1}%

\newcommand{\ceil}[1]{\ensuremath{\left\lceil{#1}\right\rceil}}%
\newcommand{\ignore}[1]{}%
\newcommand{\field}[1]{\textup{GF}(#1)}

\newcommand{\ProblemFormat}[1]{{\sc #1}}
\newcommand{\ProblemName}[1]{\ProblemFormat{#1}\xspace}

\newcommand{\In}{{\sf in}}
\newcommand{\Out}{{\sf out}}
\newcommand{\indeg}{{\sf indeg}}
\newcommand{\outdeg}{{\sf outdeg}}

\newcommand{\probDirHam}{\ProblemName{Directed Hamiltonicity}}
\newcommand{\probMVTSP}{\ProblemName{Many Visits TSP}}
\newcommand{\probFixDegFSub}{\ProblemName{Fixed Degree $\Ff$-Subgraph}}
\newcommand{\probFixDegConSub}{\ProblemName{Fixed Degree Connected Subgraph}}
\newcommand{\probDecUFixDegConSub}{\ProblemName{Decision Unweighted Fixed Degree Connected Subgraph}}

\newcommand{\probFixDegOutSub}{\ProblemName{Fixed Degree Subgraph With Outbranching}}
\newcommand{\probFixDegSub}{\ProblemName{Fixed Degree Subgraph}}
\newcommand{\heading}[1]{\medskip\noindent{\bf #1.\ }}%
\newcommand{\Ninfty}{\Zq \cup\{\infty\}}
\newcommand{\OPT}{{\rm OPT}}
\newcommand{\ALG}{{\rm ALG}}

\newcommand{\myroot}{{\sf root}}

\newcommand{\defproblem}[3]{
	\vspace{1mm plus 2mm minus 1mm}
	\noindent\fbox{\begin{minipage}{0.96\textwidth}
		#1\\
		\textbf{Input:} #2  \\
		\textbf{Question:} #3
	\end{minipage}}
  \vspace{1mm plus 2mm minus 1mm}
}

\pagestyle{plain}

\begin{document}

\title{Many visits TSP revisited\anonymyze{\thanks{This research is a part of projects that have received funding from the European Research Council (ERC)
		under the European Union's Horizon 2020 research and innovation programme
		Grant Agreement 714704 (S.~Li, W. Nadara) and 677651 (Ł. Kowalik, M. Smulewicz).}} }

\date{\today}

\anonymyze{
\author{Łukasz Kowalik\thanks{Institute of Informatics, University of Warsaw, Poland (\texttt{kowalik@mimuw.edu.pl})} \and Shaohua Li\thanks{Institute of Informatics, University of Warsaw, Poland (\texttt{shaohua.li@mimuw.edu.pl})} \and Wojciech Nadara\thanks{Institute of Informatics, University of Warsaw, Poland (\texttt{w.nadara@mimuw.edu.pl})} \and Marcin Smulewicz\thanks{Institute of Informatics, University of Warsaw, Poland (\texttt{m.smulewicz@mimuw.edu.pl})} \and Magnus Wahlström\thanks{Royal Holloway, University of London, UK	(\texttt{Magnus.Wahlstrom@rhul.ac.uk})}}
}

\maketitle

\anonymyze{
\begin{textblock}{20}(0, 13.0)
	\includegraphics[width=40px]{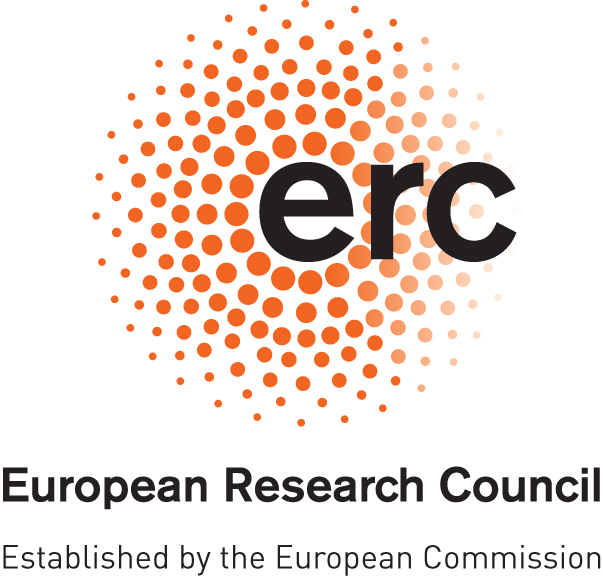}%
\end{textblock}
\begin{textblock}{20}(-0.25, 13.4)
	\includegraphics[width=60px]{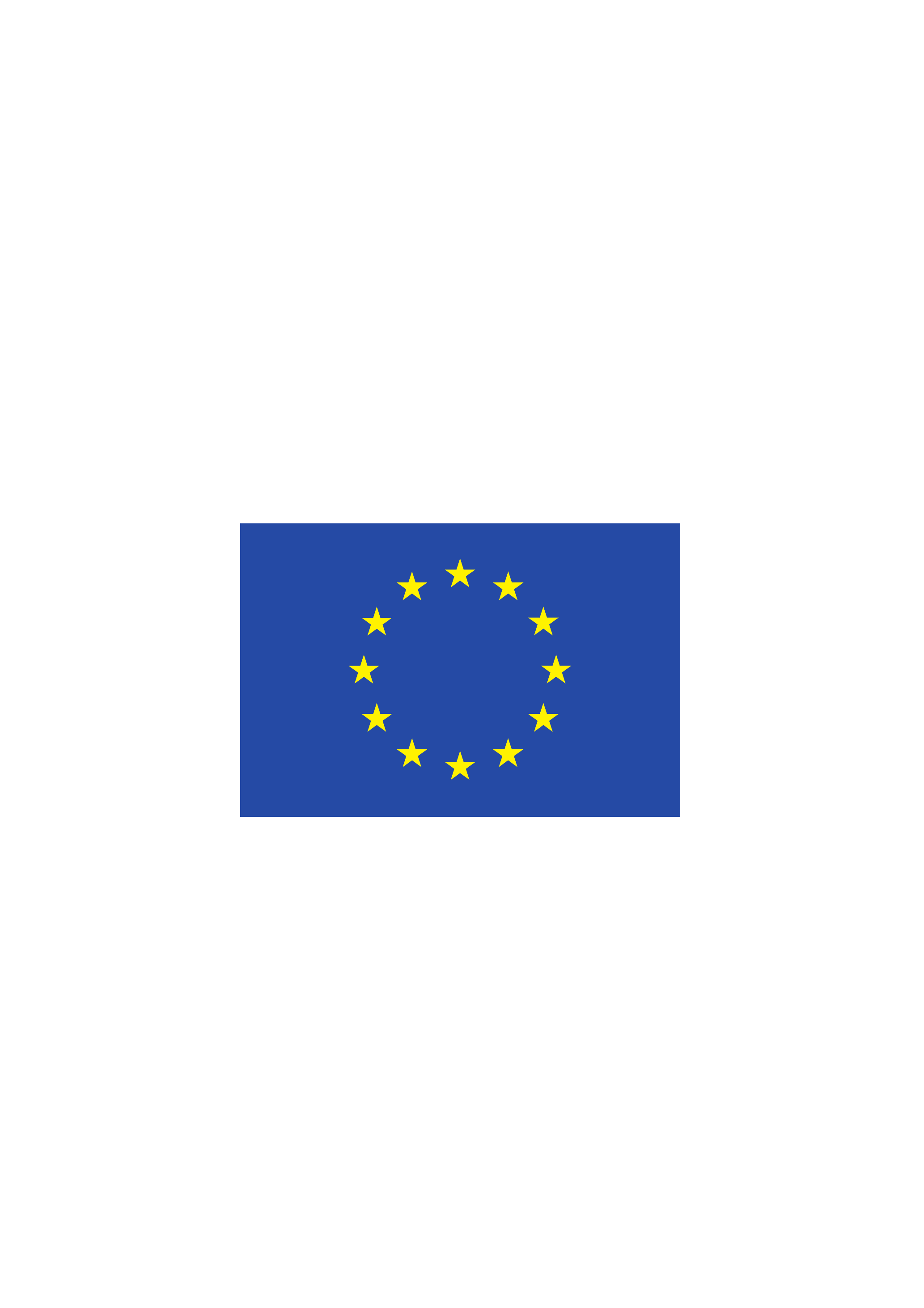}%
\end{textblock}
}

\begin{abstract}
We study the \probMVTSP problem, where given a number $k(v)$ for each of $n$ cities and pairwise (possibly asymmetric) integer distances, one has to find an optimal tour that visits each city $v$ exactly $k(v)$ times.
The currently fastest algorithm is due to Berger, Kozma, Mnich and Vincze [SODA 2019, TALG 2020] and runs in time and space $\Ohstar(5^n)$. 
They also show a polynomial space algorithm running in time $\Oh(16^{n+o(n)})$. In this work, we show three main results:
\begin{itemize}
	\item A randomized polynomial space algorithm in time $\Ohstar(2^nD)$, where $D$ is the maximum distance between two cities. By using standard methods, this results in $(1+\epsilon)$-approximation in time $\Ohstar(2^n\epsilon^{-1})$. 
	Improving the constant $2$ in these results would be a major breakthrough, as it would result in improving the $\Ohstar(2^n)$-time algorithm for {\sc Directed Hamiltonian Cycle}, which is a 50 years old open problem.
	\item A tight analysis of Berger et al.'s exponential space algorithm, resulting in $\Ohstar(4^n)$ running time bound.
	\item A new polynomial space algorithm, running in time $\Oh(7.88^n)$.
\end{itemize}
\end{abstract}

\newpage

\section{Introduction}
% !TeX root = many-visits-tsp.tex
In the \probMVTSP (MVTSP) we are given a set $V$ of $n$ vertices, with pairwise distances (or costs) $d:V^2\rightarrow \Zq \cup \{\infty \}$.
We are also given a function $k:V\rightarrow\Z_{+}$.
A valid tour of length~$\ell$ is a sequence of vertices $(x_1, \dots, x_\ell)$, where $\ell=\sum_{v \in V}{k(v)}$, such that each $v \in V$ appears in the sequence exactly $k(v)$ times.
The cost of the tour is $\sum_{i=1}^{\ell-1} d(x_i,x_{i+1}) + d({x_\ell,x_1})$. Our goal is to find a valid tour with minimum cost. 

\probMVTSP is a natural generalization of the classical (asymmetric) {\sc Traveling Salesman Problem} (TSP), which corresponds to the case when $k(v)=1$ for every vertex $v$.
Similarly as its special case, MVTSP arises with a variety of applications, including scheduling~\cite{Psaraftis1980, HochbaumShamir1991,BraunerEtAl2005,vanderVeenZhang1996,flowshop}, computational geometry~\cite{scatterTSP} and parameterized complexity~\cite{Lampis10}.	

\subsection{Related work}
The standard dynamic programming for TSP of Bellman~\cite{BellmanTSP}, Held and Karp~\cite{HeldKarpTSP} running in time $\Ohstar(2^n)$ can be easily generalized to MVTSP resulting in an algorithm with the running time of $\Ohstar(\prod_{v \in V}{(k(v)+1)})$, as noted by Psaraftis~\cite{Psaraftis1980}.
A breakthrough came in the work of Cosmadakis and Papadimitriou~\cite{papadimitriou} who presented an algorithm running in time $2^{O(n\log n)}+\Oh(n^3\log\ell)$ and space $2^{O(n\log n)}$, thus essentially removing the dependence on the function $k$ from the bound (the $\log\ell$ factor can be actually skipped if we support the original algorithm with a today's state-of-the-art minimum cost flow algorithm).
This may be surprising since the {\em length} of the output sequence is $\ell$.
However, beginning from the work of Cosmadakis and Papadimitriou we consider MVTSP with compressed output, namely the output is a multiplicity function which encodes the number of times every edge is visited by the tour. By using a standard Eulerian tour algorithm we can compute an explicit tour from this output.

The crux of the approach of Cosmadakis and Papadimitriou~\cite{papadimitriou} was an observation that every solution can be decomposed to a minimal connected spanning Eulerian subgraph (which enforces connectivity of the solution) and subgraph satisfying appropriate degree constraints (which completes the tour so that the numbers of visits agree). Moreover, once we guess the degree sequence $\delta$ of the Eulerian subgraph, our task splits into two separate tasks: finding a cheapest minimal connected Eulerian subgraph consistent with $\delta$ (which is computationally hard) and finding a cheapest subgraph satisfying the degree  constraints  (which can by solved in polynomial time by a reduction to minimum cost flow).

Yet another breakthrough came only recently, namely Berger, Kozma, Mnich and Vincze~\cite{berger-soda,berger-arxiv} improved the running time to $\Ohstar(5^n)$.
Their main contribution is an idea that it is more convenient to use outbranchings (i.e. spanning trees oriented out of the root) to force connectivity of the solution. 
The result of Berger et al.\ is the first algorithm for MVTSP which is optimal assuming Exponential Time Hypothesis (ETH)~\cite{eth}, i.e., there is no algorithm in time $2^{o(n)}$, unless ETH fails.
Moreover, by applying the divide and conquer approach of Gurevich and Shelah~\cite{GurevichShelah} they design a polynomial space algorithm, running in time $\Oh(16^{n+o(n)})$.

\subsection{Our results}

In this work, we take the next step in exploration of the \probMVTSP problem: we aim at algorithms which are optimal at a more fine grained level, namely with running times of the form $\Oh(c^n)$, such that an improvement to $\Oh((c-\epsilon)^n)$ for any $\epsilon>0$ meets a kind of natural barrier, for example contradicts Strong Exponential Time Hypothesis (SETH)~\cite{seth} or Set Cover Conjecture (SCC)~\cite{secoco}. Our main result is the following theorem.

\begin{theorem}
	\label{thm:bnd-tsp}
	There is a randomized algorithm that solves \probMVTSP in time $\Ohstar(2^nD)$ and polynomial space, where $D=\max\{d(u,v) : u, v \in V, d(u, v) \neq \infty\}$.
    The algorithm returns a minimum weight solution with constant probability.
\end{theorem}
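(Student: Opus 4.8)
The plan is to reformulate the problem and then generalize the classical polynomial-space $\Ohstar(2^n)$ closed-walk algorithm for \probDirHam to the weighted, many-visits setting. First I would observe that a valid tour is exactly a closed walk that visits each $v$ exactly $k(v)$ times, which is the same as a connected directed multigraph $H$ with $\deg^+_H(v)=\deg^-_H(v)=k(v)$ for all $v$ (connectivity is automatic for a single closed walk, and an Eulerian circuit recovers the tour from the edge multiplicities). Dropping connectivity, the cheapest balanced multigraph with these degrees is a transportation problem solved exactly in polynomial time; call its value $B$. The key structural step is to show that the true optimum satisfies $\OPT \in [B, B + O(nD)]$: starting from a cheapest (possibly disconnected) balanced multigraph, its $O(n)$ components can be merged into a single one by a bounded number of cheap edge swaps, each changing the cost by at most $O(D)$. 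This window of width $O(nD)$ is what will ultimately produce the factor $D$.

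Next I would set up an algebraic detection of a connected balanced multigraph of a prescribed cost $c$. Introduce a formal variable $z$ marking cost (an arc $u\to v$ carries $z^{d(u,v)}$) together with independent uniformly random weights $\lambda_{uv}\in\mathbb{F}_p$ for a random prime $p$, and consider the generating quantity of closed walks, for instance the trace of powers of the matrix $M$ with $M_{uv}=x_v\,\lambda_{uv}\,z^{d(u,v)}$, where $x_v$ marks a visit to $v$. Connectivity comes for free because we count single closed walks; the cost appears as the exponent of $z$, so $\OPT-B$ is the least shift in the window whose coefficient is nonzero. To turn ``visits each $v$ exactly $k(v)$ times'' into something tractable I would exploit that the total length $\ell=\sum_v k(v)$ is fixed, so the exact constraint is equivalent to the one-sided constraint ``$v$ is visited at most $k(v)$ times for every $v$''; enforcing these capacities (equivalently, extracting the coefficient of $\prod_v x_v^{k(v)}$) by an inclusion--exclusion over subsets of $V$ is where the $2^n$ factor enters, exactly mirroring the subset sieve of the \probDirHam algorithm. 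For each of the $2^n$ terms the inner quantity is computed by iterated matrix--vector products modulo $p$, truncating the $z$-polynomial to degree $O(nD)$ thanks to the window bound; this keeps the space polynomial and contributes the linear $D$ factor, for a total of $\Ohstar(2^nD)$.

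Finally I would handle correctness and the randomness. Over $\mathbb{F}_p$ the signed counts that survive the sieve are polynomials in the random weights $\lambda_{uv}$; a solution of cost $c$ exists iff the coefficient of the corresponding $z^c$ is not identically zero, and by the Schwartz--Zippel lemma a uniformly random evaluation of the $\lambda_{uv}$ (with $p$ chosen larger than the relevant degree) keeps it nonzero with constant probability, while a genuinely empty cost class always evaluates to $0$, so there are no false positives. Scanning the cost window from $B$ upward and returning the least $c$ whose detected coefficient is nonzero then yields $\OPT$ with constant probability; recovering an explicit tour is the standard Eulerian-circuit step.

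The step I expect to be the main obstacle is reconciling the two ``large'' aspects of the instance at once: the multiplicities $k(v)$ (and hence $\ell$) can be exponentially large, yet the detection must run in time $\Ohstar(2^n)$ independent of the $k(v)$ and in polynomial space. Concretely, the delicate points are (i) making the capacity/coefficient-extraction sieve cost only $2^n$ rather than $\prod_v(k(v)+1)$ despite unbounded per-vertex multiplicities, which is precisely what forces the reduction ``exact $=$ at-most under a fixed total length'' together with a careful way of eliminating the variables $x_v$, and (ii) proving the $O(nD)$ window bound so that the cost variable never needs degree proportional to the total cost $\ell D$. Getting both to hold simultaneously, while keeping the Schwartz--Zippel guarantee that no spurious cost class is ever reported, is the heart of the argument.
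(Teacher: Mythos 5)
There is a genuine gap at the step you yourself flag as the main obstacle, and it is not a technicality that can be patched within your framework. The subset sieve behind the $\Ohstar(2^n)$ \probDirHam algorithm enforces only the \emph{coverage} constraint: $\sum_{S\subseteq V}(-1)^{|V\setminus S|}W_\ell(S)$ counts closed walks of length $\ell$ that visit every vertex \emph{at least once}. For Hamiltonicity this suffices because $\ell=n$ forces ``at least once'' to mean ``exactly once''; for \probMVTSP with $\ell=\sum_v k(v)$ it does not, since a walk can overshoot $k(v)$ at one vertex and undershoot $k(u)$ at another while still covering everything. Your reduction ``exact $=$ at-most under fixed total length'' is correct as a statement, but inclusion--exclusion over subsets of $V$ cannot enforce per-vertex \emph{upper} bounds $k(v)$; and extracting the coefficient of $\prod_v x_v^{k(v)}$ from the trace polynomial genuinely costs $\prod_v(k(v)+1)$ evaluations (one interpolation point per possible degree in each $x_v$), which is exactly the Psaraftis bound the paper is trying to beat --- even after reducing the demands to $O(n^2)$ this is $2^{\Theta(n\log n)}$, not $\Ohstar(2^n)$. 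So the closed-walk formulation buys connectivity for free but leaves the degree constraints unenforceable in the claimed time, and no idea in the proposal resolves this.

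The paper avoids the issue by inverting which constraint is ``free'' and which one costs $2^n$. It first kernelizes the demands to $M=O(n^2)$ via a min-cost-flow relaxation and an alternating-cycle exchange argument (this is the rigorous version of your window bound; note your ``merge components by edge swaps'' argument is also shaky as stated, since the swap edges may have cost $\infty$, whereas the paper only ever adds edges already present in some optimal connected solution). It then encodes the degree constraints \emph{structurally}: each vertex is split into $\In(v)$ in-copies and $\Out(v)$ out-copies, so that a perfect matching in the resulting bipartite graph automatically meets every demand exactly, and each matching polynomial $P_X$ is evaluated as a Tutte-matrix determinant. The $2^n$ factor is then spent on \emph{connectivity} via Cut and Count over $\field{2^t}$: summing $P_XP_{V\setminus X}$ over the $2^{n-1}$ cuts containing a fixed vertex makes every disconnected matching appear $2^{\cc-1}\equiv 0$ times, while connected matchings survive. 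The weight variable $y$ and Lagrange interpolation over the $O(nMD)$ possible costs then yield the $\Ohstar(2^nD)$ bound. If you want to salvage a walk-based route you would need a fundamentally new way to sieve exact multiplicities in $2^n$ time; as it stands, the matching reformulation is doing essential work that your proposal has no substitute for.
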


The natural barrier in this case is connected with \probDirHam, the problem of determining if a directed graph contains a Hamiltonian cycle. Indeed, this is a special case of \probMVTSP with $D=1$, so an improvement to $\Ohstar(1.99^nD)$ in Theorem~\ref{thm:bnd-tsp} would result in an algorithm in time $\Ohstar(1.99^n)$ for \probDirHam.
While it is not known whether such an algorithm contradicts SETH or SCC, the question about its existence is a major open problem which in the last 58 years has seen some progress only for special graph classes, like bipartite graphs~\cite{bkk:laplacians,cygan:bases}.

At the technical level, Theorem~\ref{thm:bnd-tsp} uses so-called algebraic approach and relies on two key insights. The first one is to enforce connectivity not by guessing a spanning connected subgraph as in the previous works, but by applying the Cut and Count approach of Cygan et al~\cite{cutcount}. The second insight is to satisfy the degree constraints using Tutte matrix~\cite{tutte,lovasz-tutte}.

By using standard rounding techniques, we are able to make the algorithm from Theorem~\ref{thm:bnd-tsp} somewhat useful even if the maximum distance $D$ is large.
Namely, we prove the following.

\begin{theorem}
	\label{thm:aprox}
	For any $\epsilon > 0$ there is a randomized $(1+\epsilon)$-approximation algorithm that solves \probMVTSP in $\Ohstar({2^n}{\epsilon^{-1}})$ time and polynomial space.
\end{theorem}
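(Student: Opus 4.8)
The plan is to reduce the dependence on the maximum distance $D$ in Theorem~\ref{thm:bnd-tsp} to a factor of the form $\poly(n)/\epsilon$ by a guess-and-scale argument, and then to invoke that algorithm as a subroutine. First I would guess the optimum up to a factor of two: trying all powers of two $B \in \{1,2,4,\dots\}$ up to the trivial bound $\ell D$, where $\ell=\sum_v k(v)$, gives only polynomially many candidates, and for the correct guess we have $\OPT \le B \le 2\OPT$. For a fixed $B$ I would discard every arc with $d(u,v) > B$ (setting it to $\infty$); since all distances are nonnegative, no tour of cost at most $B$ uses such an arc, so for the correct guess the optimum is preserved while the maximum finite distance drops to at most $B \le 2\OPT$. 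Finally I would rescale, setting $\lambda \coloneqq \epsilon B/(cn)$ for a suitable constant $c$ and $d'(u,v) \coloneqq \lfloor d(u,v)/\lambda \rfloor$, run the algorithm of Theorem~\ref{thm:bnd-tsp} on the rounded instance, and output the cheapest valid tour (measured in the original weights) found over all guesses. After the deletion step the rounded distances are bounded by $\lfloor B/\lambda\rfloor = \Oh(n/\epsilon)$ uniformly over all guesses, so each call runs in time $\Ohstar(2^n\cdot n/\epsilon) = \Ohstar(2^n \epsilon^{-1})$, and the polynomially many guesses do not change this bound.

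The delicate point, and the step I expect to be the main obstacle, is controlling the error introduced by rounding. The naive estimate is unacceptable: rounding each arc down to a multiple of $\lambda$ loses up to $\lambda$ per traversal, and a tour performs $\ell$ traversals, so the total loss could be as large as $\lambda \ell$. Since $\ell$ is only pseudopolynomial (it may be exponential in the input size), an error of $\Theta(\lambda\ell)$ is hopeless, and crucially this blow-up is \emph{not} cured by deleting long arcs: it originates from cheap arcs that are traversed enormously many times.

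To avoid the factor $\ell$ I would exploit the decomposition underlying Theorem~\ref{thm:bnd-tsp} (and going back to Cosmadakis--Papadimitriou and Berger et al.): a solution splits into a connected spanning Eulerian backbone enforcing connectivity, which consists of only $\Oh(n)$ arcs, together with a degree-completion that is found exactly by a polynomial-time minimum-cost-flow computation. The rounding is needed only to make the search for the backbone efficient, whereas the degree-completion can be costed with the original, un-rounded weights. Consequently the rounding error is incurred only on the $\Oh(n)$ backbone arcs and is bounded by $\Oh(\lambda n)$, with no dependence on $\ell$; this is precisely the phenomenon that already lets the running time in Theorem~\ref{thm:bnd-tsp} carry a factor $D$ rather than $\ell D$. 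With $\lambda = \epsilon B/(cn)$ and $B \le 2\OPT$, this error is at most $\Oh(\epsilon \OPT)$.

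It then remains to assemble the pieces. For the correct guess $B$, the backbone selected on the rounded instance is within additive $\Oh(\lambda n) \le \Oh(\epsilon\OPT)$ of optimal even when re-evaluated under the true weights, and completing it by an exact minimum-cost flow on the original weights can only help; hence the produced tour has true cost at most $(1+\Oh(\epsilon))\OPT$. Rescaling $\epsilon$ by a constant and taking the best tour over all guesses yields a $(1+\epsilon)$-approximation. The success probability is inherited from Theorem~\ref{thm:bnd-tsp} and can be amplified to a constant by independent repetitions, and the whole procedure uses polynomial space since each ingredient does.
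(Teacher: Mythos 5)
The guess-and-scale skeleton of your proposal (guess the optimum or the longest used arc, delete longer arcs, rescale and round, call Theorem~\ref{thm:bnd-tsp}) matches the paper, and you correctly identify the crux: a naive per-traversal error bound costs $\lambda\ell$ with $\ell=\sum_v k(v)$ potentially exponential. But your proposed resolution has a genuine gap. You want to charge the rounding error only to an $\Oh(n)$-arc connectivity backbone and cost the degree-completion with the original weights, claiming this is "the decomposition underlying Theorem~\ref{thm:bnd-tsp}". It is not. The $\Ohstar(2^nD)$ algorithm of Section~\ref{sec:2^n} has no backbone/completion split: after reduction it encodes the \emph{entire} multiplicity function as a perfect matching in the bipartite graph $B_G$ and handles connectivity by Cut and Count inside one algebraic evaluation, with the full cost $d(m)=\sum_{u,v}d(u,v)m(u,v)$ tracked as the exponent of a single variable $y$ and recovered by interpolation. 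There is no stage at which a small backbone is selected under rounded weights and then completed by a flow under exact weights; the backbone-plus-flow decomposition you invoke is the one used in the $\Ohstar(4^n)$ and $\Oh(7.88^n)$ algorithms of Sections~\ref{sec:general} and~\ref{sec:polyspace}, where connectivity is enforced by enumerating $\Theta(4^n)$ outbranching sequences --- far too many for a $2^n$-type bound. Your parenthetical claim that this phenomenon is what lets Theorem~\ref{thm:bnd-tsp} carry a factor $D$ rather than $\ell D$ is likewise false; the real reason is different, and it is exactly the ingredient you are missing.

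That ingredient is the kernelization of Section~\ref{sec:reduction} (Theorem~\ref{thm:kernel}): in polynomial time one reduces to an equivalent instance with $\In(v),\Out(v)=\Oh(n^2)$, hence with $\ell=\sum_v\In(v)=\Oh(n^3)$ traversals. After this reduction the "hopeless" naive bound becomes perfectly adequate: the paper guesses the most expensive arc cost $E$ used by $\OPT$ (so $E\le d(\OPT)$), sets $d'(u,v)=\ceil{\tfrac{Cn^3}{\epsilon E}d(u,v)}$ for $d(u,v)\le E$ and $\infty$ otherwise, solves the rounded instance exactly via Lemma~\ref{lem:bnd-tsp-small-demands} with $D=\ceil{Cn^3/\epsilon}$ in time $\Ohstar(2^n\epsilon^{-1})$, and bounds the total rounding error by (per-arc error) $\times$ (number of traversals) $\le \tfrac{\epsilon E}{Cn^3}\cdot Cn^3=\epsilon E\le\epsilon\, d(\OPT)$. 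So your scaling denominator should be $\Theta(n^3)$ rather than $\Theta(n)$, which still yields finite rounded distances $\Oh(n^3/\epsilon)$ and the claimed running time. With the kernelization inserted and the backbone argument removed, your proof closes; without it, the error analysis as written does not.
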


In Theorems~\ref{thm:bnd-tsp} and~\ref{thm:aprox} the better exponential dependence in the running time was achieved at the cost of sacrificing an $\Oh(D)$ factor in the running time, or the optimality of the solution. 
What if we do not want to sacrifice anything?
While we are not able to get a $\Ohstar(2^n)$ algorithm yet, we are able to report a progress compared to the algorithm of Berger et al. in time $\Ohstar(5^n)$.
In fact we do not show a new algorithm but we provide a refined analysis of the  previous one.
The new analysis is tight (up to a polynomial factor).

\begin{theorem}
	\label{thm:expspace}
	There is an algorithm that solves \probMVTSP in time and space $\Ohstar(4^n)$.
\end{theorem}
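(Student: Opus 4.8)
The plan is to leave the algorithm of Berger et al.\ untouched and only sharpen its running-time analysis, so that the whole task reduces to counting, more carefully, how many objects the algorithm actually enumerates. First I would reconstruct its running time as an explicit sum over the combinatorial structures it visits. Following their outbranching-based decomposition, the dominant cost is the enumeration over (partial) skeleton structures indexed by a vertex subset $S\subseteq V$ together with a bounded amount of per-vertex information (the local role of each vertex of $S$ in the outbranching and its contribution to the guessed degree sequence $\delta$ of the Eulerian subgraph). Writing this cost as $\sum_{S\subseteq V} N(S)$, the bound of Berger et al.\ amounts to $N(S)\le \poly(n)\cdot 4^{|S|}$, which gives $\sum_{S} 4^{|S|}=\poly(n)\cdot 5^n$; the goal is to replace $4^{|S|}$ by $3^{|S|}$, which collapses the sum to $\sum_{S} 3^{|S|}=4^n$. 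A useful sanity check that $4$ is the right base is that their polynomial-space variant runs in $\Oh(16^{n+o(n)})=(4^n)^2\cdot 2^{o(n)}$: the Gurevich--Shelah divide-and-conquer squares the governing count, and $16=4^2$ rather than $25=5^2$, which is exactly what one expects if the true exponential base is $4$ and the $5^n$ analysis is merely loose.

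The heart of the argument is the refined upper bound $N(S)\le \poly(n)\cdot 3^{|S|}$. Here I would perform a careful case analysis of the per-vertex local states and show that one of the four local options charged to each vertex of $S$ by the original analysis is never realizable together with a globally consistent outbranching decomposition: a consistency condition of the decomposition (each reached vertex has a unique parent edge, each component of the skeleton carries exactly one root, and the in/out contributions to $\delta$ balance) either makes two of the four options coincide or makes one of them always dominated, so that the effective per-vertex state space has size $3$ rather than $4$. Formalizing this as an invariant that the enumeration already maintains, and verifying that discarding the redundant configurations removes no object the algorithm must actually inspect, turns $4^{|S|}$ into $3^{|S|}$ uniformly over all $S$, and hence $5^n$ into $4^n$ for both time and space. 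I expect this to be the main obstacle, precisely because the algorithm is fixed: the constraint has to be read off from what the algorithm already does rather than imposed by modifying it, so the delicate part is the bookkeeping that pins down exactly which local configurations are never generated.

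Finally, to justify that the new analysis is tight up to a polynomial factor, I would exhibit a family of MVTSP instances on which the algorithm genuinely generates $\Theta(4^n)$ distinct structures. The construction should make the consistency constraint maximally loose, so that each active vertex realizes all three of its remaining options independently across all subsets $S$, forcing the enumeration to touch $\sum_{S} 3^{|S|}=\Theta(4^n)$ states. Concretely I would design the distances $d$ and multiplicities $k$ so that exponentially many skeleton degree sequences $\delta$ are simultaneously feasible and pairwise inequivalent for the algorithm, and then count the induced subproblems from below. Matching this lower bound against the refined upper bound yields the stated $\Ohstar(4^n)$ time and space and certifies that no further improvement of this particular algorithm's \emph{analysis} is possible.
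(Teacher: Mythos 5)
Your plan to keep Berger et al.'s algorithm and only sharpen the counting is the right high-level move, and your identification of the source of the $5^n$ bound as $\sum_{S}\binom{n}{|S|}\cdot(\text{\#degree sequences on }S)\approx\sum_{s}\binom{n}{s}4^{s}=5^{n}$ matches the paper. But the central quantitative claim of your proposal --- that the per-subset count $N(S)$ can be improved from $\poly(n)\cdot4^{|S|}$ to $\poly(n)\cdot3^{|S|}$ so that $\sum_{S}3^{|S|}=4^{n}$ --- is false, and no local ``one of four per-vertex options is never realizable'' argument can rescue it. For $S=V$ the algorithm must handle \emph{every} outbranching sequence $\delta$ (there are $\binom{2n-2}{n-1}=\Theta^{*}(4^{n})$ of them, and each can be the degree sequence of the outbranching in the unique optimal solution), so $N(V)=\Theta^{*}(4^{n})$, not $3^{n}$. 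Moreover $4^{|S|}$ here is $\binom{2|S|-2}{|S|-1}$, a count of compositions; it does not factor as a product of independent per-vertex choices, so there is no ``fourth local state'' to eliminate.

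The paper's saving comes from reorganizing the sum in the orthogonal direction: instead of bounding, for each $S$, the number of admissible $\delta$, one fixes the \emph{merged} sequence $\bar\delta$ on all of $V$ (a sequence of $n$ nonnegative integers summing to at most $n-1$; there are $\binom{2n-1}{n}=\Oh(4^{n})$ of these) and shows that only $\Oh(n)$ subsets $S$ can pair with it in a state the recursion actually visits. This requires specifying the DP concretely: the recursion always deletes $v_{\first}=\min\{v\in S:\delta_{v}=0\}$, and this choice maintains the invariant that at most one vertex of $S$ preceding the largest deleted index has $\delta_{v}=0$ (condition $(iii)$ on ``reachable states'' in the paper). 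That invariant is what pins $V\setminus S$ down to a subset of the first $\bar s+1$ zero-entries of $\bar\delta$, giving the $\Oh(n)$ factor. Your proposal gestures at ``an invariant the enumeration already maintains'' but does not identify it, and without fixing the deletion order there is no such invariant; with a different (e.g.\ arbitrary) choice of which leaf to peel off, the set of visited states genuinely has size $\Theta^{*}(5^{n})$. So the gap is not a missing detail but the key idea: the $4^{n}$ bound is a statement about how $S$ and $\delta$ are coupled along the specific recursion, not about a smaller per-vertex alphabet. (The tightness discussion in your last paragraph is not needed for the theorem and can be dropped.)
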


In short, Berger et al.'s polyspace $\Ohstar(16^{n+o(n)})$ time algorithm iterates through all $O(4^n)$ degree sequences of an outbranching, finds the cheapest outbranching for each sequence in time $O(4^{n+o(n)})$, and completes it to satisfy the degree constraints using a polynomial time flow computation. 
Note that it is hard to speed up the cheapest outbranching routine, because for the sequence of $n-1$ ones and one zero we get essentially the TSP, for which the best known polynomial space  algorithm takes time $O(4^{n+o(n)})$~\cite{GurevichShelah}.
However, we are still able get a significant speed up of their algorithm, roughly, by using a more powerful minimum cost flow network, which allows for computing the cheapest outbranchings in smaller subgraphs.

\begin{theorem}
	\label{thm:polyspace}
	There is an algorithm that solves \probMVTSP in time $\Ohstar(7.88^n)$ and polynomial space.
\end{theorem}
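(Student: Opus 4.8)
The plan is to keep Berger et al.'s decomposition of a solution into a spanning outbranching (which certifies connectivity) plus a degree-constrained completion (which is found by a minimum cost flow), but to push as much of the work as possible out of the expensive outbranching routine and into the flow network, so that the Gurevich--Shelah style subroutine only has to run on a small \emph{core} of vertices. Recall that in the $\Ohstar(16^n)$ algorithm the two $4^n$ factors come from (i) enumerating the $\Oh(4^n)$ out-degree sequences of the outbranching and (ii) computing, for each sequence, a cheapest outbranching with that sequence in time $4^{n+o(n)}$. Since (ii) is as hard as TSP on the whole vertex set, the only hope is to shrink the vertex set on which (ii) is run.

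First I would fix the structural observation that makes shrinking possible: in an outbranching the leaves have out-degree $0$, and every leaf's unique in-edge comes from a non-leaf. Hence, writing $V_1$ for the non-leaves (the \emph{core}) and $V_2$ for the leaves, the restriction of the outbranching to $V_1$ is itself a connected outbranching, and each vertex of $V_2$ is attached to $V_1$ by a single edge. Connectivity of the whole solution therefore reduces to connectivity of the core together with the guarantee that every vertex of $V_2$ has an incident solution edge to $V_1$. The second step is to design the promised \emph{more powerful} flow network: beyond enforcing the residual degree constraints (total in- and out-degree $k(v)$ at every vertex), it must (a) pay for and route the single pendant edge that attaches each core-leaf, and (b) correctly share the out-degree budget of each core vertex between its tree-children and the edges used by the flow. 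With such a network in place, the expensive outbranching computation is confined to the core $V_1$ and costs only $4^{|V_1|+o(n)}$.

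The third step is the running-time accounting. Rather than enumerating a full out-degree sequence on all $n$ vertices, I would enumerate only the data that the core computation and the flow need to interface -- essentially which vertices are core vertices together with the core out-degrees -- so that the number of enumerated configurations, multiplied by the $4^{|V_1|}$ cost of the core routine and the polynomial cost of the flow, is a sum of the form $\sum_{V_1} (\text{configurations on } V_1)\cdot 4^{|V_1|}$. Balancing the size of the core against the number of configurations by a standard entropy / generating-function estimate then yields the bound $\Ohstar(7.88^n)$; the optimum is attained at a core that is a constant fraction of $V$, strictly between the two illustrative extremes that would give $5^n$ (if the flow could absorb the whole interface for free, via $\sum_s \binom{n}{s}4^s$) and $9^n$ (if we had to re-run the core routine for every out-degree composition on the non-leaves).

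The main obstacle is the correctness of the hand-off between the core and the flow. I expect two points to need the most care. The first is proving that the enhanced flow network genuinely cannot create a component of $V_2$-vertices that is disconnected from the connected core -- i.e.\ that the degree-plus-attachment constraints really do force every leaf onto $V_1$ -- and that, conversely, no optimal solution is lost by restricting attention to decompositions of this pendant-plus-core shape. The second is the bookkeeping of the shared out-degree budget: an edge leaving a core vertex may serve either as a tree edge (counted by the core routine) or as a tour edge (counted by the flow), and the network must be built so that these two roles are charged exactly once and sum to $k(v)$. Once these are established, the time bound follows from the Gurevich--Shelah guarantee on the core and the optimization above.
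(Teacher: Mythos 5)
Your architecture --- confine the expensive fixed-degree outbranching routine to a ``core'' of the outbranching and delegate both the residual degree constraints and the attachment of the peeled vertices to an enriched min-cost flow network --- is exactly the paper's strategy, and your two flagged correctness concerns (connectivity of the peeled vertices, and splitting each core vertex's out-degree budget between tree edges and flow edges) are the right ones and are handled in the paper essentially as you anticipate. The gap is quantitative: a \emph{single} round of leaf-peeling does not give $\Ohstar(7.88^n)$. The core routine of Berger et al.\ needs the out-degree sequence of the core tree $T[V_1]$ as input, so you must enumerate, for each choice of core $V_1$ with $|V_1|=r$, all out-tree sequences on $V_1$; the number of these with exactly $c$ zeros is about $\binom{r}{c}\binom{r-2}{c-1}$, and $c$ (the number of leaves of $T[V_1]$) is only bounded by $n-r$. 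Taking $r=2n/3$, so that $c$ can reach $r/2$, this factor alone is $\Theta^*(4^r)$, and the total $\binom{n}{r}\cdot 4^r\cdot 4^{r+o(r)}$ evaluates to roughly $12^n$ at that point (and slightly worse for $r$ a bit above $2n/3$). So the ``balancing'' step you invoke does not land at $7.88$ for the one-layer decomposition; the two brackets $5^n$ and $9^n$ you cite are not bounds on any term that actually occurs.

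The missing idea is \emph{iterated} peeling: the paper removes $K$ successive leaf layers $L_1,\dots,L_K$ (optimally $K=4$), guessing for each peeled vertex which layer it belongs to at a cost of $K^{n-r}$. Since consecutive leaf layers have non-increasing sizes, the core tree $T[R]$ then has at most $(n-r)/K$ leaves, which caps the number of zeros in the enumerated core sequence and collapses the offending $\binom{r}{c}^2$ factor. Connectivity is preserved by requiring, in the flow network, that each vertex of $L_i$ receive its attachment edge from $R\cup L_{i+1}\cup\dots\cup L_K$, so attachments cannot form a cycle among peeled vertices. Balancing the $K^{n-r}$ guessing cost against the shrunken sequence count and the $4^{r+o(r)}$ core routine, via the entropy bound you mention, is what yields $\Ohstar(7.88^n)$ at $K=4$. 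Without this multi-layer refinement your argument proves a weaker bound (around $\Ohstar(12.6^n)$), which is still better than $16^{n+o(n)}$ but is not the claimed theorem.
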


%\heading{Our techniques}

%\heading{Model of computation}
%We assume that values of functions $d$ and $k$ occupy a single machine word, and arithmetic operations in our algorithms take constant time. If this is not the case, one should {\em add} a term `$+n^{O(1)}\log\ell$' to our running times.

\heading{Organization of the paper}
In Section~\ref{sec:reduction} we show that, essentially, using a polynomial time preprocessing we can reduce an instance of \probMVTSP to an equivalent one but with demands $\In$, $\Out$ bounded $O(n^2)$. This reduction is a crucial prerequisite for Section~\ref{sec:2^n} where we prove Theorem~\ref{thm:bnd-tsp}. 
Next, in Section~\ref{sec:general} we prove Theorem~\ref{thm:expspace} and in Section~\ref{sec:polyspace} we prove Theorem~\ref{thm:polyspace}.
We note that in these two sections we do not need the reduction from Section~\ref{sec:reduction}, however, in practice, applying it, which should speed-up the flow computations used in both algorithms described there. 
Finally, in Section~\ref{sec:approx} we show Theorem~\ref{thm:aprox} and we discuss further research in Section~\ref{sec:further}.

\section{Preliminaries}
% !TeX root = many-visits-tsp.tex

%{\bf Remaining notation.}
We use Iverson bracket, i.e., if $\alpha$ is a logical proposition, then the expression $[\alpha]$ evaluates to $1$ when $\alpha$ is true and $0$ otherwise.

For two integer-valued functions $f,g$ on the same domain $D$, we write $f\le g$ when $f(x)\le g(x)$ for every $x\in D$.
Similarly, $f+g$ (resp. $f-g$) denote the pointwise sum (difference) of $f$ and $g$.
This generalizes to functions on different domains $D_f$, $D_g$ by extending the functions to $D_f\cup D_g$ so that the values outside the original domain are $0$.

For a cost function $d:V^2\rightarrow\Ninfty$, and a multiplicty function $m:V^2\rightarrow\Zq$ we denote the cost of $m$ as $d(m)=\sum_{u,v\in V^2}d(u,v)m(u,v)$.

\heading{Multisets} 
Recall that a {\em multiset} $A$ can be identified by its {\em multiplicity }function $m_A:U\rightarrow\mathbb{Z}_{\ge 0}$, where $U$ is a set. 
We write $e\in A$ when $e\in U$ and $m_A(e)>0$. 
Consider two multisets $A$ and $B$. 
We write $A\subseteq B$ when for every $e\in A$ we have $e\in B$ and $m_A(e) \le m_B(e)$.
Also, $A=B$ when $A\subseteq B$ and $B\subseteq A$.
Assume w.l.o.g.\ that $m_A$ and $m_B$ have the same domain $U$.
Operations on multisets are defined by the corresponding multiplicites as follows: 
for every $e\in U$, we have 
$m_{A\cup B}(e)=\max\{m_A(e),m_B(e)\}$,
$m_{A\cap B}(e)=\min\{m_A(e),m_B(e)\}$,
$m_{A\setminus B}(e)=\max\{m_A(e)-m_B(e),0\}$,
$m_{A\bigtriangleup B}(e)=m_{(A\setminus B)\cup(B\setminus A)}=|m_A(e)-m_B(e)|$.
This notation extends to the situation when $A$ or $B$ is a set, by using the indicator function $m_A(e)=[e\in A]$.

\heading{Directed graphs}
Directed graphs (also called digraphs) in this paper can have multiple edges and multiple loops, so sets $E(G)$ will in fact be multisets.
We call a directed graph {\em simple} if it has no multiple edges or loops.
We call it {\em weakly simple} if it has no multiple edges or multiple loops (but single loops are allowed).
For a digraph $G$ by $G^\downarrow$ we denote the {\em support} of $G$, i.e., the weakly simple graph on the vertex set $V(G)$ such that $E(G^\downarrow)=\{(u,v) \mid \text{$G$ has an edge from $u$ to $v$}\}$.

Given a digraph $G=(V,E)$ we define its {\em multiplicity function} $m_G:V^2\rightarrow\Zq$
as the multiplicity function of its edge multiset, i.e., for any pair $u,v\in V$, we put $m_G(u,v)=m_{E}((u,v))$. 
Conversely, for a function $m:V^2\rightarrow\Zq$ we define the {\em thick graph} $G_m=(V,E)$ so that $m_G=m$. 
Abusing a notation slightly, we will identify $m$ and $G_m$, e.g., we can say that $m$ is strongly connected, contains a subgraph, etc.

We call a directed graph {\em connected} if the underlying undirected graph is connected.
Similarly, a {\em connected component} of a digraph $G$ is a subgraph of $G$ induced by a vertex set of a connected component of the underlying undirected graph.

For a graph $G$ (directed or undirected) and a subset $X\subseteq V(G)$, by $G[X]$ we denote the subgraph induced by $X$.

\heading{Solutions}
The following observation follows easily from known properties of Eulerian digraphs.
\begin{observation}
	\label{obs:solutions}
	\probMVTSP has a tour of cost $c$ if and only if there is a {\em multiplicity function} $m_G:V^2\rightarrow\Zq$ of cost $c$ such that $m$ contains a spanning connected subgraph.
\end{observation}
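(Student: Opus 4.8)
The plan is to treat a valid tour as a closed walk in a directed multigraph and to lean on the classical Eulerian characterization, which is the ``known property of Eulerian digraphs'' the statement appeals to. Throughout I make the degree bookkeeping explicit: the multiplicity functions $m$ at play are exactly those realizing the prescribed visit counts, i.e.\ $\sum_{u\in V} m(u,v)=\sum_{u\in V} m(v,u)=k(v)$ for every $v\in V$, which is what a tour induces on its edges. With this reading the claim reduces to the fact that a balanced directed multigraph admits an Eulerian circuit precisely when its support is connected.

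For the forward direction I would start from a valid tour $(x_1,\dots,x_\ell)$ of cost $c$, with indices taken modulo $\ell$, and define $m(u,v)$ to be the number of indices $i$ with $x_i=u$ and $x_{i+1}=v$. Then $d(m)=\sum_{u,v}d(u,v)m(u,v)$ is exactly $\sum_{i=1}^{\ell-1}d(x_i,x_{i+1})+d(x_\ell,x_1)$, so $m$ has cost $c$. Each step of the tour adds one to the out-degree of its tail and one to the in-degree of its head, so in the thick graph $G_m$ every vertex $v$ has in-degree and out-degree equal to its number of occurrences, namely $k(v)$. Since $k(v)\ge 1$, every vertex lies in the support, and the tour itself witnesses that $G_m$ is connected; hence $G_m$ is a spanning connected subgraph and $m$ contains one.

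For the converse I would take a multiplicity function $m:V^2\to\Zq$ of cost $c$ with $\In$/$\Out$-degrees equal to $k$ that contains a spanning connected subgraph $H$. The thick graph $G_m$ is balanced, and because $H$ is spanning, connected, and contained in $G_m$, the support of $G_m$ is connected and meets every vertex. By the Eulerian characterization for directed multigraphs, $G_m$ has an Eulerian circuit; reading this circuit as a vertex sequence $(x_1,\dots,x_\ell)$ (dropping the repeated final vertex) gives a cyclic sequence traversing each edge of $G_m$ once, of cost $d(m)=c$, visiting each $v$ exactly $\deg^+_m(v)=k(v)$ times. This is a valid tour of cost $c$.

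The only genuinely delicate point is the connectivity bookkeeping, namely arguing that ``$m$ contains a spanning connected subgraph'' is exactly the hypothesis needed to apply Euler's theorem to all of $G_m$ rather than merely to $H$. Here containment and spanning are used together: adjoining the remaining edges of $G_m$ to the already-connected spanning $H$ cannot disconnect the support, and $k\ge 1$ forces every vertex into the support, so there are no isolated vertices and no separate component to spoil the circuit. The remaining ingredients, cost preservation under the tour-to-multiplicity correspondence and the in/out-degree accounting, are routine.
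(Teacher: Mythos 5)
Your proof is correct and is exactly the standard Eulerian-circuit argument that the paper invokes when it says the observation ``follows easily from known properties of Eulerian digraphs''; the paper gives no further detail, and your write-up (including making the implicit degree constraints $\sum_u m(u,v)=\sum_u m(v,u)=k(v)$ explicit and noting that $k\ge 1$ plus a spanning connected subgraph yields connectivity of all of $G_m$) fills in precisely the intended steps. Nothing to add.
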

Thanks to Observation~\ref{obs:solutions}, in the remainder of this paper we refer to multiplicity functions as solutions of MVTSP (and some related problems which we are going to define).
By standard arguments, the multiplicity function can be transformed to a tour in time $\Oh(\ell)$.
Moreover, Grigoriev and Van de Klundert~\cite{Grigoriev2006} describe an algorithm which transforms it to a compressed representation of the tour in time $O(n^4 \log{\ell})$.

\heading{Out-trees}
An {\em out-tree} is the digraph obtained from a rooted tree by orienting all edges away of the root.
If an out-tree $T$ is a subgraph of a directed graph $G$ and additionally $T$ spans the whole vertex set $V(G)$ we call $T$ an {\em outbranching}.
The sequence $\{\outdeg_T(v)\}_{v\in V(T)}$ is called the {\em outdegree sequence} of $T$.
Consider a set of vertices $X\subseteq V$, $|X|\ge 2$.
\begin{lemma}[Berger et al.~\cite{berger-soda}, Lemma 2.4]
	\label{lem:out-sequence}
	A sequence of nonnegative integers $\{d_v\}_{v\in X}$ is an outdegree sequence of an out-tree spanning $X$ and rooted at $r\in X$ if and only if $(i)$ $d_r\ge 1$ and $(ii)$ $\sum_{v\in X}d_v = |X|-1$.
\end{lemma}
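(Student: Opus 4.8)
The plan is to prove the two implications separately, treating the ``only if'' direction as a routine counting argument and investing the real work in an inductive construction for the ``if'' direction.

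For necessity, suppose $\{d_v\}_{v\in X}$ is the outdegree sequence of an out-tree $T$ spanning $X$ and rooted at $r$. Since $T$ is a tree on $|X|$ vertices it has exactly $|X|-1$ edges, and because every edge is the out-edge of exactly one vertex (its tail) we get $\sum_{v\in X}d_v=|E(T)|=|X|-1$, which is $(ii)$. For $(i)$, every vertex of $T$ is reachable from $r$ by a directed path; as $|X|\ge 2$ there is some $v\neq r$, and the first edge on the $r$-to-$v$ path leaves $r$, so $d_r\ge 1$.

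For sufficiency, I would argue by induction on $n=|X|$ that any sequence satisfying $(i)$ and $(ii)$ is realizable. The base case $n=2$ forces $d_r=1$ and $d_w=0$ for the unique other vertex $w$, realized by the single edge $(r,w)$. For the inductive step with $n\ge 3$, observe that $(ii)$ together with $n$ summands forces at least one vertex $w$ with $d_w=0$, and $(i)$ guarantees $w\neq r$; this $w$ will be the leaf I peel off. I then want to pick a parent $u\neq w$ with $d_u\ge 1$, decrement $d_u$ by one to obtain a sequence $\{d'_v\}_{v\in X'}$ on $X'=X\setminus\{w\}$, apply induction to get an out-tree $T'$ on $X'$, and recover $T$ by adding the edge $(u,w)$, which raises the outdegree of $u$ back to $d_u$ and gives $w$ outdegree $0$.

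The one subtlety, and the step I expect to require the most care, is ensuring that the decremented sequence still satisfies $(i)$, i.e.\ that $d'_r\ge 1$ after removing one unit of outdegree. If some vertex $u\neq r$ has $d_u\ge 1$ I decrement that $u$, leaving $d_r$ untouched. The only obstruction is when $r$ is the sole vertex of positive outdegree; by $(ii)$ this means $d_r=n-1$ and all other $d_v=0$, and here I bypass the induction entirely and build the star with edges $(r,v)$ for every $v\neq r$, which is an out-tree rooted at $r$ with exactly the prescribed outdegrees. In every other case the decrement keeps $d'_r\ge 1$ and the sum drops to $|X'|-1$, so the induction hypothesis applies and reattaching $w$ as a child of $u$ yields the desired out-tree.
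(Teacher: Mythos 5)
The paper does not prove this lemma at all: it is imported verbatim from Berger et al.\ (Lemma~2.4 of \cite{berger-soda}), so there is no in-paper proof to compare against. Your argument is a correct, self-contained proof. The necessity direction (edge count plus reachability from the root, using $|X|\ge 2$) is exactly the routine counting you describe, and the sufficiency direction by peeling a zero-outdegree leaf and inducting is the standard construction; all the relevant checks go through (the peeled vertex $w$ satisfies $w\ne r$ by $(i)$, the chosen parent $u$ satisfies $u\ne w$ since $d_u\ge 1=1>0=d_w$, and the reduced sequence sums to $|X'|-1$).

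One small remark: the special case you carve out is actually vacuous in the inductive step. If $r$ is the sole vertex of positive outdegree and $n\ge 3$, then $d_r=n-1\ge 2$, so decrementing $d_r$ still leaves $d'_r\ge 1$ and the induction applies directly; the only situation where decrementing the root would violate $(i)$ is $d_r=1$ with all other $d_v=0$, which forces $n=2$ and is already the base case. Handling the star separately is harmless, just unnecessary.
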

A sequence $\{d_v\}_{v\in X}$ that satisfies $(i)$ and $(ii)$ will be called an {\em out-tree sequence rooted at $r$}, or {\em outbranching sequence rooted at $r$} when additionally $X=V$.
A $\delta$-out-tree means any subtree spanning $X$ with outdegree sequence $\delta$.

\section{Reduction to small demands}
\label{sec:reduction}
% !TeX root = many-visits-tsp.tex

The goal of this section is to show that, essentially, using a polynomial time preprocessing we can reduce an instance of \probMVTSP to an equivalent one but with demands $\In$, $\Out$ bounded $O(n^2)$.

Consider the following problem, for a family of simple digraphs $\Ff$.

\defproblem{\probFixDegFSub}{$d:V^2\rightarrow\Zq \cup \{\infty\}$, $\In, \Out: V \rightarrow \Zq$}{Find a function $m:V^2\rightarrow \Zq$ such that
\begin{enumerate}[$(i)$]
\item $G_m$ contains a member of $\Ff$ as a spanning subgraph,
\item for every $v\in V$ we have $\In(v)=\indeg_{G_m}(v)$ and $\Out(v)=\outdeg_{G_m}(v)$, and
\end{enumerate}
so as to minimize the value of $d(m)=\sum_{v,w\in V}d(v,w)m(v,w)$.
}

In this paper, we will consider two versions of the problem: when $\Ff$ is the family of all oriented trees, called \probFixDegConSub, and when $\Ff$ is the family of all out-trees with a fixed root $r$, called \probFixDegOutSub. 
The role of $\Ff$ is to force connectivity of the instance.
Other choices for $\Ff$ can also be interesing, for example Cosmadakis and Papadimitriou~\cite{papadimitriou} consider the family of minimal Eulerian digraphs.

When considering the instance of \probFixDegFSub we will use the notation $n=|V|$ and $\ell=\sum_{v\in V}\In(v)$. (Clearly, we can assume that also $\ell=\sum_{v\in V}\Out(v)$, for otherwise there is no solution.)

Observe that if the image of $d$ is $\{0,+\infty\}$ we get the natural unweighted version, where we are given a graph with edge set $d^{-1}(0)$ and the goal is to decide if one can choose multiplicities of the edges so that the resulting digraph contains a member of $\Ff$ and its in- and outdegrees match the demands of $\In$ and $\Out$. 

The following observation follows by standard properties of Eulerian cycles in digraphs and the fact that every strongly connected graph contains an outbranching rooted at arbitrary vertex.

\begin{observation}
\label{obs:red}
\probMVTSP is a special case of both \probFixDegConSub and \probFixDegOutSub with $\In(v)=\Out(v)=k(v)$ for every vertex $v\in V$.
\end{observation}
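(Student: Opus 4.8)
The plan is to prove the two ``special case'' claims simultaneously by setting $\In(v)=\Out(v)=k(v)$ in both subgraph problems and exhibiting a cost-preserving correspondence with valid tours. Note first that $k(v)\ge 1$ for every $v$ (the input is $k:V\to\Z_+$), so every vertex has positive in- and out-degree and none is isolated; this will be convenient later. By Observation~\ref{obs:solutions}, \probMVTSP admits a tour of cost $c$ exactly when there is a multiplicity function $m$ of cost $c=d(m)$ with $\indeg_{G_m}(v)=\outdeg_{G_m}(v)=k(v)$ for all $v$ and such that $G_m$ contains a spanning connected subgraph. Hence it suffices to show that, among multiplicity functions $m$ satisfying the degree demands $\In=\Out=k$, the condition ``$G_m$ contains a spanning connected subgraph'' is equivalent to condition $(i)$ of each of the two problems; since both problems then minimise the same objective $d(m)$ over the same feasible set, matching feasible sets immediately give matching optima.

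For \probFixDegConSub (where $\Ff$ is the family of oriented trees) the equivalence is purely combinatorial and uses neither the degrees nor Eulerianness. If $G_m$ contains a spanning connected subgraph, then its support $G_m^\downarrow$ is connected, so the underlying undirected graph has a spanning tree; orienting each tree edge as it occurs in $G_m$ produces a spanning oriented tree contained in $G_m$, i.e.\ condition $(i)$. Conversely a spanning oriented tree is itself a spanning connected subgraph. Thus the feasible sets coincide and \probMVTSP reduces to \probFixDegConSub.

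For \probFixDegOutSub (where $\Ff$ is the family of out-trees rooted at a fixed $r$) I would use the degree demands. Any feasible $m$ contains a spanning outbranching, which is in particular a spanning connected subgraph, so one implication is immediate. For the reverse, suppose $G_m$ contains a spanning connected subgraph---hence $G_m$ is weakly connected---and recall $\indeg_{G_m}(v)=\outdeg_{G_m}(v)=k(v)$ for all $v$. A standard flow-balance argument (across any cut the number of edges in each direction is equal) shows that a weakly connected digraph with $\indeg\equiv\outdeg$ is strongly connected; and a strongly connected digraph contains an outbranching rooted at any prescribed vertex $r$ (for instance a search out-tree from $r$), certifying condition $(i)$. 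This yields the equivalence, and hence the reduction to \probFixDegOutSub.

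The only step that is more than bookkeeping is the implication \emph{weak connectivity $+$ balanced in/out degrees $\Rightarrow$ strong connectivity}, which is exactly what lets a spanning-tree-type certificate be upgraded to the connectivity an Eulerian walk requires, and it is the one place where the assumption $\In=\Out$ is genuinely used. Everything else is routine: the map between a tour and its edge-multiplicity function $m$ preserves $d(m)$ exactly, loops arising when the tour repeats a vertex contribute one to both $\indeg$ and $\outdeg$ and cause no difficulty, and $k\ge 1$ rules out isolated vertices so the required spanning tree or outbranching always exists.
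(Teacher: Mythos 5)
Your proposal is correct and follows essentially the same route as the paper, which justifies the observation in one sentence by appealing to ``standard properties of Eulerian cycles in digraphs'' (your weak connectivity plus $\indeg\equiv\outdeg$ implies strong connectivity step) and to the fact that every strongly connected digraph contains an outbranching rooted at an arbitrary vertex. Your write-up merely makes these standard facts explicit, including the easy oriented-tree case for \probFixDegConSub, so there is nothing to add.
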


In the following lemma, we consider the relaxed problem \probFixDegSub, defined exactly as \probFixDegFSub, but dropping the constraint that solutions must contain a member of $\Ff$. In what follows, $s_n(\Ff)=\max_{\substack{G\in \Ff, |V(G)|= n}}|E(G)|$.

\begin{lemma}
\label{lem:diff}
Fix an input instance $d:V^2\rightarrow\Ninfty$, $\In,\Out:V^2\rightarrow \Zq$.
For every optimal solution $r$ of \probFixDegSub there is an optimal solution $c'$ of \probFixDegFSub such that for every $u,v\in V$
\[|r(u,v) - c'(u,v)|\le s_{|V|}(\Ff).\]
\end{lemma}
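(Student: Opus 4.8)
The plan is to start from an arbitrary optimal solution $c$ of \probFixDegFSub and transform it, without changing its cost, into a solution $c'$ that stays pointwise close to the relaxed optimum $r$. By definition $G_c$ contains a spanning member $F\in\Ff$, and $F$ has at most $s:=s_{|V|}(\Ff)$ edges; identifying $F$ with its multiplicity function $m_F$, we have $c\ge m_F$. First I would look at the integer-valued difference $f=c-r$. Since $c$ and $r$ satisfy the same degree demands, $f$ has vanishing row and column sums at every vertex, so it is a circulation in the residual graph of $r$ (forward arc $(u,v)$ where $c>r$, backward arc $(v,u)$ where $c<r$). I would decompose $f$ into \emph{unit} residual cycles $C_1,\dots,C_t$, each carrying one unit of flow along its arcs. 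Because $r$ is an optimal solution of \probFixDegSub, its residual graph has no negative-cost cycle, so each $C_i$ has cost $d(C_i)\ge 0$.

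Next I would remove cycles greedily: pick a maximal subset $S\subseteq\{C_1,\dots,C_t\}$ such that $c_S:=c-\sum_{i\in S}C_i$ still satisfies $c_S\ge m_F$, and set $c':=c_S$. Each cycle preserves all in- and out-degrees, so $c'$ meets the demands $\In,\Out$; moreover $c'\ge m_F\ge 0$, so $c'$ is a legal multiplicity function whose support contains the spanning member $F$, i.e.\ $c'$ is feasible for \probFixDegFSub. For the cost, removing cycles of nonnegative cost can only help, so $d(c')=d(c)-\sum_{i\in S}d(C_i)\le d(c)$; combined with optimality of $c$ this forces $d(c')=d(c)$, so $c'$ is again optimal.

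It remains to bound $|c'(u,v)-r(u,v)|$, and this is the heart of the argument. The key point is that after the greedy step \emph{every} surviving cycle $C\notin S$ must be ``pinned'': since $C$ could not be removed, deleting it would push some arc below $m_F$, which can only happen on a \emph{forward} arc $(x,y)$ of $C$ that is tight, i.e.\ $c'(x,y)=m_F(x,y)$; as this arc carries flow we have $c'(x,y)>r(x,y)\ge 0$, so $m_F(x,y)>0$ and $(x,y)\in\supp(m_F)$. Assigning each surviving cycle to one such tight arc, the number of cycles assigned to a fixed arc $(x,y)$ is at most the forward flow of the surviving cycles through $(x,y)$, namely $c'(x,y)-r(x,y)=m_F(x,y)-r(x,y)\le m_F(x,y)$. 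Summing over $\supp(m_F)$ shows that at most $\sum_{(x,y)}m_F(x,y)=|E(F)|\le s$ cycles survive. Finally, for any pair $(u,v)$ the quantity $|c'(u,v)-r(u,v)|$ equals the number of surviving cycles routing flow through the corresponding (forward or backward) residual arc, which is at most the total number of surviving cycles, hence at most $s=s_{|V|}(\Ff)$, as required.

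The step I expect to be most delicate is the bookkeeping that turns ``maximality of $S$'' into the per-arc flow bound: I must use a genuinely unit-cycle decomposition (so that deleting one cycle changes each arc by exactly one and tightness is detected correctly), observe that decreasing occurs only on forward arcs while the constraint $c_S\ge m_F$ is automatic on backward arcs (there $c_S\ge c\ge m_F$), and make sure no surviving cycle is double-counted, which the assignment to a single tight arc guarantees. The role of optimality of $r$ is confined to the single fact that no residual cycle has negative cost, which is exactly what prevents $d(c')$ from dropping below $d(c)$.
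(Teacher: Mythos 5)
Your overall strategy matches the paper's: decompose the difference between the constrained optimum $c$ and the relaxed optimum $r$ into degree-preserving cycles, use optimality of $r$ to conclude each cycle has nonnegative cost, and retain only as many cycles as are needed to keep a spanning member of $\Ff$. The one step that needs repair is the decomposition itself. In the residual graph as you describe it --- a digraph on $V$ with a forward arc $(u,v)$ where $c>r$ and a backward arc $(v,u)$ where $c<r$ --- a directed cycle need \emph{not} preserve in- and out-degrees separately, so your assertion that ``each cycle preserves all in- and out-degrees'' can fail. For instance, if $c-r$ equals $+1$ on $(a,b),(b,c),(c,a)$ and $-1$ on $(b,a),(c,b),(a,c)$ (all row and column sums vanish), then your residual graph contains the all-forward triangle $a\to b\to c\to a$, and a greedy cycle decomposition may select it; subtracting it from $c$ lowers \emph{both} the in-degree and the out-degree of each of $a,b,c$, so the intermediate $c_S$ violates the demands. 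The property you actually have is stronger than flow conservation on $V$: every row sum and every column sum of $c-r$ is zero. The decomposition must therefore be carried out in the bipartite residual network (on vertices $v^O,v^I$), whose cycles necessarily alternate between arcs with $c>r$ and arcs with $c<r$ --- these are exactly the paper's ``alternating cycles'', and they do preserve both degree sequences. With that substitution the rest of your argument goes through: nonnegativity of each alternating cycle's cost via optimality of $r$, feasibility and optimality of $c'$, and the charging of each surviving cycle to a tight arc of $F$, giving at most $|E(F)|\le s_{|V|}(\Ff)$ survivors and hence the per-pair bound.

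Beyond this, your selection mechanism differs from the paper's in an interesting but inessential way. The paper works bottom-up from $r$: for each edge $e$ of $B$ missing from $G_r$ it adds one cycle $C_e$ containing $e$, so the number of retained cycles is at most $|E(B)|$ by construction; the price is a short case analysis verifying $B\subseteq G_{c'}$ (edges of $B$ on which $r$ exceeds $c$ could in principle be deleted, and one checks $c'(e)\ge c(e)\ge 1$). You work top-down from $c$ with a maximal removal subject to $c_S\ge m_F$, which makes containment of $F$ automatic but requires the pinning/charging argument to bound the number of survivors. Both routes yield the same constant $s_{|V|}(\Ff)$.
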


\begin{proof}
Let $c$ be an arbitrary optimal solution of \probFixDegFSub and let $B$ be an arbitrary graph from $\Ff$ which is a spanning subgraph of $G_c$.
Our plan is to build an optimal solution $c'$ of \probFixDegFSub which contains $B$ and does not differ too much from $r$.

%Consider the multiset of edges $A=E(G_{c}) \bigtriangleup E(G_{r})$ 

Define multisets $A_c = E(G_{c}) \setminus E(G_{r})$, $A_r = E(G_{r}) \setminus E(G_{c})$ and $A=A_c\cup A_r= E(G_{c}) \bigtriangleup E(G_{r})$. In what follows, by an {\em alternating cycle} we mean an even cardinality set of edges \[\{(v_0,v_1),(v_2,v_1),(v_2,v_3),(v_4,v_3)\ldots,(v_{2\ell-2},v_{2\ell-1}),(v_0,v_{2\ell-1})\},\]
where edges come alternately from $A_c$ and $A_r$. Note that an alternating cycle is not really a directed cycle, it is just an orientation of a simple undirected cycle.

Note that for every vertex $v\in V$, among the edges in $A$ that enter (resp.\ leave) $v$ the number of edges from $A_c$ is the same as the number of edges from $A_r$ (counted with corresponding multiplicities), since both $c$ and $r$ satisfy the degree constraints for the same instance.
It follows that $A$ can be decomposed into a multiset $\Cc$ of alternating simple cycles, i.e., 
\[m_A=\sum_{C\in\Cc}m_C,\]
where $m_C:V^2\rightarrow\mathbb{Z}_{\ge 0}$ and for each pair $u,v\in V$ we have $m_C(u,v)=[(u,v)\in C]\cdot m_\Cc(C)$. 
To clarify, we note that the sum above is over all cycles in $\Cc$, and not over all copies of cycles.

Denote $B^+=E(B)\setminus E(G_{r})$.
Since $B^+\subseteq A_c$, for each $e\in B^+$, there is at least one cycle in $\Cc$ that contains $e$. We choose an arbitrary such cycle and we denote it by $C_e$. (Note that it may happen that $C_e=C_{e'}$ for two different edges $e,e'\in B^+$.)
Let $\Cc^+=\{C_e \mid e\in B^+\}$.
Then we define $c'$, by putting for every $u,v\in V$

\begin{equation}
\label{eq:c'}
c'(u,v)=r(u,v) + (-1)^{[(u,v)\in A_r]}\sum_{C\in\Cc^+}[(u,v)\in C].
\end{equation}

In other words, $c'$ is obtained from $r$ by iterating over all cycles in $C\in \Cc^+$, and adding one copy of each edge of $C\cap A_c$ and removing one copy of each edge of $C\cap A_r$.

Let us show that $G_{c'}$ contains $B$. 
This is trivial for every $e\in B^+$. 
When $e\in E(B) \cap E(G_{r})$, consider two cases.
If $e\not\in A_r$, then $c'(e)\ge r(e)$, so $e\in G_{c'}$.
If $e\in A_r$, $m_A(e)=r(e)-c(e)$. Then $c'(e)= r(e)-\sum_{C\in\Cc^+}[(u,v)\in C]\ge r(e)-m_A(e)=c(e)\ge 1$, where the last inequality follows since $B\subseteq G_c$.

To see that $c'$ satisfies the degree constraints, recall that $r$ does so, and note that if in~\eqref{eq:c'} we consider only the summands corresponding to a single cycle $C\in\Cc^+$, then for every vertex we either add one outgoing edge and remove one outgoing edge, or add one ingoing edge and remove one ingoing edge, or we do not change the set of edges incident to it. 

For a cycle $C\in\Cc$ let $\delta(C)=d(A_c \cap C)-d(A_r \cap C)$.
Observe that for every cycle $C\in\Cc$ we have $\delta(C)\ge 0$, for otherwise $E(G_{r})\setminus (C\cap A_r) \cup (C\cap A_c)$ contradicts the optimality of $r$. It follows that 
\begin{equation}
d(c')=d(r)+\sum_{C\in\Cc^+ } \delta(C) \le d(r)+\sum_{C\in\Cc} \delta(C) = d(c).
\end{equation}
Hence, since $c$ is optimal solution of \probFixDegFSub, we get that $c'$ is optimal solution of \probFixDegFSub as well. Moreover, by~\eqref{eq:c'}, for every $u,v\in V$, 
\begin{equation}
|c'(u,v)-r(u,v)| \le |\Cc^+| \le |B| \le s_{|V|}(\Ff).
\end{equation}
This ends the proof.
\end{proof}

%\begin{theorem}
%If there is an algorithm that solves \probWFixDegConSub in time $f(n)g(\ell)$, then there is an algorithm that solves this problem in time 
%$n^{O(1)}+f(n)g(O(n^3))$.
%\end{theorem}

As noted in~\cite{papadimitriou,berger-arxiv}, \probFixDegSub can be solved by a reduction to minimum cost flow. By applying Orlin's algorithm~\cite{Orlin93} we get the following.

\begin{observation}[Folklore,~\cite{papadimitriou,berger-arxiv}]
	\label{obs:flows}
\probFixDegSub can be solved in time $O(n^3 \log n)$.
\end{observation}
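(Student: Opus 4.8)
The plan is to reduce \probFixDegSub to a single minimum cost flow computation and then invoke Orlin's algorithm~\cite{Orlin93}, exactly as sketched in~\cite{papadimitriou,berger-arxiv}. First I would build a flow network $N$ by splitting every vertex $v\in V$ into an \emph{out-copy} $v^{+}$ and an \emph{in-copy} $v^{-}$ and adding a super-source $s$ and a super-sink $t$. For each $v$ I add an arc $s\to v^{+}$ of capacity $\Out(v)$ and cost $0$, and an arc $v^{-}\to t$ of capacity $\In(v)$ and cost $0$. For each ordered pair $(u,v)\in V^2$ with $d(u,v)<\infty$ I add an uncapacitated arc $u^{+}\to v^{-}$ of cost $d(u,v)$; pairs with $d(u,v)=\infty$ are simply omitted, which is harmless since no finite-cost solution can use them. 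I reject immediately if $\sum_v\Out(v)\neq\sum_v\In(v)$, and otherwise set the target flow value to $\ell=\sum_v\Out(v)$.

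Next I would establish the correspondence between integral $s$-$t$ flows of value $\ell$ in $N$ and feasible solutions of \probFixDegSub. Given such a flow $f$, put $m(u,v)$ equal to the flow on $u^{+}\to v^{-}$ (and $0$ if that arc is absent). The key point is that a flow of value exactly $\ell$ must \emph{saturate} every terminal arc, since the source arcs have total capacity $\ell$ and likewise the sink arcs; saturating $s\to v^{+}$ forces $\outdeg_{G_m}(v)=\sum_w m(v,w)=\Out(v)$, and saturating $v^{-}\to t$ forces $\indeg_{G_m}(v)=\sum_w m(w,v)=\In(v)$, so $m$ meets the demands. Conversely every feasible $m$ induces a flow of value $\ell$ by reversing the map. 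As the flow cost $\sum_{u,v}d(u,v)\,f(u^{+}\to v^{-})$ equals $d(m)$, a minimum cost flow of value $\ell$ is exactly an optimal solution of \probFixDegSub, and infeasibility is detected by the nonexistence of a flow of value $\ell$. Integrality of the returned optimum is automatic because all capacities are integral.

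For the running time I feed $N$, which has $2n+2=\Oh(n)$ nodes and $\Oh(n^2)$ arcs, to Orlin's \emph{strongly polynomial} minimum cost flow algorithm~\cite{Orlin93}; strong polynomiality is essential here because the demands $\In,\Out$ (and hence $\ell$) may be exponential in $n$, so any running time depending on $\log\ell$ would be unacceptable. Instantiated for this network size, the bound becomes $\Oh(n^3\log n)$, giving the claim. \textbf{Main obstacle.} The reduction and its correctness are routine; the delicate part is the running-time accounting. Reading Orlin's bound off for a network with $\Oh(n)$ nodes and $\Theta(n^2)$ arcs and confirming that it collapses to $\Oh(n^3\log n)$ (rather than a larger polynomial) relies on the special transportation structure—only the $\Oh(n)$ terminal arcs are capacitated while the $\Theta(n^2)$ cost arcs are uncapacitated—so that the number of shortest-path phases stays $\Oh(n)$, each phase being a Dijkstra-with-potentials computation of cost $\Oh(n^2)$. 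One must also double-check that omitting the $\infty$-cost arcs together with the value-$\ell$ feasibility test correctly reports every infeasible instance.
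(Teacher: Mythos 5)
Your reduction to a transportation-type minimum cost flow network and the appeal to Orlin's strongly polynomial algorithm is exactly the argument the paper has in mind; the paper gives no proof beyond the one-sentence remark that \probFixDegSub reduces to minimum cost flow and that Orlin's algorithm yields the bound. The only nit is in your running-time accounting: Orlin's bound for the (essentially uncapacitated) transportation instance is $O(n\log n)$ shortest-path computations rather than $O(n)$, but each costs $O(m+n\log n)=O(n^2)$ here, so the product is still the claimed $O(n^3\log n)$.
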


\begin{theorem}[Kernelization]	
\label{thm:kernel}
There is a polynomial time algorithm which, given an instance $I=(d,\In,\Out)$ of \probFixDegFSub outputs an instance $I'=(d,\In',\Out')$ of the same problem and a function $f:V^2\rightarrow\Zq$ such that 
\begin{enumerate}[$(i)$]
\item $\In'(v),\Out'(v)= \Oh(n\cdot s_n(\Ff))$ for every vertex $v$,
\item if $m^*$ is an optimal solution for $I'$, then $f+m^*$ is an optimal solution for $I$.
\end{enumerate}
The algorithm does need to know $\Ff$, just the value of $s_n(\Ff)$.
\end{theorem}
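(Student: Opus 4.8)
The plan is to peel off the bulk of an optimal solution of the relaxed problem and leave only a small residual to be re-optimized. Concretely, I first compute an optimal solution $r$ of \probFixDegSub on the input instance $I=(d,\In,\Out)$; by Observation~\ref{obs:flows} this takes polynomial time, and since the underlying min-cost-flow instance has integral data, $r$ is integer-valued. I then define the fixed function
\[
f(u,v)=\max\{r(u,v)-s_n(\Ff)-1,\ 0\}\qquad\text{for all }u,v\in V,
\]
and set $\In'(v)=\In(v)-\indeg_{G_f}(v)$ and $\Out'(v)=\Out(v)-\outdeg_{G_f}(v)$, returning $I'=(d,\In',\Out')$ together with $f$. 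Since $f\le r$ pointwise and $r$ meets the degree demands of $I$, the new demands are nonnegative. Property $(i)$ is then immediate: for every pair we have $r(u,v)-f(u,v)=\min\{r(u,v),\,s_n(\Ff)+1\}\le s_n(\Ff)+1$, so $\In'(v)=\sum_{u}(r(u,v)-f(u,v))\le n(s_n(\Ff)+1)=\Oh(n\cdot s_n(\Ff))$, and symmetrically for $\Out'(v)$.

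For property $(ii)$ I argue feasibility and optimality. Given any solution $m^*$ of $I'$, the function $f+m^*$ meets the demands of $I$ by the definition of $\In',\Out'$; moreover $G_{m^*}$ and $G_{f+m^*}$ share the vertex set $V$, so any spanning member of $\Ff$ contained in $G_{m^*}$ is also contained in $G_{f+m^*}$, and thus $f+m^*$ is feasible for $I$. For optimality, let $c'$ be an optimal solution of $I$ with $|r(u,v)-c'(u,v)|\le s_n(\Ff)$ as provided by Lemma~\ref{lem:diff}, and let $B\in\Ff$ be a spanning subgraph of $G_{c'}$. The key claim is that $c'-f$ is a feasible solution of $I'$: it meets $\In',\Out'$ by subtracting the two degree identities, and it contains $B$. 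The latter is where the extra $+1$ in $f$ is used: for $(u,v)\in B$ we have $m_B(u,v)=1$ (members of $\Ff$ are simple), $c'(u,v)\ge 1$, and $c'(u,v)\ge r(u,v)-s_n(\Ff)$, so a short case check on whether $r(u,v)\le s_n(\Ff)+1$ gives $c'(u,v)-f(u,v)\ge 1$ in both cases; hence $c'-f\ge m_B\ge 0$. Since $m^*$ is optimal and $c'-f$ is feasible for $I'$, we get $d(m^*)\le d(c'-f)=d(c')-d(f)$, so $d(f+m^*)=d(f)+d(m^*)\le d(c')$; as $f+m^*$ is feasible for $I$ and $c'$ is optimal for $I$, $f+m^*$ is optimal for $I$.

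I expect the containment step for $c'-f$ to be the main obstacle. If I had used the naive buffer $r-s_n(\Ff)$, then on the edges of $B$ I could only guarantee $c'-f\ge 0$ rather than $\ge 1$, so the embedded tree $B$ might be destroyed and $I'$ could fail to preserve the $\Ff$-containment; the slightly larger buffer $s_n(\Ff)+1$ repairs this while only inflating the demand bound by a constant factor. A minor point worth confirming separately is that the reduction preserves feasibility in both directions, so that the phrase ``$m^*$ is an optimal solution for $I'$'' is well posed: $I'$ feasible implies $I$ feasible via $m^*\mapsto f+m^*$, and $I$ feasible implies the relaxed problem feasible, hence $r$ and $c'$ exist and $c'-f$ witnesses feasibility of $I'$.
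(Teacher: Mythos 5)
Your proposal is correct and follows essentially the same route as the paper: compute an optimal relaxed solution $r$, set $f=\max\{r-s_n(\Ff)-1,0\}$ (the paper writes this as $f=\max\{f_0-1,0\}$ with $f_0=\max\{r-s_n(\Ff),0\}$, which is the same function), and use Lemma~\ref{lem:diff} to exhibit $c'-f$ as a feasible solution of $I'$ containing the spanning member of $\Ff$. Your observation about why the extra $+1$ in the buffer is needed to keep $c'-f\ge 1$ on the edges of $B$ is exactly the point the paper's case analysis makes.
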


\begin{proof}
Our algorithm begins by finding an optimal solution $r$ of \probFixDegSub using Observation~\ref{obs:flows}.

Define $f_0:V^2\rightarrow \Zq$, where for every $v,w\in V$ we put 
$f_0(v,w)=\max\{r(v,w)-s_n(\Ff),0\}.$ By Lemma~\ref{lem:diff}, there exists an optimal solution $c'$ for instance $I$ such that $c'\ge f_0$.
Now define $f:V^2\rightarrow \Zq$, where for every $v,w\in V$ we put $f(v,w)=\max\{f_0(v,w)-1,0\}$.
Finally, we put $\In'(v)=\In(v)-\sum_{w\in V}f(w,v)$ and $\Out'(v)=\Out(v)-\sum_{w\in V}f(v,w)$.
The algorithm outputs $I'=(d,\In',\Out')$ and $f$.
In what follows, we show that the output has the desired properties.

For the property $(i)$, consider any vertex $v\in V$ and observe that $\sum_{w\in V}f(v,w)\ge \sum_{w\in V}(f_0(v,w)-1) \ge \sum_{w\in V}(r(v,w)-s_n(\Ff)-1)$.
Since $r$ is a feasible solution of $I$, we have $\Out(v)=\sum_{w\in V}r(v,w)$.
It follows that $\Out'(v)\le n(1+s_n(\Ff)) = O(n\cdot s_n(\Ff))$ as required. 
The argument for $\In'(v)$ is symmetric.

%\todo[inline]{W: I am quite sure this $n \cdot s_n(\Ff)$ can be optimized to $s_n(\Ff)$ (because every cycle $C_e$ changes every out(v) by at most 1), but we don't really care, right?}

Now we focus on $(ii)$.
Let $m^*$ be an optimal solution for $I'$.
It is easy to check that $f+m^*$ satisfies the degree constraints for the instance $I$.
Also, since $m^*$ contains a subgraph from $\Ff$, then $f+m^*$ contains the same subgraph.
It follows that $f+m^*$ is a feasible solution of $I$.
It suffices to show that $f+m^*$ is an optimal solution for $I$.

Denote $r=c'-f$.
Consider any pair $v,w\in V$ such that $c'(v,w)\ge 1$. We claim that $f(v,w)\le c'(v,w)-1$. 
Indeed, if $f_0(v,w)=0$ then $f(v,w)=0\le c'(v,w)-1$, and if $f_0(v,w)\ge 1$ then $f(v,w)=f_0(v,w)-1\le c'(v,w)-1$.
It follows that $r(v,w)\ge 1$. 
In particular, since $c'$ contains a subgraph from $\Ff$, then also $r$ contains the same subgraph.
It follows that $r$ is a feasible solution for $I'$ (the degree constraints are easy to check).
Hence, $d(m^*) \le d(r)$.
It follows that $d(f+m^*)\le d(r+f)=d(c')$, so $f+m^*$ is indeed an optimal solution for $I$.
\end{proof}

\section{The small costs case in time $\Ohstar(2^nD)$}
\label{sec:2^n}
% !TeX root = many-visits-tsp.tex

In this section we establish Theorem~\ref{thm:bnd-tsp}.
We do it in a bottom-up fashion, starting with a simplified core problem, and next generalizing the solution in a few steps.

\subsection{Unweighted decision version with small degree demands}
\label{sec:bnd-decision}

Consider the following problem.

\defproblem{\probDecUFixDegConSub}{a digraph $G=(V,E)$, $\In, \Out: V \rightarrow \Zq$}{Is there a function $m:V^2\rightarrow\Zq$ such that $G^\downarrow_m$ is a connected subgraph of $G$ and for every $v\in V$ we have $\In(v)=\indeg_{G_m}(v)$ and $\Out(v)=\outdeg_{G_m}(v)$?
}

Note that \probDecUFixDegConSub generalizes the directed Hamiltonian cycle problem, which is known to be solvable in $\Ohstar(2^n)$ time and polynomial space. In this section we show that this running time can be obtained for the more general problem as well, though we need to allow some randomization.

\begin{theorem}
	\label{thm:bnd-decision}
	There is a randomized algorithm which solves an instance $I=(\In,\Out)$ of \probDecUFixDegConSub in time $\Ohstar(2^n\poly(M))$ and polynomial space, where $M=\max_{v}\max\{\In(v),\Out(v)\}$. 
	The algorithm is Monte Carlo with one-sided error, i.e., the positive answer is always correct and the negative answer is correct with probability at least $p$, for any constant $p<1$.
\end{theorem}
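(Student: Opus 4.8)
The plan is to combine the Cut and Count technique with an algebraic (polynomial identity testing) encoding of the degree constraints, exactly along the two lines announced in the introduction. First I would dispose of the degenerate cases by preprocessing: if $\sum_{v}\In(v)\ne\sum_v\Out(v)$ there is no solution, and if some vertex has $\In(v)=\Out(v)=0$ while $n\ge 2$ then that vertex is isolated in every $G_m$, so no connected solution exists. Hence I may assume $\ell:=\sum_v\In(v)=\sum_v\Out(v)$ and that every vertex has positive total demand, and I fix an arbitrary root $r\in V$.

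The degree constraints are encoded by a bipartite matching. For every $v$ I introduce $\Out(v)$ out-ports and $\In(v)$ in-ports, and for every edge $(u,v)\in E$ and every pair consisting of an out-port of $u$ and an in-port of $v$ I add a port-edge carrying its own independent uniformly random value $y_{e'}\in\field{2^b}$. A perfect matching $\mu$ of this port graph projects to a multiplicity function $m(\mu)$ with $\indeg_{G_{m(\mu)}}=\In$, $\outdeg_{G_{m(\mu)}}=\Out$ and support in $E$, and conversely every degree-constrained $m$ is the projection of at least one such $\mu$. On top of this I run Cut and Count: for each $2$-colouring $\mathrm{col}:V\to\{0,1\}$ with $\mathrm{col}(r)=0$, let $A_{\mathrm{col}}$ be the bipartite (Edmonds/Tutte-type) matrix of the port graph in which a port-edge is kept only if its two endpoints have equal colour, and set, over the characteristic-two field $\field{2^b}$,
\[
P(y)\;=\;\sum_{\mathrm{col}\,:\,\mathrm{col}(r)=0}\det A_{\mathrm{col}}(y).
\]

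For correctness I expand each determinant as a signed sum over port-matchings. Because every port-edge carries its own variable, the monomial $Y_\mu=\prod_{e'\in\mu}y_{e'}$ is squarefree and determines $\mu$, so its coefficient in $\det A_{\mathrm{col}}$ is $\pm1$ when $\mathrm{col}$ is constant on every edge of $G^{\downarrow}_{m(\mu)}$ (i.e.\ $\mathrm{col}$ is a consistent cut of $G^{\downarrow}_{m(\mu)}$) and $0$ otherwise. Working over characteristic two, summing over colourings turns the coefficient of $Y_\mu$ in $P$ into the parity of the number of consistent cuts placing $r$ on the left, namely $2^{c-1}\bmod 2=[c=1]$, where $c$ is the number of connected components of $G^{\downarrow}_{m(\mu)}$ (all $n$ vertices are active by the preprocessing). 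Thus, as a polynomial, $P$ is exactly the sum of the monomials $Y_\mu$ over matchings whose projection is connected; since distinct matchings give distinct monomials there is no cancellation, so $P\not\equiv 0$ if and only if a connected degree-constrained $m$ exists, i.e.\ iff the instance is a \textsc{yes}-instance. I would then test $P\not\equiv 0$ by evaluating it at the random point $y$: by Schwartz--Zippel, if $P\equiv 0$ the value is always $0$, and otherwise it is nonzero with probability at least $1-\ell/2^b$ (note $\deg P=\ell$).

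The algorithm outputs \textsc{yes} iff the computed value is nonzero. Each $A_{\mathrm{col}}$ is an $\ell\times\ell$ matrix with $\ell\le nM$, so its determinant over $\field{2^b}$ costs $\poly(n,M)$ field operations in polynomial space; with $2^{n-1}$ colourings processed one at a time this gives total time $\Ohstar(2^n\poly(M))$ and polynomial space. Taking $b=O(\log(nM)+\log\tfrac{1}{1-p})$ fits the error model: a nonzero value certifies a solution, so a positive answer is always correct, while a \textsc{yes}-instance is missed with probability at most $\ell/2^b\le 1-p$, which repetition drives below any constant. I expect the main obstacle to be precisely the simultaneous control of cancellations: disconnected candidates must cancel (handled by the even number of consistent cuts in characteristic two) while distinct connected solutions must \emph{not} cancel (handled by giving every port-edge its own variable, so that each matching contributes a distinct monomial), and pushing both through a single determinant while correctly accounting for multiplicities and loops via the port construction is the delicate part.
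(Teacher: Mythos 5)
Your proposal is correct and takes essentially the same approach as the paper: encode the degree constraints as a perfect matching on in/out-ports with one variable per port-edge, evaluate the matching-generating polynomial via an Edmonds/Tutte determinant over a field of characteristic two, filter out disconnected solutions by summing over the $2^{n-1}$ vertex subsets containing a fixed root (so the $2^{c-1}$ count of consistent cuts annihilates all but the connected matchings), and finish with Schwartz--Zippel. The only cosmetic difference is that you compute a single determinant of the cut-restricted port matrix where the paper uses the product $P_X P_{V\setminus X}$ of two determinants; these coincide because the restricted matrix is block-diagonal.
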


Our strategy will be to reduce our problem to detecting a perfect matching in a bipartite graph with an additional connectivity constraint. 

We define a bipartite graph $B_G=(O,I,E(B_G))$ as follows. 
Let $I=\{v^I_1,\ldots,v^I_{\In(v)} \mid v\in V(H)\}$, $O=\{v^O_1,\ldots,v^O_{\Out(v)} \mid v\in V(H)\}$, and
$E(B_G)=\{u^O_iv^I_j\mid (u,v)\in E(G)\}$.

\begin{observation}
	$|I|=|O|=\Oh(nM)$ and $|E(B_G)|\le E(G)M^2 = \Oh(n^2M^2)$.
\end{observation}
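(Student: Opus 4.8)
The plan is a direct counting argument; there is nothing deep here beyond careful bookkeeping. First I would bound the two sides of the bipartition. By construction $I$ contains exactly $\In(v)$ copies of each vertex $v$, so $|I|=\sum_{v\in V}\In(v)$, and symmetrically $|O|=\sum_{v\in V}\Out(v)$. Since there are $n$ vertices and each demand is at most $M$, both sums are at most $nM$, giving $|I|=|O|=\Oh(nM)$. In fact both sums equal $\ell$ (the total demand must match on the two sides, as already noted for \probFixDegFSub), which is what justifies the displayed equality $|I|=|O|$ rather than merely a common upper bound.

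For the edge count I would observe that a single arc $(u,v)\in E(G)$ contributes to $B_G$ exactly the complete bipartite connection between the $\Out(u)$ copies $u^O_1,\dots,u^O_{\Out(u)}$ on the $O$-side and the $\In(v)$ copies $v^I_1,\dots,v^I_{\In(v)}$ on the $I$-side, that is, precisely $\Out(u)\cdot\In(v)\le M^2$ edges. Summing this contribution over all arcs of $G$ yields $|E(B_G)|\le |E(G)|\cdot M^2$. Finally, since the host graph $G$ is weakly simple it has at most $n^2$ arcs, so $|E(B_G)|\le n^2M^2=\Oh(n^2M^2)$.

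The only points that need even the slightest care are that $M$ is defined as the maximum over both the $\In$- and $\Out$-demands, so that the \emph{same} $M$ bounds each of the two factors in the product $\Out(u)\cdot\In(v)$, and that $G$ is weakly simple, which is exactly what caps $|E(G)|$ at $n^2$ and turns the intermediate bound $|E(G)|\cdot M^2$ into the stated $\Oh(n^2M^2)$. No step presents a genuine obstacle; the statement is pure accounting over the definition of $B_G$.
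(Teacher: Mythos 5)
Your proof is correct and follows the same direct counting from the definition of $B_G$ that the paper relies on (the paper states this observation without proof, as it is immediate from the construction). The only cosmetic remark is that even if $G$ had parallel arcs, $E(B_G)$ is defined as a set of pairs $u^O_iv^I_j$, so the bound $|E(B_G)|\le n^2M^2$ holds regardless; your appeal to weak simplicity is sufficient but not strictly needed.
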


For an undirected graph $H$ by $\PM(H)$ we denote the set of perfect matchings in $H$.
We say that a matching $M$ in $B_G$ is {\em connected} when for every cut $(X,V\setminus X)$ with $\emptyset\ne X\subsetneq V$ the matching $M$ contains an edge $u^O_iv^I_j$ such that $u\in X$ and $v\in V\setminus X$ or $v\in X$ and $u\in V\setminus X$.

For a matching $M$ in $B_G$ we define a {\em contraction} of $M$ as function $m:V^2\rightarrow \Zq$ such that 
$m(u,v)=|\{u^O_iv^I_j \in M \mid i\in[\Out(u)], j\in[\In(v)]\}|$.
In other words $G_{m}$ is obtained from $M$ by (1) orienting every edge from $O$ to $I$ and (2) identyfing all vertices in $\{v^I_1,\ldots,v^I_{\In(v)}\}\cup\{v^O_1,\ldots,v^O_{\Out(v)}\}$ for every $v\in V$, and keeping the multiple edges and loops. 

\begin{lemma}
	\label{lem:red-to-matching}
	$(G,\In,\Out)$ is a yes-instance of \probDecUFixDegConSub iff graph $B_G$ contains a connected perfect matching.
\end{lemma}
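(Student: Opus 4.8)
The plan is to show that expanding each vertex $v$ into its $\In(v)$ copies in $I$ and $\Out(v)$ copies in $O$ sets up a correspondence between feasible functions $m$ for \probDecUFixDegConSub and perfect matchings of $B_G$, and to verify separately that (a) the degree constraints on $m$ match the property of being a \emph{perfect} matching, and (b) connectivity of $G_m^\downarrow$ matches connectedness of the matching. Since the contraction operation inverts the expansion (up to the arbitrary choice of which copies an edge uses), this yields a genuine ``iff'' rather than two unrelated implications.

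For the forward direction I would take a feasible $m$ and build a matching $M$ by distributing, for each ordered pair $(u,v)$, the $m(u,v)$ parallel edges of $G_m$ among distinct pairs of copies $u^O_i,\,v^I_j$. The point is that this can be done so that every copy is matched exactly once: the number of $I$-copies of $v$ equals $\In(v)=\indeg_{G_m}(v)=\sum_{u}m(u,v)$, which is exactly the number of edges of $G_m$ entering $v$, so a bijection between these edges and the copies $v^I_1,\dots,v^I_{\In(v)}$ exists, and symmetrically for the $O$-copies via $\Out(v)=\outdeg_{G_m}(v)$. Each chosen edge $u^O_iv^I_j$ lies in $E(B_G)$ because $m(u,v)\ge 1$ forces $(u,v)\in E(G)$ (as $G_m^\downarrow$ is a subgraph of $G$). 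Hence $M$ is a perfect matching, and it is connected: any crossing edge of $G_m^\downarrow$ for a cut $(X,V\setminus X)$ comes from some $(u,v)$ with $m(u,v)\ge 1$ whose endpoints are separated, and the matched edge realizing this copy crosses the same cut.

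For the backward direction I would take a connected perfect matching $M$ and let $m$ be its contraction. By definition $m(u,v)\ge 1$ only when some $u^O_iv^I_j\in M$, and such an edge exists only if $(u,v)\in E(G)$, so $G_m^\downarrow$ is a subgraph of $G$. Because $M$ is perfect, every copy $v^I_j$ is matched exactly once, so $\indeg_{G_m}(v)=\sum_{u}m(u,v)$ counts the matched $I$-copies of $v$, which is $\In(v)$; symmetrically $\outdeg_{G_m}(v)=\Out(v)$, so the degree constraints hold. Finally, connectedness of $M$ gives, for each nonempty proper cut $(X,V\setminus X)$, a matched edge crossing it, which projects to an edge of $G_m^\downarrow$ crossing the same cut; thus $G_m^\downarrow$ is connected and $m$ witnesses a yes-instance.

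The routine but delicate part is making the ``distribution of parallel edges to distinct copies'' precise and checking that the cut-based notion of a connected matching coincides exactly with connectivity of the projected support $G_m^\downarrow$ on the whole vertex set $V$. I would pay particular attention to the degenerate case of a vertex $v$ with $\In(v)=\Out(v)=0$: then $v$ has no copies in $B_G$ and is isolated in $G_m$, so the cut $X=\{v\}$ is crossed by no matched edge and by no edge of $G_m^\downarrow$ simultaneously, confirming that both sides of the equivalence fail together and no boundary case is lost.
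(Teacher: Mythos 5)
Your proof is correct and follows essentially the same route as the paper's: the contraction of a connected perfect matching gives a feasible $m$, and conversely one fixes a bijection between the edges of $G_m$ incident to each vertex and that vertex's copies in $I\cup O$ (the paper labels the edges $e^O_{v,i}$, $e^I_{v,j}$, which is exactly your ``distribution to distinct copies''), with connectivity transferring through the cut definition in both directions. Your remark on the degenerate case $\In(v)=\Out(v)=0$ is a small extra check the paper omits, but it does not change the argument.
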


\begin{proof}
    Let $M$ be a connected  perfect matching in $B_G$ and let $m$ be its contraction.
    We claim that $m$ is a solution of $(G,\In,\Out)$.
    By the definition of $B_G$, $G^\downarrow_{m}$ is a subgraph of $G$.
	Since $M$ is connected, $G_{m}$ is connected as well, and so is $G^\downarrow_{m}$. 
	Moreover, since $M$ is a perfect matching $\In(v)=\indeg_{G_m}(v)$ and $\Out(v)=\outdeg_{G_m}(v)$ for every vertex $v$. 

    For the other direction, let $m$ be a solution for $(G,\In,\Out)$.
    For every $v\in V$, there are exactly $\Out(v)$ edges leaving $v$ in $G_m$.
    Let us denote them $e^O_{v,1},\ldots,e^O_{v,\Out(v)}$.
    Similarly, let us denote all the edges entering $v$ by $e^I_{v,1},\ldots,e^I_{v,\In(v)}$.
    Then we define $M$ as the set of edges of the form $u^O_iv^I_j$ such that $G_m$ contains an edge $e=e^O_{u,i}=e^I_{v,j}$.
    The fact that $M$ is a perfect matching is clear from the construction.
    Also, $M$ is connected, for otherwise $G_m$ is not connected. 
\end{proof}

From now on, let $B=(O,I,E(B))$ be an arbitrary subgraph of $B_G$.
Define the following multivariate polynomial over $\field{2^t}$, for an integer $t$ to be specified later. 

\begin{equation}
R = \sum_{\substack{M\in\PM(B) \\ \text{$M$ is connected}}}\prod_{e\in M}x_e
\end{equation}

\begin{lemma}
	\label{lem:polynom-equiv}
	$R$ is not the zero polynomial if and only if $B$ contains a connected perfect matching.
\end{lemma}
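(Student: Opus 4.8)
The plan is to exploit the fact that each summand of $R$ is a distinct monomial, so that no cancellation can occur and the sign/characteristic of the field is irrelevant. First I would observe that for a perfect matching $M\in\PM(B)$ the monomial $\prod_{e\in M}x_e$ is multilinear: every edge variable $x_e$ occurs with exponent $0$ or $1$, and $x_e$ has exponent $1$ precisely when $e\in M$. Consequently the set of variables occurring in the monomial is exactly the edge set $M$, so the map $M\mapsto\prod_{e\in M}x_e$ is injective on $\PM(B)$ (indeed on all subsets of $E(B)$). Hence distinct connected perfect matchings contribute distinct monomials to the sum defining $R$.

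Next I would pin down the coefficients of $R$ when written in the monomial basis. Because the monomials are pairwise distinct, for each connected perfect matching $M$ the coefficient of $\prod_{e\in M}x_e$ in $R$ is exactly $1$, and the coefficient of every other monomial is $0$. Since $1\neq 0$ in $\field{2^t}$ (this holds in any field, so the characteristic plays no role here), none of these coefficients vanishes.

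Finally I would combine the two observations to obtain the equivalence. If $B$ contains a connected perfect matching, then the index set of the sum defining $R$ is nonempty, so $R$ contains at least one monomial with nonzero coefficient and is therefore not the zero polynomial. Conversely, if $B$ contains no connected perfect matching, the index set is empty, $R$ is the empty sum, and hence $R$ is identically zero. This yields exactly the claimed statement.

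There is no genuine obstacle in this lemma: the entire content is the injectivity of $M\mapsto\prod_{e\in M}x_e$ together with the absence of cancellation. The only points that require a moment of care are that we must argue about $R$ as a \emph{formal} polynomial (so that ``is not the zero polynomial'' is a well-defined property, independent of any particular evaluation of the $x_e$), and that the coefficient $1$ is genuinely nonzero in $\field{2^t}$. All the real work, namely evaluating or randomly testing $R$ efficiently, lies outside this statement.
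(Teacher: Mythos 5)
Your proof is correct and follows essentially the same route as the paper's: both arguments rest on the observation that distinct (connected) perfect matchings yield distinct multilinear monomials, so no cancellation can occur and $R$ vanishes identically exactly when the indexing set is empty. Your write-up merely spells out in more detail the injectivity of $M\mapsto\prod_{e\in M}x_e$ and the nonvanishing of the coefficient $1$ in $\field{2^t}$, which the paper leaves implicit.
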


\begin{proof}
	It is clear that if $R$ is non-zero then $B$ contains a connected perfect matching.
	For the reverse implication it suffices to notice that every summand in $R$ has a different set of variables, so it does not cancel out with other summands over $\field{2^t}$.
\end{proof}

Our strategy is to test whether $R$ is non-zero by means of DeMillo--Lipton--Schwartz--Zippel Lemma, which we recall below.

\begin{lemma}[DeMillo and Lipton~\cite{DeMilloLipton1978}, Schwartz~\cite{schwartz}, Zippel~\cite{zippel}]
	\label{lem:zs}
	Let $P(x_1, x_2, \ldots, x_m)$ be a nonzero polynomial of degree at most $d$ over a field $\mathbb{F}$ and let $S$ be a finite subset of $\mathbb{F}$.
	Then, the probability that $P$ evaluates to zero on a random element $(a_1, a_2, \ldots , a_m)\in S^m$ is bounded by $d/|S|$.
\end{lemma}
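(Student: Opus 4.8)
The plan is to prove the bound by induction on the number of variables $m$, which is the textbook route for this statement. For the base case $m=1$, I would invoke the fundamental fact that a nonzero univariate polynomial of degree at most $d$ over a field has at most $d$ roots. Since $a_1$ is drawn uniformly at random from $S$, the probability it lands on one of these at most $d$ roots is at most $d/|S|$, which is exactly the claimed bound.

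For the inductive step, assume the bound holds for every nonzero polynomial in fewer than $m$ variables. Given a nonzero $P$ of total degree at most $d$, I would single out one variable, say $x_1$, and regroup $P$ as a univariate polynomial in $x_1$ whose coefficients lie in $\mathbb{F}[x_2,\ldots,x_m]$:
\[
P = \sum_{i=0}^{k} x_1^{i}\, P_i(x_2,\ldots,x_m),
\]
where $k$ is the largest exponent of $x_1$ that actually occurs, so that $P_k$ is not the zero polynomial. The essential bookkeeping observation is that $\deg P_k \le d-k$, because every monomial of $x_1^{k}P_k$ is a monomial of $P$ and hence has total degree at most $d$.

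The heart of the argument is a two-part case split according to whether the leading coefficient vanishes at the random point. Let $E$ be the event that $P_k(a_2,\ldots,a_m)=0$. Since $P_k$ is a nonzero polynomial in $m-1$ variables of degree at most $d-k$, the induction hypothesis gives $\Pr[E]\le (d-k)/|S|$. On the complementary event, fix any values $a_2,\ldots,a_m$ with $P_k(a_2,\ldots,a_m)\neq 0$; then $P(x_1,a_2,\ldots,a_m)$ is a genuine degree-$k$ univariate polynomial in $x_1$, so it has at most $k$ roots, and by the base case the conditional probability that it vanishes at a uniform $a_1$ is at most $k/|S|$. Combining the two parts yields
\[
\Pr[P(a_1,\ldots,a_m)=0] \le \frac{d-k}{|S|} + \frac{k}{|S|} = \frac{d}{|S|},
\]
which closes the induction.

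The one place that requires genuine care — and the step I would flag as the main obstacle — is the conditioning. I must use that $a_1$ is sampled independently of $a_2,\ldots,a_m$, so that the ``at most $k$ roots'' count for the specialized univariate polynomial translates into a conditional probability of at most $k/|S|$ that holds \emph{uniformly} over every admissible fixing of the remaining coordinates, and then legitimately combine this with $\Pr[E]$ via a union-type bound. Everything else — the degree accounting $\deg P_k \le d-k$ and the univariate root bound — is routine and needs no further work.
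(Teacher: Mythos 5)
Your proof is correct: it is the standard induction on the number of variables, with the degree bookkeeping $\deg P_k \le d-k$ and the case split on whether the leading coefficient $P_k$ vanishes handled properly, and the independence of $a_1$ from the remaining coordinates is exactly what justifies the uniform conditional bound of $k/|S|$. The paper does not prove this lemma at all --- it is quoted as a known result of DeMillo--Lipton, Schwartz, and Zippel --- so there is no in-paper argument to compare against; your write-up matches the classical proof from the cited sources.
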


By Lemmas~\ref{lem:polynom-equiv} and \ref{lem:zs}, the task reduces to {\em evaluating} $R$ fast. To this end, we will define a different polynomial $P$ which is easier to evaluate and turns out to be equal to $R$ over $\field{2^t}$.

Consider a subset $X\subseteq V$.
Let $I_X=\{v^I_i \in I \mid v\in X, i=1,\ldots,\In(v)\}$ and $O_X=\{v^O_i \in O \mid v\in X, i=1,\ldots,\Out(v)\}$.
Abusing the notation slightly, we will denote $B[X]=B[I_X\cup O_X]$.
Define the following polynomial.

\begin{equation}
P_X = \sum_{M\in\PM(B[X])} \prod_{e\in M}x_e
\end{equation}

In what follows, $v^*$ is an arbitrary but fixed vertex of $V$. Define yet another polynomial.

\begin{equation}
P = \sum_{\substack{X\subseteq V\\v^*\in X}} P_X P_{V\setminus X}.
\end{equation}

\begin{lemma}
	$P=R$.
\end{lemma}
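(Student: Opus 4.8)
The plan is to prove $P = R$ by using the classic Cut-and-Count observation that a perfect matching of $B$ is counted an \emph{even} number of times by $P$ precisely when it is \emph{not} connected, so that over a field of characteristic $2$ all disconnected matchings cancel and only the connected ones survive. First I would expand the product $P_X P_{V\setminus X}$ to see what $P$ counts. A monomial $\prod_{e \in M} x_e$ in $P_X P_{V \setminus X}$ arises from choosing a perfect matching $M_1$ of $B[X]$ and a perfect matching $M_2$ of $B[V \setminus X]$, whose disjoint union $M = M_1 \cup M_2$ is a perfect matching of $B$ whose edges all stay inside $X$ or inside $V \setminus X$ (i.e. $M$ uses no edge crossing the cut $(X, V \setminus X)$). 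Conversely, any perfect matching $M$ of $B$ that does not cross the cut $(X,V\setminus X)$ decomposes uniquely as such a pair. Hence
\[
P = \sum_{\substack{X \subseteq V \\ v^* \in X}} \sum_{\substack{M \in \PM(B) \\ M \text{ respects } (X,V\setminus X)}} \prod_{e \in M} x_e,
\]
where ``$M$ respects $(X,V\setminus X)$'' means no edge of $M$ joins a copy of a vertex in $X$ to a copy of a vertex in $V\setminus X$.

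Next I would swap the order of summation and, for each fixed perfect matching $M \in \PM(B)$, count the number $N(M)$ of sets $X$ with $v^* \in X$ that $M$ respects. Fix $M$ and let $G_m^\downarrow$ be the underlying connected-components structure it induces on $V$: a set $X$ is respected by $M$ iff $X$ is a union of connected components of the graph on $V$ whose edges are $\{(u,v) : M \text{ contains } u_i^O v_j^I \text{ or } v_i^O u_j^I\}$. Writing $\cc(M)$ for the number of these components and noting that $X$ must be a union of components that contains the component of $v^*$, the number of valid choices is exactly $N(M) = 2^{\cc(M) - 1}$, since the component of $v^*$ is forced in and each of the remaining $\cc(M) - 1$ components is freely included or excluded. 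Therefore
\[
P = \sum_{M \in \PM(B)} 2^{\cc(M) - 1} \prod_{e \in M} x_e.
\]

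Finally I would read off the result over $\field{2^t}$. If $M$ is connected then $\cc(M) = 1$, so $2^{\cc(M)-1} = 1$ and $M$ contributes its monomial exactly once. If $M$ is not connected then $\cc(M) \ge 2$, so $2^{\cc(M)-1}$ is an even integer and hence equals $0$ in a field of characteristic $2$, killing the contribution of every disconnected matching. Thus
\[
P = \sum_{\substack{M \in \PM(B) \\ M \text{ connected}}} \prod_{e \in M} x_e = R
\]
over $\field{2^t}$, as claimed. The one point that needs care is the exact translation between the cut $(X, V\setminus X)$ at the level of $V$ and the induced splitting of the copies $I_X, O_X$ that define $B[X]$: I must check that a matching respects the vertex cut iff it splits cleanly into a matching of $B[X]$ and a matching of $B[V\setminus X]$, and in particular that this requires the copies of a single vertex $v$ (the $v^I_i$ and $v^O_i$) all to lie on the same side, which holds by construction since $I_X, O_X$ collect all copies of vertices in $X$. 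The main conceptual step is the identity $N(M) = 2^{\cc(M)-1}$; everything else is bookkeeping.
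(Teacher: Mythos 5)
Your proof is correct and follows essentially the same route as the paper: expand $P_X P_{V\setminus X}$ into matchings respecting the cut, swap the order of summation, count the consistent sets containing $v^*$ as $2^{\cc(M)-1}$, and let characteristic $2$ kill the disconnected matchings. The bookkeeping point you flag at the end (all copies $v^I_i, v^O_i$ of a vertex $v$ lie on the same side of the cut by the definition of $I_X, O_X$) is exactly the right thing to check and holds as you say.
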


\begin{proof}
	For a matching $M$ in $B$ we say that a set $X\subseteq V$ is {\em consistent}	with $M$
	when $M$ does not contain an edge $u^O_iv^I_j$ such that $u\in X$ and $v\in V\setminus X$ or $v\in X$ and $u\in V\setminus X$.
	The family of all subsets of $V$ that are consistent with $M$ will be denoted by $\Cc(M)$. Then we can rewrite $P$ as follows.

	\[
	\begin{aligned}
	P=& 
	\sum_{\substack{X\subseteq V\\v^*\in X}} 
	\sum_{M_1\in\PM(B[X])}
	\sum_{M_2\in\PM(B[V\setminus X])}
    \prod_{e\in M_1\cup M_2} x_e
	& \text{[definition]}\\
	=&
	\sum_{M\in\PM(B)}
	\sum_{\substack{X\in\Cc(M)\\v^*\in X}} 
	\prod_{e\in M} x_e
	& \text{[group by $M=M_1\uplus M_2$]}\\
	=&
	\sum_{M\in\PM(B)}
	|\{X\in\Cc(M)\mid v^*\in X\}|
	\prod_{e\in M} x_e
	& \text{[trivial]}
	\end{aligned}
	\]
	
	Let us consider a perfect matching $M\in\PM(B)$ and the corresponding contraction $m$.
	Observe that the number of sets that are consistent with $M$ and contain a vertex $v^*$ is equal to $2^{\cc(M)-1}$, where $\cc(M)$ is the number of connected components of $G_m$. 
	Indeed, when $X$ is consistent with $M$, then for every connected component $Q$ of $G_m$, either $V(Q)\subseteq X$ or $V(Q)\subseteq V\setminus X$. For the component that contains $v^*$ the choice is fixed, while every choice for the remaining components defines a set consistent with $M$. It follows that when $M$ is not connected $\cc(M)\ge 2$, and the value of $2^{\cc(M)-1}$ is equal to $0$ in $\field{2^t}$, so the corresponding summand vanishes. On the other hand, if $M$ is connected, the corresponding summand equals just $\prod_{e\in M} x_e$ and it does not cancel out with another summand because the monomial has a unique set of variables. It follows that $P=R$.
\end{proof}

\begin{lemma}[Tutte, Lovasz~\cite{tutte,lovasz-tutte}]
	\label{lem:tutte}
	For an arbitrary set $X\subseteq V$, the polynomial $P_X$ can be evaluated using $\poly(n+M)$ field operations.
\end{lemma}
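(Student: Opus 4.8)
The plan is to recognize $P_X$ as the weighted permanent of the biadjacency matrix of the bipartite graph $B[X]$, and then exploit the fact that the ground field $\field{2^t}$ has characteristic $2$ to replace this (in general intractable) permanent by a determinant, which is computable by Gaussian elimination.

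First I would dispose of the degenerate case. Since $B[X]$ is bipartite with sides $O_X$ and $I_X$, it admits a perfect matching only if $|O_X|=|I_X|$; if these sizes differ then $\PM(B[X])=\emptyset$, so $P_X$ is the empty sum and we simply output $0$. Assume now $N := |O_X| = |I_X| = \Oh(nM)$. I index the rows of an $N\times N$ matrix $A$ by the vertices of $O_X$ and its columns by the vertices of $I_X$, and set the entry $A_{u^O_i, v^I_j}$ to be the field value assigned to the variable $x_e$ when $e = u^O_i v^I_j \in E(B[X])$, and $0$ otherwise.

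The heart of the argument is the identity $P_X = \det A$ over $\field{2^t}$. Expanding $\det A = \sum_\sigma \mathrm{sgn}(\sigma)\prod_{w\in O_X} A_{w,\sigma(w)}$, a permutation $\sigma$ contributes a nonzero monomial exactly when every factor $A_{w,\sigma(w)}$ is nonzero, i.e.\ when $\{\,w\,\sigma(w)\,\}_{w\in O_X}$ is a perfect matching of $B[X]$; conversely every perfect matching arises from a unique such $\sigma$. Over $\field{2^t}$ we have $\mathrm{sgn}(\sigma)=1$ for all $\sigma$, since $-1=1$ in characteristic $2$, so the signs drop out and $\det A = \sum_{M\in\PM(B[X])}\prod_{e\in M}x_e = P_X$. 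This is the main conceptual obstacle: the generating polynomial of perfect matchings is, read as a permanent, $\#\mathrm{P}$-hard to evaluate in general, but the characteristic-$2$ collapse $\det = \mathrm{perm}$ makes it coincide with the determinant. This is precisely the Tutte/Edmonds matrix identity specialized to the bipartite, characteristic-$2$ setting; working with the rectangular biadjacency matrix (rather than a skew-symmetric Tutte matrix, whose determinant is the square of a Pfaffian) is exactly what delivers $P_X$ itself and not its square, so a little care is needed to use the right matrix.

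Finally, evaluating $P_X$ at a given point reduces to substituting the chosen field values into $A$ and computing $\det A$, which Gaussian elimination over $\field{2^t}$ performs in $\Oh(N^3)=\Oh(n^3M^3)$ field operations, i.e.\ $\poly(n+M)$ field operations, as claimed.
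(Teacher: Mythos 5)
Your proposal is correct and takes essentially the same approach as the paper, which simply says to compute the determinant of the corresponding Tutte (Edmonds) matrix of dimension $|O|\times|I|$ --- exactly the biadjacency matrix with substituted variables that you construct. You merely spell out the details the paper leaves implicit: the degenerate case $|O_X|\neq|I_X|$, the sign collapse in characteristic $2$ identifying the determinant with the matching generating polynomial, and the Gaussian-elimination cost bound.
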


\begin{proof}
	Compute the determinant of the corresponding Tutte matrix of dimension $|O|\times|I|$. 	
\end{proof}

Let us now fix our field, namely $t=\lceil 1+\log n+\log M\rceil$.
Since arithmetic operations in $\field{2^t}$ can be performed in time $\Oh(t\log^2t)=\Oh(\log(n+M)\log^2\log(n+m))$,
by the definition of $P$ and Lemma~\ref{lem:tutte} we get the following corollary.

\begin{corollary}
	\label{cor:poly-P-eval}
	$P$ can be evaluated in time $2^n\poly(n+M)$.
\end{corollary}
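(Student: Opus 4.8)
The plan is to bound the running time of evaluating $P$ directly from its definition as an explicit sum. Recall that
\[
P = \sum_{\substack{X\subseteq V\\v^*\in X}} P_X P_{V\setminus X},
\]
so the first step is simply to count the summands: since $v^*$ is fixed to lie in $X$, the set $X$ ranges over the $2^{n-1}$ subsets of $V$ that contain $v^*$, and its complement $V\setminus X$ is then determined. Hence there are exactly $2^{n-1}$ products to compute and accumulate, which I would do one at a time while maintaining a single running total in $\field{2^t}$.

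For each such $X$, I would invoke Lemma~\ref{lem:tutte} twice, once to evaluate $P_X$ and once to evaluate $P_{V\setminus X}$; by that lemma each evaluation amounts to computing a single determinant of a Tutte matrix and costs $\poly(n+M)$ field operations. Multiplying the two resulting field elements and adding the product into the running total costs $\Oh(1)$ additional field operations. Thus each of the $2^{n-1}$ terms is handled using $\poly(n+M)$ operations in $\field{2^t}$.

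The remaining step is to convert the count of field operations into bit operations. With $t=\lceil 1+\log n+\log M\rceil$ fixed, each arithmetic operation in $\field{2^t}$ takes $\Oh(t\log^2 t)=\poly(\log(n+M))$ time, as already noted before the corollary. Combining the three factors---namely $2^{n-1}$ summands, $\poly(n+M)$ field operations per summand, and $\poly(\log(n+M))$ time per field operation---yields a total running time of $2^{n-1}\cdot\poly(n+M)=2^n\poly(n+M)$, as claimed.

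I expect essentially no conceptual obstacle here, since the corollary is pure bookkeeping layered on top of Lemma~\ref{lem:tutte}; the only point that warrants a moment's care is that iterating over the subsets $X$ one by one, and reusing scratch space across the determinant computations, keeps the whole procedure within polynomial space. That space bound is not asserted by the corollary itself, but it is the feature that makes this evaluation routine usable inside the polynomial-space algorithm targeted in this section.
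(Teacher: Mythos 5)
Your argument is correct and matches the paper's own justification, which likewise combines the $2^{n-1}$ subsets $X$ containing $v^*$, the $\poly(n+M)$ field operations per determinant from Lemma~\ref{lem:tutte}, and the $\Oh(t\log^2 t)$ cost of arithmetic in $\field{2^t}$. The paper states this only as a one-line derivation before the corollary, so your write-up is simply a more explicit version of the same bookkeeping.
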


\begin{lemma}
\label{lem:matching-decision}
There is a randomized algorithm which decides if $B$ contains a connected perfect matching in time $\Ohstar(2^n\poly(M))$ and polynomial space, where $M=\max_{v}\max\{\In(v),\Out(v)\}$. 
The algorithm is Monte Carlo with one-sided error, i.e., the positive answer is always correct and the negative answer is correct with probability at least $p$, for any constant $p<1$.
\end{lemma}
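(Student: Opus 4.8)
The plan is to decide whether $B$ has a connected perfect matching by testing whether the polynomial $P$ is identically zero, and to run this test via the DeMillo--Lipton--Schwartz--Zippel Lemma (Lemma~\ref{lem:zs}). Indeed, by the identity $P=R$ established above together with Lemma~\ref{lem:polynom-equiv}, the graph $B$ contains a connected perfect matching if and only if $P$ is not the zero polynomial over $\field{2^t}$. Since by Corollary~\ref{cor:poly-P-eval} one evaluation of $P$ costs only $2^n\poly(n+M)$ time and polynomial space, it remains to turn a bounded number of such evaluations into the claimed Monte Carlo algorithm.

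Concretely, I keep the field $\field{2^t}$ with $t=\lceil 1+\log n+\log M\rceil$ fixed as above; it has characteristic $2$ (which is what makes $P=R$ hold) and satisfies $|\field{2^t}|=2^t\ge 2nM$. The algorithm draws a uniformly random assignment $a\in(\field{2^t})^{E(B)}$ of values to the variables $\{x_e\}_{e\in E(B)}$, evaluates $P(a)$ by Corollary~\ref{cor:poly-P-eval}, and outputs that a connected perfect matching exists as soon as some evaluation is nonzero. If after $r$ independent repetitions every evaluation has returned $0$, it outputs that no connected perfect matching exists.

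For correctness, note that every monomial of $P$ is a product over the edges of a perfect matching of $B$, and such a matching has exactly $|O|$ edges, so $\deg P\le |O|\le nM$. Hence, whenever $P\not\equiv 0$, Lemma~\ref{lem:zs} bounds the probability that a single random evaluation vanishes by $\deg P/|\field{2^t}|\le nM/(2nM)=\tfrac12$. A nonzero evaluation certifies $P\not\equiv 0$, so a positive answer is always correct; a false negative (answering ``no'' while a connected perfect matching exists) requires all $r$ independent evaluations to vanish, which happens with probability at most $2^{-r}$. Taking the constant $r=\lceil\log_2\tfrac{1}{1-p}\rceil$ thus makes the negative answer correct with probability at least $p$, giving exactly the stated one-sided error.

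It remains to account for the resources: the running time is $r=\Oh(1)$ times the cost of one evaluation, i.e.\ $\Ohstar(2^n\poly(M))$, plus the negligible cost of sampling $|E(B)|=\Oh(n^2M^2)$ field elements, and the computation runs in polynomial space because each evaluation merely sweeps the $2^n$ subsets $X\ni v^*$ while maintaining a running sum and computing each factor $P_X,P_{V\setminus X}$ as a single Tutte-matrix determinant (Lemma~\ref{lem:tutte}) in polynomial space. The one delicate point is that here the degree of $P$ and the field size are both of order $\Theta(nM)$, so a single evaluation yields only constant, not high, success probability; I expect this to be the main obstacle, and the remedy is precisely the constant number of repetitions above, which is legitimate exactly because the error is one-sided.
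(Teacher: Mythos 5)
Your proposal is correct and follows essentially the same route as the paper's own proof: both test $P\not\equiv 0$ (using $P=R$ and Lemma~\ref{lem:polynom-equiv}) by a random evaluation over $\field{2^t}$ via Corollary~\ref{cor:poly-P-eval}, bound the false-negative probability by $\deg P/2^t\le 1/2$ using Lemma~\ref{lem:zs}, and amplify with a constant number of independent repetitions. Your accounting of the degree bound, the field size, and the polynomial-space evaluation is a slightly more explicit rendering of the same argument.
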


\begin{proof}
The algorithm evaluates polynomial $P$ using Corollary~\ref{cor:poly-P-eval} substituting a random element of $\field{2^t}$ for each variable, and reports `yes' when the evaluation is nonzero and `no' otherwise.
If it reported 'yes', then $P$ was a non-zero polynomial and by Lemma~\ref{lem:polynom-equiv} the answer is correct.
Assume it reported 'no' for a yes-instance.
By Lemma~\ref{lem:polynom-equiv} $P$ is non-zero.
Since $\deg P = |I| \le nM$, by Lemmma~\ref{lem:zs} the probability that $P$ evaluated to $0$ is bounded by $\deg P / 2^t \le 1/2$ and we can make this probability arbitrarily small by repeating the whole algorithm a number of times, and reporting `yes' if at least one evaluation was nonzero.
The claim follows. 
\end{proof}

Theorem~\ref{thm:bnd-decision} follows immediately from Lemma~\ref{lem:red-to-matching} and Lemma~\ref{lem:matching-decision} applied to $B_G$.

\subsection{Finding the solution}

\begin{lemma}
	\label{lem:finding}
	There is a randomized algorithm which, given a yes-instance of 
	\probDecUFixDegConSub,
	always returns the corresponding solution $m$ in expected time $\Ohstar(2^n\poly(M))$. 
	The time can be made deterministic at the cost of introducing arbitrarily small probability of failure.
\end{lemma}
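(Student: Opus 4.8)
The plan is to recover a solution $m$ by the standard \emph{self-reduction} (edge-deletion) technique on top of the decision procedure of Lemma~\ref{lem:matching-decision}: I will find a connected perfect matching $M$ in $B_G$ and then take $m$ to be its contraction, which is a solution by Lemma~\ref{lem:red-to-matching}. Concretely, I maintain a subgraph $B'\subseteq B_G$, initialized to $B'=B_G$, and process the edges of $B_G$ in an arbitrary fixed order. For the current edge $e$, I invoke the decision algorithm of Lemma~\ref{lem:matching-decision} on $B'-e$; if it reports that $B'-e$ still contains a connected perfect matching, I delete $e$ permanently by setting $B':=B'-e$, and otherwise I keep $e$. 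After processing all edges I output $m$, the contraction of the final $B'$.

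First I would argue correctness under the idealized assumption that every oracle call answers correctly. Since $e$ is deleted only when $B'-e$ still has a connected perfect matching, $B'$ always contains such a matching, and at termination $B'$ is \emph{minimal}: deleting any single edge destroys every connected perfect matching. A minimal edge set containing a connected perfect matching must equal that matching: if the final $B'$ contained an edge $e$ lying outside some connected perfect matching $M\subseteq B'$, then $B'-e\supseteq M$ would still contain one, contradicting minimality. Hence the final $B'$ is exactly a connected perfect matching, each vertex of $O\cup I$ meets exactly one of its edges, and its contraction $m$ automatically satisfies $\indeg_{G_m}(v)=\In(v)$ and $\outdeg_{G_m}(v)=\Out(v)$ for every $v$, is a subgraph of $G$ by construction of $B_G$, and is connected because $B'$ itself is the connected matching.

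The main obstacle is that Lemma~\ref{lem:matching-decision} gives only a Monte Carlo oracle with \emph{one-sided} error. The key observation that makes the scheme safe is that a ``yes'' answer is always correct, so every deletion genuinely preserves the existence of a connected perfect matching, and $B'$ therefore retains one throughout. The only possible error is a false negative that makes me keep a removable edge, which at worst leaves the final $B'$ strictly larger than a perfect matching; crucially it can never destroy the solution. To control this I would amplify each call: there are $|E(B_G)|=\Oh(n^2M^2)$ calls, so boosting the per-call success probability to $1-\delta$ with $\delta=\Theta(1/(n^2M^2))$ costs only $\Oh(\log(nM))$ independent repetitions (reporting ``yes'' if any repetition does). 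By a union bound all calls are then correct with constant probability, in which case the final $B'$ is exactly a connected perfect matching. I then \emph{verify} $m$ in polynomial time—checking that $B'$ is a perfect matching (so the degree demands hold) and that $G^\downarrow_m$ is connected via a single traversal—and output $m$ on success, or restart otherwise. Each full pass succeeds with constant probability, so the expected number of passes is $\Oh(1)$, and one pass costs $|E(B_G)|\cdot\Oh(\log(nM))\cdot\Ohstar(2^n\poly(M))=\Ohstar(2^n\poly(M))$, giving the claimed expected running time. For the statement with deterministic running time, I instead amplify each call to failure probability at most $\epsilon/|E(B_G)|$, run a single pass without restarts, and output $m$ when verification succeeds and ``failure'' otherwise; a union bound bounds the total failure probability by $\epsilon$ while the running time is a fixed $\Ohstar(2^n\poly(M))$.
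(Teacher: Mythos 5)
Your proof is correct, but it takes a different route from the paper's. The paper treats the set of connected perfect matchings as a family of witnesses over the universe $U=E(B_G)$ and invokes, as a black box, the witness-extraction theorem of Bj\"orklund et al.\ (Theorem~\ref{thm:syphilis}), which extracts a witness of size $k$ using only $O(k\log|U|)$ expected queries to an inclusion oracle with one-sided error; Lemma~\ref{lem:matching-decision} serves as that oracle, and the deterministic-time variant is obtained by truncating via Markov's inequality. You instead carry out the naive edge-by-edge self-reduction from scratch: test each of the $|E(B_G)|=\Oh(n^2M^2)$ edges for removability, amplify each call so that a union bound controls all of them, and verify the output (perfect matching plus connectivity of the contraction) with a restart on failure. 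Your query count is larger by a polynomial factor, which is immaterial under the $\Ohstar(2^n\poly(M))$ bound, and in exchange your argument is self-contained and makes explicit the two points the paper leaves implicit behind the citation: that the absence of false positives guarantees every deletion is safe, and that a minimal subgraph containing a connected perfect matching \emph{is} that matching. Both key correctness observations you make are sound, and your handling of the expected-time versus deterministic-time trade-off matches what the lemma claims. One minor remark: your final verification step is what upgrades the procedure to a Las Vegas algorithm that ``always returns'' a correct $m$, so it is essential rather than cosmetic; it is good that you included it, since without it a false negative could silently yield a non-matching output.
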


In order to prove Lemma~\ref{lem:finding} we cast the problem in the setting of {\em inclusion oracles} from the work of Bj\"orklund et al.~\cite{syphilis}.
Consider a universe $U$ and an (unknown) family of {\em witnesses} $\mathcal{F}\subseteq 2^U$.
An {\em inclusion oracle} is a procedure which, given a query set $Y\subseteq U$, answers (either YES or NO) 
whether there exists at least one witness $W\in\mathcal{F}$ such that 
$W\subseteq Y$. Bj\"orklund et al. prove the following.

\begin{theorem}[\cite{syphilis}]
	\label{thm:syphilis}
	There exists an algorithm that extracts a witness of size $k$ in $\mathcal{F}$ 
	using in expectation at most 
	$O(k\log |U|)$ queries to a randomized inclusion oracle that has no 
	false positives but may output a false negative with probability at most 
	$p\leq \frac{1}4$. 
\end{theorem}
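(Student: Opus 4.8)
The plan is to reduce witness extraction to a sequence of \emph{shrinking} steps, each of which is certified by a reliable \emph{positive} answer, exploiting that the oracle has no false positives. Maintain a ``surviving'' set $R\subseteq U$, initialised to $R=U$, together with the invariant that $R$ genuinely contains a witness; the initial query $\textsf{orc}(U)$ returns YES by the promise. The only operation is to \emph{discard} a region $Z\subseteq R$: query the oracle on $R\setminus Z$, and if it answers YES actually set $R\leftarrow R\setminus Z$. Because YES is never a false positive, every discard preserves the invariant, so no needed witness is ever lost, and a NO answer (whether correct or a false negative) causes no commitment at all --- it merely means ``do not discard this region now.'' We use the natural promise that the sought witnesses have size exactly $k$ (as in all the applications here, where witnesses are perfect matchings of fixed cardinality); equivalently we extract a minimal witness and stop as soon as $|R|=k$. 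At that point $R$ contains a witness of size $\le k$ inside a set of size $k$, so $R$ itself is a witness, and this certificate rests only on the reliable YES answers used to reach it. This is the crucial structural point: correctness is \emph{decoupled} from the false negatives.

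For the query bound, first consider a perfect oracle. Fix once and for all a balanced binary recursion tree $\mathcal{T}$ over $U$: the root owns the region $U$, each internal node splits its region into two halves, and the leaves are singletons; $\mathcal{T}$ has depth $O(\log|U|)$. A single \emph{sweep} visits nodes of $\mathcal{T}$ top-down: at a node $v$ with region $\mathrm{reg}(v)$ we query $\textsf{orc}(R\setminus\mathrm{reg}(v))$; on YES we discard $\mathrm{reg}(v)$ (large progress, and we do not descend), and on NO we recurse into the two children. With a perfect oracle, NO occurs at $v$ exactly when $\mathrm{reg}(v)$ still contains a witness element, so the descended nodes are precisely those lying on a root-to-leaf path to one of the $\le k$ surviving witness elements. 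The union of $k$ such paths in a tree of depth $O(\log|U|)$ has $O(k\log|U|)$ nodes, and their children (the actually queried nodes) number $O(k\log|U|)$ as well; every non-witness element is discarded, either inside a pruned subtree or at its own leaf, so a single sweep already reduces $R$ to a size-$k$ witness using $O(k\log|U|)$ queries.

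For the noisy oracle I would argue that false negatives only inflate the query count, never the answer, and that the inflation is $O(1)$ in expectation per structural query. A false negative at a node $v$ whose region is truly discardable makes us descend into a subtree all of whose nodes are themselves truly discardable. Exploring such an ``error subtree'' is a branching process: each of its nodes independently triggers a further (erroneous) descent into its two children with probability at most $p\le\tfrac14$, i.e.\ with expected offspring at most $2p\le\tfrac12<1$. This process is subcritical, so the expected number of extra nodes explored per erroneous descent is $\sum_{i\ge0}(2p)^i=O(1)$. Likewise, each individual discard attempt on a genuinely superfluous region succeeds with probability at least $1-p\ge\tfrac34$, so it needs $O(1)$ attempts in expectation, and if a sweep ends with $|R|>k$ (some discardable elements were missed) we simply sweep again --- such an element always exists because $R\supseteq W$ with $|W|=k$ forces some $u\in R\setminus W$ with $\textsf{orc}(R\setminus\{u\})$ truly YES. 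Charging the $O(k\log|U|)$ genuine structural queries of the ideal sweep, each with an $O(1)$ expected retry-and-cascade overhead, gives the claimed $O(k\log|U|)$ expected total.

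The main obstacle is exactly this last amortisation: making rigorous that the expected overhead per structural query is a constant (the subcriticality estimate $2p\le\tfrac12$ and the geometric retry bound) \emph{and} that progress toward $|R|=k$ is guaranteed so the expected number of sweeps is $O(1)$ rather than unbounded. A secondary subtlety is the termination certificate under a one-sided-error oracle, which I resolve by the size-$k$ (minimal-witness) stopping rule above, so that the final output is validated purely by reliable YES answers and never depends on trusting a NO. Everything else --- the tree construction, the per-node query, and the path-counting $O(k\log|U|)$ estimate --- is routine once this robustness argument is in place.
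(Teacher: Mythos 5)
The paper does not prove this theorem; it is imported verbatim from Bj\"orklund et al.\ [the reference \emph{syphilis}], so your attempt can only be judged on its own merits. Your architecture is largely sound, and in fact close in spirit to the cited work: the invariant that $R$ contains a witness is maintained only through reliable YES answers; necessity of an element persists as $R$ shrinks (if $R\setminus\{u\}$ has no witness and $u\in R'\subseteq R$, then $R'\setminus\{u\}$ has none), which both validates the size-$k$ stopping rule and shows a perfect-oracle sweep ends with $R$ equal to a witness; the skeleton count is correct (a queried node gets a true NO only if its region meets the final witness, so at most $k$ true-NO nodes per level, $O(k\log|U|)$ queries in all); and inside an error cascade every query is truly YES (since $R\setminus\mathrm{reg}(c)\supseteq R\setminus\mathrm{reg}(v)$, also after partial discards), so the subcritical branching bound with offspring mean $2p\le\tfrac12$ is legitimate.

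The genuine gap is the amortisation you yourself flag as the main obstacle, and the one quantitative claim you make there is false: the expected number of sweeps is \emph{not} $O(1)$. Each superfluous element whose singleton leaf is queried survives the sweep with probability up to $p$ on a fresh coin; in an instance where $\Theta(k)$ superfluous singletons are paired with witness elements (e.g.\ $|U|=2k$ with the witness on alternate leaves, so every leaf is queried), the number of survivors is distributed like $\mathrm{Bin}(\Theta(k),p)$, hence a second sweep is needed with probability $1-(1-p)^{\Theta(k)}\to 1$, and in expectation $\Theta(\log k)$ sweeps are needed before $|R|=k$. Moreover, since a NO answer can never be certified, the $k$ genuinely necessary leaves must be re-queried in \emph{every} sweep, so ``$O(1)$ expected retry-and-cascade overhead per structural query'' does not hold: the $\Omega(k)$ skeleton cost is re-paid once per sweep. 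The claimed bound is in fact salvageable --- e.g., since the superfluous count $S$ satisfies $S\le|U|$ and $\mathbb{E}[S_{t+1}]\le p\,\mathbb{E}[S_t]$, the expected number of sweeps is $O(\log|U|)$, and re-sweeping over the shrunken $R$ costs $O\bigl(k+k\log(1+S/k)\bigr)$ per sweep, which telescopes to $O(k\log|U|)$ using $k\le|U|$; alternatively one can replace full re-sweeps by round-robin singleton queries $R\setminus\{u\}$ until $|R|=k$ --- but no such argument appears in your write-up, and by your own admission this step is asserted rather than proved. As it stands the proof is incomplete, with its stated justification (``$O(1)$ expected sweeps'') incorrect.
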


\begin{proof}[Proof of Lemma~\ref{lem:finding}]
	Let $U=E(B_G)$ and let $\Ff$ be the family of all connected perfect matchings in $B_G$. 
	Note that $|U|=O(n^2M^2)$ and witnesses in $\Ff$ have all size $|I|=O(nM)$.
	Then, Lemma~\ref{lem:matching-decision} provides a randomized inclusion oracle and we can apply Theorem~\ref{thm:syphilis}.
	(If one insists on deterministic, and not expected, running time, it suffices to chose a sufficiently large constant $r$ and stop the algorithm if it exceeds the expected running time at least $r$ times --- by Markov's inequality, this happens with probability at most $1/r$.)
\end{proof}

\subsection{Proof of Theorem~\ref{thm:bnd-tsp}}

In the lemma below we will adapt the construction from Section~\ref{sec:bnd-decision} to the weighted case in a standard way, by introducing a new variable tracking the weight.

\begin{lemma}
	\label{lem:bnd-tsp-small-demands}
	There is a randomized algorithm which solves an instance $I=(d,\In,\Out,w)$ of \probFixDegConSub in time $\Ohstar(2^nD\poly(M))$ and polynomial space, where $M=\max_{v}\max\{\In(v),\Out(v)\}$ and $D$ is the maximum integer value of $d$. 
	The algorithm returns a minimum weight solution with probability at least $p$, for any constant $p<1$.
\end{lemma}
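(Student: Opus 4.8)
The plan is to reduce \probFixDegConSub with small demands to finding a \emph{minimum weight} connected perfect matching in the bipartite graph $B_G$ from Section~\ref{sec:bnd-decision}, and to track the weight by a single additional formal variable $z$. Build $B_G$ from the finite‑cost edges only, assign to each edge $u^O_iv^I_j$ coming from $(u,v)\in E(G)$ the weight $d(u,v)$, and define the weight of a matching $M$ as $d(M)=\sum_{e\in M}d(e)$. By the contraction correspondence of Lemma~\ref{lem:red-to-matching}, a connected perfect matching $M$ of weight $d(M)$ corresponds exactly to a feasible solution $m$ of cost $d(m)=d(M)$, so minimizing $d(m)$ is the same as minimizing $d(M)$. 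To carry the weight algebraically, I would replace the entry $x_e$ of the Tutte matrix by $x_e\,z^{d(e)}$, so that the polynomials $P_X$ and $P=\sum_{X\ni v^*}P_XP_{V\setminus X}$ of Section~\ref{sec:bnd-decision} acquire a factor $z^{d(M)}$ on every monomial. The Cut and Count argument giving $P=R$ is unaffected: distinct matchings still carry distinct $x$‑monomials (hence never cancel even when they share a $z$‑degree), and the $2^{\cc(M)-1}$ counting identity is insensitive to the $z$ weights. Thus, over $\field{2^t}$,
\[
R(z)=\sum_{\substack{M\in\PM(B_G)\\ M\text{ connected}}} z^{d(M)}\prod_{e\in M}x_e,
\qquad [z^{j}]R=\sum_{\substack{M\text{ connected}\\ d(M)=j}}\ \prod_{e\in M}x_e .
\]

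Next I would recover the minimum weight $j^*$ by viewing $R$ as a univariate polynomial in $z$ after substituting uniformly random values for the $x_e$. Its $z$‑degree is at most $|I|\cdot D=O(nMD)$, so I fix $t=\Oh(\log(nMD))$ (large enough for both interpolation and the error bound of Lemma~\ref{lem:zs}), evaluate $P=R$ at $O(nMD)$ distinct points $z=a\in\field{2^t}$ using Corollary~\ref{cor:poly-P-eval} (each $a^{d(e)}$ costs only $\poly\log D$ extra), and interpolate to obtain all coefficients $[z^{j}]\hat R$. For $j<j^*$ the coefficient $[z^{j}]R$ is the zero polynomial, so its evaluation is always $0$; for $j=j^*$ it is a nonzero polynomial, so by Lemma~\ref{lem:zs} its random evaluation is nonzero with high probability. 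Hence the smallest $j$ with nonzero evaluated coefficient equals $j^*$ with one‑sided error, which I drive below any constant by repetition.

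Finally, to return an actual minimum weight solution I would plug this into the inclusion‑oracle framework of Theorem~\ref{thm:syphilis}, exactly as in Lemma~\ref{lem:finding}, taking $U=E(B_G)$ and letting $\Ff$ be the family of connected perfect matchings of weight \emph{exactly} $j^*$. The oracle on a query $Y\subseteq U$ restricts the construction to $B_G[Y]$ (edges in $Y$ only) and reports YES iff the coefficient of $z^{j^*}$ in the corresponding $R$ is nonzero; since that coefficient is a sum of distinct monomials, a nonzero evaluation certifies an actual connected perfect matching of weight $j^*$ inside $Y$, so the oracle has no false positives, and Lemma~\ref{lem:zs} bounds its false‑negative rate. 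Each oracle call costs one interpolation, i.e.\ $\Ohstar(2^nD\poly(M))$, and Theorem~\ref{thm:syphilis} uses $O(|I|\log|U|)=\Ohstar(\poly(M))$ calls, so the total running time is $\Ohstar(2^nD\poly(M))$; all evaluations are performed one at a time, keeping the space polynomial in $n,M,D$. Contracting the extracted matching yields the desired solution $m$.

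The main obstacle is bookkeeping the error and the field size simultaneously across the whole pipeline: the same $t$ must give enough distinct interpolation nodes in $\field{2^t}$ \emph{and} keep the Schwartz--Zippel error (with $x$‑degree $|I|=O(nM)$) small, while the minimum‑weight extraction must only ever accept on a genuinely nonzero coefficient (to preserve the no‑false‑positive property that Theorem~\ref{thm:syphilis} requires). Everything else is the routine adaptation of Section~\ref{sec:bnd-decision} and Lemma~\ref{lem:finding} to the weighted setting via the extra variable $z$.
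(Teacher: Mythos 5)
Your proposal is correct and follows essentially the same route as the paper's proof: introduce a weight-tracking variable ($y$ in the paper, $z$ in yours) into the Tutte-matrix entries, note that the Cut and Count identity $P=R$ survives, recover the coefficients by Lagrange interpolation from $\Oh(nMD)$ evaluations, report the smallest weight with a nonzero evaluated coefficient, and extract the witness via the inclusion-oracle framework of Lemma~\ref{lem:finding}. If anything, you are slightly more explicit than the paper about enlarging the field to $t=\Oh(\log(nMD))$ so that enough interpolation nodes exist, and about how the oracle for Theorem~\ref{thm:syphilis} is adapted to matchings of weight exactly $j^*$.
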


\begin{proof}
	Define $G=(V,E)$ where $E=\{(u,v)\in V^2\mid d(u,v)\in\Zq\}$.
	Let $R'$ be the polynomial obtained from $R$ by replacing every variable $x_e$ for $e=u^O_iv^I_j\in E(B_G)$ by the product $x_e\cdot y^{d(u,v)}$, where $y$ is a new variable.
	Proceed similarly with $P$, obtaining $P'$. 
	By Lemma~\ref{lem:polynom-equiv}, $P'=R'$.
	Decompose $R'$ as $R'=\sum_{i=0}^{|I|\cdot D}R'_iy^i$, where $R'_i$, for every $i=0,\ldots,|I|\cdot D$, is a polynomial in variables $\{x_e\}_{e\in E(B_G)}$. 
	The monomials in $R'_i$ enumerate all matchings $M$ such that the contraction $m$ of $M$ has weight $d(m)=i$. By the construction in the proof of Lemma~\ref{lem:red-to-matching} $R'_i$ is non-zero if and only if instance $I$ has a solution of weight $i$.
	Using Lagrange interpolation, we can recover the value of each $R'_i$ for random values of the variables $\{x_e\}_{e\in E(B_G)}$	(the values are the same for all the polynomials). 
	The interpolation algorithm requires $|I|\cdot D=\Oh(nMD)$ evaluations of $R'$.
	Since $R'=P'$, by Lemma~\ref{cor:poly-P-eval} each of them takes $2^n\poly(n+M))$ time.
	Our algorithm reports the minimum $w$ such that $R'_w$ evaluated to a non-zero element of $\field{2^t}$, or $+\infty$ if no such $w$ exists. 
	The solution of weight $w$ is then found using Lemma~\ref{lem:finding}.
	The event that the optimum value $w^*$ is not reported means that $R'_{w^*}$ is a non-zero polynomial that evaluated to 0 at the randomly chosen values.
	By Lemma~\ref{lem:zs} this happens with probability at most $\deg P / 2^t \le 1/2$, and one can make this probability arbitrarily small by standard methods. 
\end{proof}

%\todo[inline]{MW: I suppose this is ``truly'' polynomial space rather than space $O(D)$. Should there be a note or reference on this?}

Theorem~\ref{thm:bnd-tsp} follows now immediately by applying Theorem~\ref{thm:kernel} which reduces the general problem to the $M=\Oh(n^2)$ case and solving the resulting instance by Lemma~\ref{lem:bnd-tsp-small-demands}. Theorem~\ref{thm:bnd-tsp} says in particular that
if finite weights are bounded by a polynomial in $n$ then we can solve \probMVTSP in time $\Ohstar(2^n)$ and polynomial space by a randomized algorithm with no false positives and with false negatives with arbitrarily small constant probability. 

\section{The general case}
\label{sec:general}
In this section we prove Theorem~\ref{thm:expspace}, i.e., we show an algorithm solving \probMVTSP in time $\Ohstar(4^n)$.
In fact, we do not introduce a new algorithm, but we consider an algorithm by Berger et al.~(Algorithm 5 in~\cite{berger-soda}) and we provide a refined analysis, resulting in an improved running time bound $\Ohstar(4^n)$, which is tight up to a polynomial factor.

Let us recall the algorithm of Berger et al., in a slightly changed notation.
In fact, they solve a slightly more general problem, namely \probFixDegOutSub.
Let $I=(d,\In,\Out,r)$ be an instance of this problem, i.e., we want to find a solution $m:V^2\rightarrow\Zq$ that satisfies the degree constraints specified by $\In$ and $\Out$ and contains an outbranching rooted at $r$.
In what follows we assume $V=\{1,\ldots,n\}$ and $r=1$.

Consider an outbranching sequence $\{\delta_v\}_{v \in V}$ rooted at $r=1$.
In what follows, all outbranching sequences will be rooted at $1$, so we skip specifying the root.
Let $T_\delta$ be a minimum cost outbranching among all outbranchings with outdegree sequence $\delta$ and let $r_\delta$ be an optimum solution of \probFixDegSub for instance $(d,\In',\Out')$ where $\Out'=\Out-\outdeg_T$ and $\In'=\In-\indeg_T$.
Berger et al.\ note that then $m_\delta=m_{T_\delta}+r_\delta$ is a feasible solution for instance $I$ of \probFixDegOutSub, and moreover it has minimum cost among all solutions that contain an outbranching with outdegree sequence $\delta$.
Since $r_\delta$ can be found in polynomial time by Observation~\ref{obs:flows}, in order to solve instance $I$ it suffices to 
find outbranchings $T_\delta$ for all outbranching sequences $\delta$ and return the solution $m_\delta$ of minimum cost. Hence, Theorem~\ref{thm:expspace} boils down to proving the following lemma.

\begin{lemma}
	\label{lem:4^n}
	There is an algorithm which, for every outbranching sequence $\delta$, finds  a minimum cost outbranching among all outbranchings with outdegree sequence $\delta$ and runs in time $\Ohstar(4^n)$.
\end{lemma}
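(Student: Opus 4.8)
The plan is to compute all the desired minimum-cost outbranchings by a single dynamic program over vertex subsets that mirrors the recursive structure of an out-tree. For a vertex $v$, a subset $S$ with $v\in S$, and an out-tree sequence $\sigma$ on $S$ rooted at $v$ (so $\sigma_v\ge 1$ and $\sum_{w\in S}\sigma_w=|S|-1$, as characterized by Lemma~\ref{lem:out-sequence}), let $A[v,S,\sigma]$ be the minimum cost of an out-tree spanning $S$, rooted at $v$, whose outdegree sequence equals $\sigma$. The quantities we want are precisely the entries $A[1,V,\delta]$ ranging over all outbranching sequences $\delta$, so it suffices to fill the whole table and read these off; this is exactly the collection of values Berger et al.'s algorithm produces, and our task is to bound the cost of producing it.

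To set up the recurrence I would peel off the child-subtrees of $v$ in a canonical order, which prevents generating the same out-tree twice. Let $u$ be the lowest-indexed vertex of $S\setminus\{v\}$; in any out-tree counted by $A[v,S,\sigma]$ the vertex $u$ lies in a unique child-subtree of $v$. Guessing that subtree as a pair $(S_1,c_1)$ with $u\in S_1\subseteq S\setminus\{v\}$ and child $c_1\in S_1$, and letting $S_2=(S\setminus\{v\})\setminus S_1$ carry the remaining children of $v$, gives
\[
A[v,S,\sigma]=\min_{(S_1,c_1)}\Big(d(v,c_1)+A[c_1,S_1,\sigma|_{S_1}]+A[v,\{v\}\cup S_2,\sigma']\Big),
\]
where $\sigma'$ agrees with $\sigma$ on $S_2$ and sets the root degree to $\sigma_v-1$. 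A convenient feature is that the two pieces have \emph{disjoint} degree supports, so once $S_1$ is fixed, $\sigma$ determines both $\sigma|_{S_1}$ and $\sigma'$; there is no genuine convolution in the degree coordinate, only a restriction. Correctness follows by induction on $|S|$: the marked vertex $u$ makes the split unique, the root-degree bookkeeping is handled by $\sigma'$, and Lemma~\ref{lem:out-sequence} ensures exactly the realizable sequences are enumerated. (The flow routine of Observation~\ref{obs:flows} is not needed here; it only enters later, to complete an outbranching into a full solution.)

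The main obstacle is the running-time bound, and this is where the refined analysis must do all the work. Counting states crudely, the number of triples $(v,S,\sigma)$ is $\sum_{S}|S|\cdot\#\{\text{out-tree sequences on }S\}=\Theta^*(5^n)$, since a set of size $k$ admits $\approx 4^k/\mathrm{poly}(k)$ out-tree sequences and $\sum_k\binom{n}{k}4^k=5^n$; this is precisely the bound one reads off the straightforward implementation. Reaching $\Ohstar(4^n)$ therefore cannot come from a new recurrence but from a sharper charging argument. The key leverage I would try to exploit is that a \emph{complete} sub-out-tree on $S$ must spend exactly $|S|-1$ units of outdegree, so the degree sequences that can co-occur across the peeled pieces are far more constrained than the naive product of per-subset sequence counts suggests; the goal is to amortize so that the degree-sequence blow-up is never paid at the same time as the full subset enumeration. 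I expect the heart of the proof to be establishing that the total number of table cells actually generated while forming all top-level entries $A[1,V,\delta]$ is $\Ohstar(4^n)$, and that this counting is tight up to a polynomial factor — matching the lower-order intuition that the TSP-like sequence (all ones and a single zero) already forces $\Omega^*(4^n)$ work.
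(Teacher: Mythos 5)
Your recurrence (split off the child-subtree of the root containing the lowest-indexed vertex $u$ of $S\setminus\{v\}$) is a correct way to compute all the values $A[1,V,\delta]$ --- the marked vertex makes the decomposition canonical, so the induction goes through. But the entire content of the lemma is the running-time bound, and there you stop exactly where the proof has to begin: you observe that the naive state count is $\Theta^*(5^n)$ and then assert that ``a sharper charging argument'' should bring the number of cells actually touched down to $\Ohstar(4^n)$, without giving one. Worse, for the recurrence you chose this hoped-for bound is false. Already the first level of recursion below the top-level calls $A[1,V,\delta]$ produces the states $(c_1,S_1,\sigma|_{S_1})$ for \emph{every} subset $S_1\ni u$ and \emph{every} out-tree sequence on $S_1$ (each such pair is the restriction of some valid outbranching sequence $\delta$ on $V$), and the number of such pairs is $\sum_k {n-1 \choose k-1}{2k-2 \choose k-1}=\Theta^*(5^n)$. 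Since the set and the degree sequence are chosen essentially independently in your decomposition, no amortization can repair this; the improvement must come from a different recurrence, not from a better count of the same one.

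The paper's proof uses precisely such a different recurrence: from a state $(S,\delta)$ it removes $v_{\mathsf{first}}=\min\{v\in S\mid \delta_v=0\}$, the \emph{minimum-indexed} degree-zero vertex, and branches on its parent $w$, decrementing $\delta_w$. This canonical leaf-elimination order yields an invariant (Lemma~\ref{lem:only-reachable}): in every state reached, at most one vertex $v\in S$ with $\delta_v=0$ has index smaller than the largest index already removed. Consequently (Lemma~\ref{lem:few-reachable}), a reachable state $(S,\delta)$ is determined, up to $\Oh(n)$ choices, by the length-$n$ ``extended sequence'' obtained by padding $\delta$ with zeros on $V\setminus S$: the removed vertices must all lie among the first $\bar s+1$ zeros of that padded sequence, where $\bar s=|V\setminus S|$. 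There are ${2n-1\choose n}=\Oh(4^n)$ extended sequences, hence $\Ohstar(4^n)$ reachable states, which is the bound you were missing. (As a side remark, the $\Omega^*(4^n)$ ``tightness'' you invoke via the TSP degree sequence concerns the polynomial-space outbranching routine of Lemma~\ref{lem:berger:outbranching}; it is not a lower bound certifying that $4^n$ states are necessary here.)
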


\newcommand{\procBestOut}{\ProblemName{BestOutbranching}}
\newcommand{\procMathBestOut}{\textsc{BestOutbranching}}
\newcommand{\first}{\textsf{first}}
\newcommand{\mincost}{\textsf{minCost}}
\newcommand{\best}{\textsf{best}}
\newcommand{\lastrem}{\textsf{lastRmvd}}
\newcommand{\bad}{\textsf{bad}}

We prove Lemma~\ref{lem:4^n} by using dynamic programming (DP). 
However, it will be convenient to present the DP as a recursive function \procBestOut with two parameters, $S\subseteq V$ and $\{\delta_v\}_{v\in S}$ (see Algorithm~\ref{alg:general}). 
It is assumed that $1\in S$.
We will show that $\procMathBestOut(S,\delta)$ returns a minimum cost out-tree among all out-trees with outdegree sequence $\delta$ that are rooted at $1$ and span $S$.
Our algorithm runs \procBestOut for $S=V$ and all outbranching sequences $\delta:V\rightarrow\Zq$.
Whenever \procBestOut returns a solution for an input $(S,\delta)$, it is memoized (say, in an efficient dictionary), so that when \procBestOut is called with parameters $(S,\delta)$ again, the output can be retrieved in polynomial time.

\begin{algorithm}
  \begin{algorithmic}
    
    \caption{} \label{alg:general}
    \Function {BestOutbranching} {$S, \delta$}
      \State $v_\first \gets \min\{v\in S\mid\delta_v = 0\}$
      \If {$|S| = 2$}
        \Return $\{(1,v_\first)\}$.
      \Else
        \State $\mincost \gets \infty$
        \For {$w \in S$}
          \If {$(\delta_w \ge 1 \wedge w \neq 1) \vee (\delta_w \ge 2 \wedge w = 1)$}
            \State $S' \gets S\setminus\{v_\first\}$
            \State $\delta' \gets \delta|_{S'}$
            \State $\delta'_w \gets \delta'_w - 1$ 
            \State $R_w \gets \procMathBestOut(S', \delta') \cup \{(w,v_\first)\}$
            \If {$d(R_w) < \mincost$}
              \State $\mincost \gets d(R_w)$
              \State $\best \gets R_w$
            \EndIf
          \EndIf
        \EndFor
        \Return $\best$ 
      \EndIf
    \EndFunction
  \end{algorithmic}
\end{algorithm}

Let us define $\lastrem(S) := \max (\{0, 1, 2, \ldots, n\} \setminus S)$
and $\bad(S, \delta) := \{v \in S \mid v < \lastrem(S) \wedge \delta_v = 0\}$.
Let us call $(S, \delta)$ a {\em reachable state} if it meets the following conditions:
\begin{enumerate}[$(i)$]
  \item $\delta_1 \ge 1$
  \item $\sum_{v\in S} \delta_v = |S| - 1$
%  \item $\forall_{v \in \{1, 2, \ldots, n\} \setminus S} \delta_v = 0$
  \item $|\bad(S,\delta)| \le 1$
\end{enumerate}

\begin{lemma}
  \label{lem:only-reachable}
  If function \procBestOut is given a reachable state as input then
  all recursively called \procBestOut will also be given only reachable states.
\end{lemma}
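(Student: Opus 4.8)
The plan is to verify directly that, whenever \procBestOut is handed a reachable state $(S,\delta)$ with $|S|>2$ (if $|S|\le 2$ there is no recursive call), every recursive call issued inside the \textbf{for} loop of Algorithm~\ref{alg:general} again receives a state satisfying $(i)$–$(iii)$. So I would fix a loop vertex $w$ passing the guard, hence with $\delta_w\ge 1$, and write $(S',\delta')$ for the argument of the corresponding recursive call, i.e.\ $S'=S\setminus\{v_\first\}$ and $\delta'=\delta|_{S'}$ except that $\delta'_w=\delta_w-1$. Before checking the three conditions I would record three reusable facts. By condition $(ii)$, $\sum_{v\in S}\delta_v=|S|-1<|S|$, so some vertex of $S$ has $\delta$-value $0$ and $v_\first$ is well defined. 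Since the guard forces $\delta_w\ge 1>0=\delta_{v_\first}$, we get $w\ne v_\first$, so $w\in S'$ and the decrement is meaningful. Finally, deleting $v_\first$ adds exactly $v_\first$ to the complement of $S$, so $\{0,\dots,n\}\setminus S'=(\{0,\dots,n\}\setminus S)\cup\{v_\first\}$, whence
\[
\lastrem(S')=\max\{\lastrem(S),\,v_\first\},
\]
and moreover $v_\first\ne\lastrem(S)$, because $v_\first\in S$ while $\lastrem(S)\notin S$.

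Conditions $(i)$ and $(ii)$ I would dispatch immediately. For $(i)$: if $w\ne 1$ then $\delta'_1=\delta_1\ge 1$; if $w=1$ the guard gives $\delta_1\ge 2$, so $\delta'_1=\delta_1-1\ge 1$. For $(ii)$, using $\delta_{v_\first}=0$ and the single decrement at $w$,
\[
\sum_{v\in S'}\delta'_v=\Big(\sum_{v\in S}\delta_v\Big)-\delta_{v_\first}-1=(|S|-1)-1=|S'|-1.
\]

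The heart of the argument is condition $(iii)$. Let $Z=\{v\in S\mid\delta_v=0\}$ and $Z'=\{v\in S'\mid\delta'_v=0\}$ be the zero-outdegree sets before and after, so $v_\first=\min Z$. Since $\delta'$ agrees with $\delta$ on $S'$ except for lowering the positive value at $w$, passing to $S'$ can create at most the single new zero vertex $w$ (and only when $\delta_w=1$, which forces $w\ne 1$); thus $Z'\subseteq(Z\setminus\{v_\first\})\cup\{w\}$, and every element of $Z\setminus\{v_\first\}$ strictly exceeds $v_\first$. I would then split on the position of $v_\first$ relative to $\lastrem(S)$, the two cases being exhaustive since $v_\first\ne\lastrem(S)$. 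If $v_\first>\lastrem(S)$, then all of $Z$ lies above $\lastrem(S)$, so $\bad(S,\delta)=\emptyset$; here $\lastrem(S')=v_\first$, and the old zeros in $Z\setminus\{v_\first\}$ all exceed $v_\first=\lastrem(S')$ and so cannot be bad, leaving only $w$ as a candidate. If instead $v_\first<\lastrem(S)$, then $v_\first\in\bad(S,\delta)$, so condition $(iii)$ for $(S,\delta)$ forces $\bad(S,\delta)=\{v_\first\}$; here $\lastrem(S')=\lastrem(S)$, and any $v\in Z\setminus\{v_\first\}$ with $v<\lastrem(S')$ would lie in $\bad(S,\delta)\setminus\{v_\first\}=\emptyset$, so again only $w$ survives. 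In both cases $\bad(S',\delta')\subseteq\{w\}$, hence $|\bad(S',\delta')|\le 1$, completing the verification.

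The only delicate step is condition $(iii)$: one must track exactly how $\lastrem$ moves when $v_\first$ is deleted and exploit that $v_\first$ is the \emph{minimum} zero-outdegree vertex, which is what rules out every old zero and leaves only the freshly created zero $w$. The invariant $|\bad|\le 1$ is tailored precisely for this: the one possibly-bad vertex $v_\first$ is removed in the step, and the at most one new zero $w$ keeps the budget at one.
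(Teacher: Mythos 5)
Your proposal is correct and follows essentially the same route as the paper's proof: conditions $(i)$ and $(ii)$ are dispatched identically, and for $(iii)$ your case split on whether $v_\first$ lies above or below $\lastrem(S)$ is exactly the paper's split on $|\bad(S,\delta)|\in\{0,1\}$, with the same use of the minimality of $v_\first$ and the same observation that the decrement at $w$ contributes at most one new zero. The only cosmetic difference is that the paper first shows $\bad(S',\delta)=\emptyset$ and then accounts for $w$, whereas you bound $\bad(S',\delta')$ directly via the set $Z'$.
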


\begin{proof}
  Let us fix a reachable state $(S,\delta)$ for $|S|>2$ and consider the associated value $v_\first$ from the algorithm.
  Denote $S' = S\setminus\{v_\first\}$.
  Clearly, it suffices to show that all pairs $(S',\delta')$ created in the {\bf for} loop are reachable states.
  First, let us argue that $\bad(S', \delta)=\emptyset$.
  There are two cases:
  \begin{itemize}
    \item Assume $|\bad(S,\delta)| = 0$.
    In this case $v_\first > \lastrem(S)$ so $\lastrem(S') = v_\first$.
    Then,  $\bad(S', \delta)= \{v \in S' \mid v < \lastrem(S') \wedge \delta_v = 0\}=\{v \in S \mid v < v_\first \wedge \delta_v = 0\}=\emptyset$.  
    \item Assume $|\bad(S,\delta)| = 1$.
    Then, (1) $\lastrem(S') = \lastrem(S)$ because $\lastrem(S) > v_\first$ and (2) $\bad(S,\delta) = \{v_\first\}$.
    It follows that $\bad(S', \delta)\stackrel{(1)}{=}\{v \in S' \mid v < \lastrem(S) \wedge \delta_v = 0\} = \bad(S, \delta)\setminus\{v_\first\}\stackrel{(2)}{=}\emptyset$.
  \end{itemize}  
  Let us consider the recursive call of \procBestOut for a particular $w$.
  Sequence $\delta'|_{S'}$ differs from $\delta$ only at $w$, so
  $\bad(S',\delta') \subseteq \{w\} \cup \bad(S',\delta) = \{w\}$.
  This means that the condition $(iii)$ from the definition of reachable state holds for $(S',\delta')$.
  Since $(S,\delta)$ is reachable, $\delta_1\ge 1$.
  Then either $w\ne 1$ and $\delta'_1=\delta_1\ge 1$ or $w=1$ and $\delta'_1=\delta_1-1 \ge 1$, where the last inequality holds thanks to the condition in the {\bf if} statement in Algorithm~\ref{alg:general}.
  In both cases, $(i)$ holds for $(S',\delta')$.
  Finally, $(ii)$ is immediate by the definition of $\delta'$.
  It follows that $(S', \delta')$ is a reachable state, as required.
\end{proof}

\begin{lemma}
  \label{lem:correctness}
  If function \procBestOut is given a reachable state $(S,\delta)$, it returns a cheapest out-tree $T$ rooted at vertex $1$, spanning $S$ and with outdegree sequence $\delta$.
\end{lemma}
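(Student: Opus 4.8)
The plan is to prove Lemma~\ref{lem:correctness} by induction on $|S|$, mirroring the recursive structure of Algorithm~\ref{alg:general}. The base case is $|S|=2$: here the only out-tree rooted at $1$ spanning a two-element set $\{1, v_\first\}$ is the single edge $(1, v_\first)$, and one checks that the reachability conditions force $\delta_{v_\first}=0$ and $\delta_1=1$, so the returned value $\{(1,v_\first)\}$ is indeed the unique (hence cheapest) such out-tree. For the inductive step, fix a reachable state $(S,\delta)$ with $|S|>2$ and assume the lemma holds for all smaller reachable states; by Lemma~\ref{lem:only-reachable} every recursive call is on a reachable state, so the inductive hypothesis applies to each $\procMathBestOut(S',\delta')$.

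The core of the argument is a bijective correspondence between out-trees counted by $(S,\delta)$ and those counted by the recursive subproblems. First I would establish that in \emph{every} out-tree $T$ rooted at $1$ spanning $S$ with outdegree sequence $\delta$, the vertex $v_\first$ is necessarily a leaf: since $\delta_{v_\first}=0$ by the choice $v_\first=\min\{v\in S\mid \delta_v=0\}$, its outdegree is zero, so it has no children. Moreover $v_\first\ne 1$ because $\delta_1\ge 1$ by condition~$(i)$. Hence $v_\first$ has a unique parent $w\in S$, and removing $v_\first$ together with the edge $(w,v_\first)$ yields an out-tree $T'$ rooted at $1$ spanning $S'=S\setminus\{v_\first\}$ whose outdegree sequence is exactly $\delta|_{S'}$ with the entry at $w$ decreased by one---that is, precisely the $\delta'$ constructed in the loop for that $w$. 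Conversely, any out-tree $T'$ with sequence $\delta'$ on $S'$ extends uniquely to an out-tree on $S$ with sequence $\delta$ by attaching $v_\first$ as a child of $w$. This gives a cost-preserving bijection (since $d(T)=d(T')+d(w,v_\first)$) between out-trees for $(S,\delta)$ with $v_\first$'s parent equal to $w$ and out-trees for $(S',\delta')$.

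The next step is to verify that the \textbf{for} loop ranges over exactly the admissible parents $w$. A vertex $w$ can be the parent of $v_\first$ only if, after removing one unit of its outdegree, the residual sequence $\delta'$ is still a valid out-tree sequence on $S'$; the guard $(\delta_w\ge 1 \wedge w\ne 1)\vee(\delta_w\ge 2\wedge w=1)$ is precisely the condition ensuring $\delta'_w\ge 0$ together with $\delta'_1\ge 1$ (condition~$(i)$ on the child subproblem). Thus the candidates $R_w=\procMathBestOut(S',\delta')\cup\{(w,v_\first)\}$, which by the inductive hypothesis are the cheapest out-trees in their respective classes, range over all out-trees for $(S,\delta)$ partitioned by the parent of $v_\first$. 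Taking the minimum over $w$ therefore yields a globally cheapest out-tree, which is what the algorithm returns in \textsf{best}.

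The main obstacle I anticipate is the bookkeeping around condition~$(i)$ in the boundary case $w=1$: one must confirm that the \textbf{if}-guard correctly excludes the root from becoming its own reduced case when $\delta_1=1$ (so that the child state still satisfies $\delta'_1\ge 1$), and that no valid parent is inadvertently skipped. A secondary subtlety is ensuring the bijection is genuinely exhaustive---that every out-tree for $(S,\delta)$ arises from \emph{some} loop iteration, i.e., that the parent of $v_\first$ always satisfies the guard. This follows because if $w$ is the parent of $v_\first$ in a valid $T$, then $\delta_w=\outdeg_T(w)\ge 1$, and if additionally $w=1$ then $1$ has $v_\first$ as a child plus at least the rest of the tree below it forcing $\delta_1\ge 2$ whenever $|S|>2$; I would spell out this last inequality carefully, as it is the one place where the structural constraint $|S|>2$ is used.
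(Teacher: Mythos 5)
Your proof is correct and follows essentially the same route as the paper's: induction on $|S|$, observing that $v_\first$ is a leaf of every candidate out-tree, and reducing via its parent $w$ to the subproblem $(S',\delta')$, with Lemma~\ref{lem:only-reachable} licensing the inductive hypothesis. You are somewhat more explicit than the paper in verifying that the parent $w^*$ always passes the \textbf{if}-guard (in particular that $w^*=1$ forces $\delta_1\ge 2$ when $|S|>2$), which is a worthwhile detail; the only small point you leave implicit is that a $\delta$-out-tree spanning $S$ exists at all, which the paper gets from conditions $(i)$, $(ii)$ and Lemma~\ref{lem:out-sequence}.
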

\begin{proof}
  We will use induction on $|S|$.
  
  In the base case $|S| = 2$, there is only one outbranching spanning $S$ rooted at $1$, namely $\{(1, v_\first)\}$ and it is indeed returned by the algorithm.

  In the inductive step assume $|S| > 2$.
  By conditions $(i)$ and $(ii)$ in the definition of a reachable state and Lemma~\ref{lem:out-sequence}, there is at least one out-tree rooted at $1$, spanning $S$, and with outdegree sequence $\delta$.
  Let $T$ be a cheapest among all such out-trees. 
  Vertex $v_\first$ is a leaf of $T$, since $\delta_{v_\first}=0$.
  At some point $w$ in the {\bf for} loop in Algorithm~\ref{alg:general} is equal to the parent $w^*$ of $v_\first$ in $T$.
  Then, $T \setminus \{(w^*, v_\first)\}$ is an out-tree rooted at $1$, spanning $S'$, and with outdegree sequence $\delta'$.
  Since $(S',\delta')$ is a reachable state by Lemma~\ref{lem:only-reachable}, by the inductive hypothesis we know that a cheapest such out-tree $T'$ will be returned by $\procMathBestOut(S', \delta')$. 
  In particular, it means that $d(T')\le d(T\setminus\{(w^*,v_\first)\})$.
  Denote $R_{w^*}=T' \cup \{(w^*,v_\first)\}$.
  Then, $d(R_{w^*})=d(T')+d(w^*,v_\first)\le d(T\setminus\{(w^*,v_\first)\})+d(w^*,v_\first)=d(T)$.
  It follows that \procBestOut returns a set of edges $\best$ of cost at most $d(T)$.
  However $\best=R_w$ for a vertex $w$ and by applying the induction hypothesis it is easy to see that $R_w$ is an out-tree rooted at $1$, spanning $S$ with outdegree sequence $\delta$. The claim follows.
\end{proof}

\newcommand{\ddelta}{\bar{\delta}}
\newcommand{\bs}{\bar{s}}

\begin{lemma}
  \label{lem:few-reachable}
  There are $\Ohstar(4^n)$ reachable states.
\end{lemma}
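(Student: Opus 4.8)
The plan is to count reachable states $(S,\delta)$ directly by summing over the choice of $S$ and then, for each fixed $S$, counting the number of admissible sequences $\delta$. The three conditions that define reachability are: $(i)$ $\delta_1\ge 1$; $(ii)$ $\sum_{v\in S}\delta_v=|S|-1$; and $(iii)$ $|\bad(S,\delta)|\le 1$. The key observation that makes the count small is condition $(iii)$: it forces almost all vertices below $\lastrem(S)$ to have strictly positive $\delta$-value, which drastically restricts how the "mass" $|S|-1$ can be distributed. So the strategy is to turn the counting of $\delta$'s into a balls-into-boxes (compositions) count and show the total is $\Ohstar(4^n)$.

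**First** I would fix $S$ with $|S|=s$ and let $\lastrem(S)=L$. Let $a$ be the number of vertices $v\in S$ with $v<L$ and $b=s-a$ the number with $v>L$ (recall $L\notin S$ by definition of $\lastrem$, and $L\ge 1$ unless $S=\{1,\dots,n\}$). Condition $(iii)$ says at most one of the $a$ "low" vertices may have $\delta_v=0$; all other low vertices must satisfy $\delta_v\ge 1$. The $b$ "high" vertices are unconstrained by $(iii)$ and may take any nonnegative value. Splitting on whether the one allowed low zero is used, the number of sequences $\delta\ge 0$ on $S$ with $\sum_{v\in S}\delta_v=s-1$ and at most one low zero is a sum of compositions: for each choice of which (if any) low vertex is allowed to vanish, the remaining low vertices contribute at least $1$ each, which I subtract off, and then I distribute the residual total freely among the low and high coordinates. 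Each such count is a binomial coefficient of the form $\binom{(s-1)-(\text{forced ones})+(\text{free boxes})-1}{\text{free boxes}-1}$, and condition $(i)$ (forcing $\delta_1\ge1$) contributes only an additional polynomial/constant factor that I can absorb.

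**The heart** of the argument is to show that, after summing the binomial estimate over all $\binom{n}{?}$ choices of $S$, the total collapses to $4^n$ up to polynomial factors. The clean way to do this is to combine the choice of $S$ and the choice of $\delta$ into a single combinatorial object and recognize the generating-function / Vandermonde-type identity behind it. Concretely, I expect the dominant contribution to be governed by the term where the number of free boxes (essentially the high vertices plus $O(1)$ slack from the single permitted low zero) interacts with the distribution of $s-1$ units, and I would bound the relevant sum of products of binomials by identities such as $\sum_i \binom{i}{k}\binom{n-i}{\ell}=\binom{n+1}{k+\ell+1}$ or by a direct two-variable generating function whose coefficients are $\Oh(4^n)$. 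Intuitively, each vertex contributes a constant-size choice (in/out of $S$; high/low; a small discrete choice of how much $\delta$-mass beyond the forced minimum), and three or four independent binary-ish choices per vertex give the base $4$.

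**The main obstacle** will be handling the "free" high vertices cleanly: since high vertices carry unbounded $\delta$-values, a naive count seems to blow up, and the finiteness only comes from the global budget $\sum\delta_v=s-1$. So the crux is to show that the compositions of a total of size $\le n$ into the $b$ high boxes (plus slack) are controlled by a single binomial $\binom{s-1+b}{b}$-type term and that summing $\binom{a-1}{\cdot}\binom{s-1+b}{b}$ over $a+b=s$ and over $S$ stays within $4^n$. I would most likely verify the final bound by fixing the partition of $V$ into the sets "below $L$," "$\{L\}$," and "above $L$" — that is a $3^n$-type choice — and then argue the remaining freedom in $\delta$ contributes only a further constant factor (bringing the base from $3$ to $4$) via the single-zero slack and the budget constraint, rather than by brute-force binomial manipulation.
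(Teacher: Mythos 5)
Your overall plan---fix $S$, use condition $(iii)$ to reduce the count of admissible $\delta$ to a compositions count, then sum over $S$---is viable in principle, and the setup (at most one ``low'' vertex below $\lastrem(S)$ may have $\delta_v=0$, the ``high'' vertices above $\lastrem(S)$ are unconstrained) is correct. But the lemma's entire content lives in the step you defer: showing that $\sum_S (\text{number of admissible }\delta\text{ on }S)$ is $\Ohstar(4^n)$. For fixed $S$ with $|S|=s$, $a$ low vertices and $b$ high vertices, your compositions count is roughly $\binom{2s-a-2}{s-1}$ (up to an $\Oh(n)$ factor for the permitted low zero), and the number of choices of $S$ with those parameters is $\binom{n-b-1}{a}$; one must then verify that $\max_{a,b}\binom{n-b-1}{a}\binom{a+2b-2}{a+b-1}\le \Ohstar(4^n)$. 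This is true, but it is a genuine two-variable entropy optimization whose maximum is attained exactly at $a=0$, $b=n$ (i.e.\ $S=V$), and you neither carry it out nor identify a clean identity that replaces it. Your proposed shortcut---a ``$3^n$-type choice'' of partition times ``a further constant factor per vertex'' for $\delta$---does not hold up as stated: for $S=V$ condition $(iii)$ is vacuous (since $\lastrem(V)=0$), and already the single set $S=V$ admits $\binom{2n-2}{n-1}=\Theta^*(4^n)$ sequences $\delta$, so the $4^n$ bound cannot be obtained as (number of sets) $\times$ (per-vertex constant); it must come from a trade-off in which a large count of $\delta$'s forces a small count of compatible $S$'s. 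That trade-off is the missing idea.

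The paper resolves exactly this tension with a different and much lighter encoding: it maps each reachable state $(S,\delta)$ to the \emph{extended sequence} $\bar\delta$ on all of $V$ obtained by padding $\delta$ with zeros outside $S$. There are $\binom{2n-1}{n}<4^n$ sequences of $n$ nonnegative integers summing to at most $n-1$ (stars and bars), and the key point is that each extended sequence has only $\Oh(n)$ preimages: writing $\bar s=|V\setminus S|$ and listing the zero-coordinates of $\bar\delta$ as $v_1<v_2<\cdots<v_k$, condition $(iii)$ forces $V\setminus S\subseteq\{v_1,\ldots,v_{\bar s+1}\}$, leaving at most $\bar s+1=\Oh(n)$ choices of $S$. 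This replaces your summation/optimization by a direct count of images times a bound on the fibers, which is where the real work of the lemma is done and what your proposal is missing.
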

\begin{proof}
  Any sequence of $n$ nonnegative integers that sums up to at most $n-1$ will be called an {\em extended sequence}.
  It is well known that there are exactly ${{2n - 1} \choose {n}} < 2^{2n - 1} = \Oh(4^n)$ such sequences.
To see this consider sequences of $n-1$ balls and $n$ barriers and bijectively map them to the sequences of $n$ numbers by counting balls between barriers and discarding the balls after the last barrier.
	
  Let us fix an extended sequence $\ddelta=\{\ddelta_v\}_{v \in V}$, and
  denote $\bs := n - (1 + \sum_{i = 1}^n \ddelta_i)$.
  We claim that there are only $\Oh(n)$ reachable states $(S,\delta)$ such that $\ddelta|_S=\delta$ and $\ddelta|_{V\setminus S}=0$.
  Consider any such pair $(S,\delta)$.
  Let $(v_1, v_2, \ldots, v_k)$ be the vertices of $\{v\in V\mid \ddelta_v=0\}$ sorted in increasing order.
  By the definition of a reachable state we know that $|S|=1+\sum_{i = 1}^n \ddelta_i$, so $\bs = |\{1, 2, \ldots, n\} \setminus S|$.
  By $(ii)$, for at least one vertex $v\in S$ we have $\ddelta_v=\delta_v=0$, so $k\ge\bs+1$.
  Let us assume that $k\ge \bs+2$ and $\lastrem(S) \ge v_{\bs+2}$. 
  Then, $\{v_1, v_2, \ldots, v_{\bs+1}\} \cap S \subseteq \bad(S,\delta)$.
  Since $v_{\bs+2} \le \lastrem(S) \not\in S$, at most $\bs-1$ elements from $\{v_1, v_2, \ldots, v_{\bs+1}\}$ are outside $S$, so $|\bad(S,\delta)| \ge (\bs+1) - (\bs-1) = 2$.
  This is a contradiction with $(S,\delta)$ being a reachable state, which proves that $k\le \bs+1$ or $\lastrem(S) < v_{\bs+2}$.
  In any case, $\{1, 2, \ldots, n\} \setminus S \subseteq \{v_1, \ldots, v_{\bs+1}\}$.
  There are $\bs+1 = \Oh(n)$ ways to choose $\bs$ elements to the set $\{1, 2, \ldots, n\} \setminus S$ from $\{v_1, \ldots, v_{\bs+1}\}$, so equivalently there are $\Oh(n)$ sets $S$ such that $(S,\delta)$ is a reachable state, $\ddelta|_S=\delta$ and $\ddelta|_{V\setminus S}=0$.
  
  Every reachable state $(S,\delta)$ has the corresponding extended sequence $\{\ddelta\}_{v\in V}$ defined by $\ddelta|_S=\delta$ and $\ddelta|_{V\setminus S}=0$.
  Since there are $\Oh(4^n)$ extended sequences, and each of them has $O(n)$ corresponding reachable states there are $\Oh(4^n) \cdot \Oh(n) = \Ohstar(4^n)$ reachable states in total.
\end{proof}

We are ready to prove Lemma~\ref{lem:4^n}.
Recall that our algorithm runs $\procMathBestOut(V,\delta)$ for all outbranching sequences $\delta$ and uses memoization to avoid repeated computation.
We claim that for any outbranching sequence $\delta$, the pair $(V,\delta)$ is a reachable state .
Indeed, conditions $(i)$ and $(ii)$ hold since $\delta$ is an outbranching sequence.
By definition, $\lastrem(V)=0$, so $\bad(V,\delta)=\emptyset$ which implies $(iii)$.
Hence by Lemma~\ref{lem:correctness} the algorithm is correct.
By Lemma~\ref{lem:only-reachable} the running time can be bounded by the number of reachable states times a polynomial, which is $\Ohstar(4^n)$ by Lemma~\ref{lem:few-reachable}.
This ends the proof of Lemma~\ref{lem:4^n} and hence also Theorem~\ref{thm:expspace}, as discussed in the beginning of this section.

\section{Polynomial space}
\label{sec:polyspace}
\newcommand{\cost}{{\sf cost}}
\newcommand{\capa}{{\sf cap}}

In this section we show Theorem~\ref{thm:polyspace}, that is, we solve \probMVTSP in $\Ohstar(7.88^n)$ time and polynomial space. Berger et al.~\cite{berger-arxiv} solved this problem in $\Oh(16^{n+o(n)})$ time and polynomial space, with the key ingredient being the following.

\begin{lemma}[Berger et al.~\cite{berger-arxiv}]
	\label{lem:berger:outbranching}
	There is a polynomial space algorithm running in time $\Oh(4^{n+o(n)})$ which, given an outdegree sequence $\{\delta_v\}_{v\in V}$, a cost function $d:V^2\rightarrow\Zq$, and a root $r\in V$ computes the cheapest outbranching rooted at $r$ with the required outdegrees.
\end{lemma}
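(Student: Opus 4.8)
The plan is to follow the divide-and-conquer scheme of Gurevich and Shelah~\cite{GurevichShelah}, which converts a subset-indexed dynamic program into a recursive procedure that stores only a recursion stack, trading exponential space for recomputation. The natural exponential-space baseline is exactly the recursion underlying Lemma~\ref{lem:4^n}: the cheapest out-tree rooted at $r$, spanning a set $S$ and realizing an outdegree sequence $\delta|_S$, is obtained by peeling a single leaf and reconnecting it to a guessed parent; memoizing over the $\Ohstar(4^n)$ reachable states already gives time and space $\Ohstar(4^n)$. The entire point of the lemma is to push the $\Ohstar(4^n)$ space down to polynomial while keeping the running time at $4^{n+o(n)}$, so the leaf-peeling recursion (which has depth $n$ and hence needs memoization to be efficient) must be replaced by one that removes a \emph{balanced half} of the tree at once.

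Concretely, I would argue about the unknown optimal out-tree $T$ directly. A recursive step guesses a balanced separation of $V$ into two parts $X\ni r$ and $Y$ of size $(\tfrac12\pm o(1))n$ together with a small amount of interface data (the attachment point and how the outdegree budget $\delta$ is shared between the two sides). There are only $\Ohstar(2^n)$ balanced bipartitions, and the interface data costs a subexponential factor. Given such a guess, the restriction of $T$ to each side is again a minimum-cost out-forest with a prescribed (sub)degree sequence, solved by two recursive calls of half the size, while the edges crossing the cut form a residual degree-constrained connection whose cheapest realization is, by Observation~\ref{obs:flows}, a minimum-cost flow computable in polynomial time. Minimizing over all guesses yields the cheapest $T$. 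Assuming the split is genuinely $(\tfrac12\pm o(1))$-balanced, the running time obeys
\[
T(n)\ \le\ \Ohstar(2^n)\cdot 2\,T(n/2),
\]
so that $\log_2 T(n)\le n+\log_2 T(n/2)+O(\log n)$, which unrolls to $\log_2 T(n)=2n+o(n)$; here the $2^{o(n)}$ slack absorbs the $\poly(n)^{O(\log n)}$ overhead accumulated across the $O(\log n)$ levels. Hence $T(n)=\Oh(4^{n+o(n)})$. Since nothing is memoized, the space is just the $O(\log n)$-deep stack, each frame storing a bipartition and a degree sequence, i.e. polynomial.

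The step I expect to be the main obstacle is securing a sufficiently balanced separation for an \emph{arbitrary} out-tree. A single-edge separator need not be balanced at all (already a star forces components of size $1$ and $n-1$), and even a centroid vertex only guarantees a $\tfrac13$–$\tfrac23$ split; plugging that into the recurrence gives $\log_2 T(n)\le n+\log_2 T(2n/3)+O(\log n)$, which sums to $3n$ and yields a useless $8^{n+o(n)}$ bound. To recover $4^{n+o(n)}$ the two sides must each have $(\tfrac12\pm o(1))n$ vertices, which cannot be obtained from vertex separators alone. The natural fix is to balance along a point of a traversal of $T$ (e.g. an Euler tour or DFS ordering), whose length is fixed and can therefore be halved exactly, splitting the tour into two contiguous halves while carrying only the small set of vertices on the root-to-midpoint path as an interface. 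Making this balanced-by-construction split compatible with the outdegree constraints, and in particular charging each cross edge to exactly one side so that the two recursively computed out-forests compose into a genuine out-tree realizing $\delta$, is where the degree bookkeeping must interact cleanly with the flow-based connection cost, and this is the part of the argument that needs the most care.
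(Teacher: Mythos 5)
First, a point of reference: the paper does not prove Lemma~\ref{lem:berger:outbranching} at all --- it is imported verbatim from Berger et al.~\cite{berger-arxiv}, and the only ``proof'' in the paper is the citation. Your attempt is therefore a reconstruction of Berger et al.'s argument. The high-level frame is the right one (the paper itself attributes the polynomial-space result to ``the divide and conquer approach of Gurevich and Shelah''), and your recurrence analysis is correct as far as it goes: if one can always guess a $(\tfrac12\pm o(1))$-balanced split together with a subexponentially describable interface, then $T(n)\le \Ohstar(2^n)\cdot 2T(n/2)$ unrolls to $4^{n+o(n)}$, and you are also right that a $\tfrac13$--$\tfrac23$ split is not good enough (even with the sharper $\binom{n}{n/3}$ count of splits one only gets roughly $6.75^n$).

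The genuine gap is exactly the step you flag, and your proposed repair does not close it. Splitting at the midpoint of an Euler tour or DFS order and ``carrying only the small set of vertices on the root-to-midpoint path as an interface'' fails because that path need not be small: when the out-tree is a directed path --- which is precisely the degree sequence for which the problem coincides with TSP and for which the lemma must work --- the root-to-midpoint path has $\Theta(n)$ vertices, so guessing it costs $n^{\Theta(n)}$ and the two pieces are out-forests with up to $\Theta(n)$ prescribed attachment points, i.e.\ not instances of the same problem. Separately, realizing the cross edges by a minimum-cost flow in the spirit of Observation~\ref{obs:flows} is unsound here: a degree-feasible set of cross edges chosen independently of the two recursive solutions need not glue two out-forests into a single out-tree (it can create undirected cycles or leave vertices unreached from the root), and the flow formulation deliberately ignores connectivity. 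In Gurevich--Shelah the two halves of a Hamiltonian path share exactly one vertex and no crossing edges, which is why no completion step is needed there; the entire difficulty of the tree case is that no analogous one-vertex balanced split exists, and you have not supplied (nor does this paper contain) the combinatorial decomposition that Berger et al.\ use to overcome this. As written, the proposal is an honest but incomplete sketch; for the purposes of this paper the lemma is correctly used as a black box with a citation.
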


More precisely, the $\Oh(16^{n+o(n)})$-time algorithm consists of the following steps:
\begin{enumerate}[$(i)$]
	\item Enumerate all $\Oh(4^n)$ outbranching sequences
	\item For each outbranching sequence compute the cheapest outbranching with required degrees using Lemma~\ref{lem:berger:outbranching} in time $\Oh(4^{n+o(n)})$
	\item For each of these outbranchings complete it to a solution of the original \probMVTSP instance with an optimal solution of \probFixDegSub on the residual degree sequences (in polynomial time, by Observation~\ref{obs:flows}).
\end{enumerate}

The intuition behind our approach is as follows.
We iterate over all subsets of vertices $R$.
Here, $R$ represents our guess of the set of inner vertices of an outbranching in an optimal solution.
Then we perform $(i)$ and $(ii)$ in the smaller subgraph induced by $R$.
Finally, we replace $(iii)$ by a more powerful flow-based algorithm which connects the vertices in $V\setminus R$ to $R$, and at the same time computes a feasible solution of \probFixDegSub on the residual degree sequences, so that the total cost is minimized.
Let $r=|R|$.
Clearly, when $r$ is a small fraction of $n$, we get significant savings in the running time.
The closer $r/n$ is to $1$ the smaller are the savings, but also the smaller is the number ${n \choose r}$ of sets $R$ to examine.

In fact, the real algorithm is slightly more complicated.
Namely, we fix an integer parameter $K$ and $R$ corresponds to the set of vertices left from an outbranching in an optimal solution after $K$ iterations of removing all leaves.
The running time of our algorithm depends on $K$, because the algorithm actually guesses the layers of leaves in each iteration. The space complexity is polynomial and does not depend on $K$.
In the end of this section, we show that our running time bound is minimized when $K=4$.

\subsection{Our algorithm}

Similarly as in Section~\ref{sec:general}, we solve the more general \probFixDegOutSub: for a given instance $I=(d,\In,\Out,\myroot)$ we want to find a solution $m:V^2\rightarrow\Zq$ that satisfies the degree constraints specified by $\In$ and $\Out$ and contains an outbranching rooted at $\myroot$.

Let $T$ be an arbitrary outbranching.
We define a sequence $L_1(T), L_2(T), \ldots$ of subsets of $V(T)$ as follows. 
%Whenever it is clear from the context, we are going to drop the fact that these are functions with $T$ as their arguments and call these subsets $L_1, L_2, \ldots$. 
For $i \ge 1$ let $L_i(T)$ be the set of leaves of $T \setminus (L_1(T) \cup L_2(T) \cup \ldots \cup L_{i-1}(T))$ if $|V(T) \setminus (L_1(T) \cup \ldots \cup L_{i-1}(T))|>1$, and otherwise $L_i=\emptyset$.
The sets $L_i(T)$ will be called \emph{leaf layers}.
Denote $R_i(T)=V\setminus(L_1(T)\cup\cdots\cup L_i(T))$ for any $i\ge 1$.

\begin{lemma}
	\label{obs:leaf-layers}
	For every $i\ge 1$ we have $\myroot \in R_i(T)\setminus L_{i+1}(T)$, $|L_i(T)| \ge |L_{i+1}(T)|$ and $|L_{i+1}|\le\frac{n-|R_i(T)|}{i}$.
\end{lemma}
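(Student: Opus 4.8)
The plan is to first establish one structural invariant and then read off all three claims from it. The invariant is that peeling leaf layers preserves the out-tree structure: setting $R_0(T) := V$, I claim that for every $i \ge 0$ with $|R_i(T)| \ge 1$, the induced subgraph $T[R_i(T)]$ is again an out-tree rooted at $\myroot$. I would prove this by induction on $i$, using the elementary fact that deleting a leaf (a vertex of outdegree $0$) from an out-tree rooted at $\myroot$ again yields an out-tree rooted at $\myroot$. Since, by definition, $L_i(T)$ is exactly the set of leaves of $T[R_{i-1}(T)]$ and $R_i(T) = R_{i-1}(T) \setminus L_i(T)$, one layer of peeling is just a sequence of such leaf deletions, so the inductive step goes through.

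For claim $(i)$, observe that in any out-tree on at least two vertices the root has outdegree at least $1$, hence is never a leaf. Therefore $\myroot$ is never placed into any layer: as long as $|R_{j-1}(T)| > 1$ the root is a non-leaf of the out-tree $T[R_{j-1}(T)]$, so $\myroot \notin L_j(T)$, and once $|R_{j-1}(T)| = 1$ the layer $L_j(T)$ is empty by definition. Consequently $\myroot \in R_i(T)$ and $\myroot \notin L_{i+1}(T)$, which is precisely $\myroot \in R_i(T)\setminus L_{i+1}(T)$.

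The heart of the argument, and the step I expect to be the main obstacle, is claim $(ii)$. Here I would construct an injection $\phi\colon L_{i+1}(T) \to L_i(T)$. Take any $v \in L_{i+1}(T)$; then $v$ is a leaf of $T[R_i(T)]$ but, since $v \notin L_i(T)$, it was not a leaf of $T[R_{i-1}(T)]$. Thus $v$ had at least one child in $T[R_{i-1}(T)]$, and because $v$ has no children in $T[R_i(T)] = T[R_{i-1}(T) \setminus L_i(T)]$, every such child must have been removed in the $i$-th peeling step, i.e.\ belongs to $L_i(T)$. Define $\phi(v)$ to be one such child. Since $T[R_{i-1}(T)]$ is a tree, distinct vertices have disjoint sets of children, so $\phi$ is injective, yielding $|L_{i+1}(T)| \le |L_i(T)|$.

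Finally, claim $(iii)$ follows by counting. As the layers $L_1(T), \ldots, L_i(T)$ partition $V \setminus R_i(T)$, we have $n - |R_i(T)| = \sum_{j=1}^{i} |L_j(T)|$. Iterating the monotonicity from $(ii)$ gives $|L_1(T)| \ge |L_2(T)| \ge \cdots \ge |L_i(T)| \ge |L_{i+1}(T)|$, so each of the $i$ summands is at least $|L_{i+1}(T)|$. Hence $n - |R_i(T)| \ge i\,|L_{i+1}(T)|$, which rearranges to the desired bound $|L_{i+1}(T)| \le (n - |R_i(T)|)/i$.
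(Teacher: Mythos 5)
Your proposal is correct and follows essentially the same route as the paper: the root is never a leaf of the successive out-trees $T[R_j(T)]$ (hence claim $(i)$), each vertex of $L_{i+1}(T)$ is matched to a child in $L_i(T)$ — the paper phrases this dually as $L_{i+1}(T)$ being contained in the set of parents of $L_i(T)$ — giving $(ii)$, and $(iii)$ follows by summing the monotone layer sizes. The only difference is that you make explicit, via induction, the fact that peeling leaves preserves the out-tree structure, which the paper uses implicitly.
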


\begin{proof}
	In this proof we skip the `$(T)$' in $L_i$ and $R_i$ because there is no ambiguity.
	Assume $\myroot\in L_i$ for some $i\ge 1$.
	It means that $\myroot$ is a leaf in $T \setminus (L_1 \cup L_2 \cup \ldots \cup L_{i-1})$.
	Then $V \setminus (L_1 \cup L_2 \cup \ldots \cup L_{i-1})=\{\myroot\}$ and $L_i=\emptyset$, a contradiction. 
	Hence $\myroot \not\in L_i$ for all $i\ge 1$, and in consequence  $\myroot \in R_i$ for all $i\ge 1$.
	However, $\myroot \in R_{i+1}(T)$ implies that $\myroot \not\in L_{i+1}(T)$, hence $\myroot \in R_i\setminus L_{i+1}$.
	
	If $|V \setminus (L_1 \cup \ldots \cup L_{i})|>1$, then $L_{i+1}$ is the set of leaves of the out-tree $T \setminus (L_1 \cup \ldots \cup L_{i})$, which is contained in the set of parents of vertices in $L_i$. Since every vertex in $L_i$ has exactly one parent, $|L_i|\ge |L_{i+1}|$.
	If $|V \setminus (L_1 \cup \ldots \cup L_{i})|\le 1$ then $L_{i+1}=\emptyset$ and clearly $|L_i|\ge |L_{i+1}|=0$.
	
	Finally, since for every $j<i$ we have $|L_j|\ge |L_i|$ we get $n-|R_i|=|L_1|+\ldots+|L_i|\ge i|L_i|$. It follows that $|L_{i+1}|\le |L_i| \le \frac{n-|R_i|}i$, as required.
\end{proof}

A pseudocode of our algorithm is presented as Algorithm~\ref{alg:polyspace}.

\begin{algorithm}
	\begin{algorithmic}[1]
		
		\caption{} \label{alg:polyspace}
		\Function {Solve} {$G, \Out, \In, d, \myroot$}
		\State $\best \gets \infty$
		\For {$R, L_{K+1}, \delta$}
		\State $T_R \gets$ cheapest $\delta$-out-tree spanning $R$ rooted at $\myroot$ (Lemma~\ref{lem:berger:outbranching})
		\State $\Out' \gets \Out-\outdeg_{T_R}$
		\State $\In' \gets \In-\indeg_{T_R}$
		\For {$L_1, \ldots, L_{K}$}
		\State $F \gets \textsc{CreateNetwork}(G, R, \Out', \In', d, L_1, \ldots, L_K)$ 
		\State $f \gets \textsc{MinCostMaxFlow}(F)$ 
		\If {$|f| = \sum_{v \in V(G)} \Out'(v)$ {\bf and} $\cost(f) + d(T_R) < \best$ }
		\State $\best \gets \cost(f) + d(T_R)$
		\EndIf
		\EndFor
		\EndFor
		\Return $\best$
		\EndFunction
		
		%		\Function {SolveShort} {$G, k, d, K$}
		%		\State $best\_res \gets \infty$
		%		\For {$h \gets 1$ \bf{to} $K$}
		%			\State $R \gets \{w\}$
		%			\For {$L_1, \ldots, L_{h-1}$}
		%				\State $F \gets CreateNetwork(G, R, k, k, d, \delta, L_1, \ldots, L_{h-1})$ 
		%				\State $(S, C_F) \gets MinCostMaxFlow(F)$ 
		%				\If {$S = \sum_{v \in V(G)} out(v)$ \bf{and} $C_F + C_B < best\_res$ }
		%					\State $best\_res \gets C_F$
		%				\EndIf
		%			\EndFor
		%		\EndFor
		%		\Return $best\_res$
		%		\EndFunction
		%		
		%		\Function {SolveAll} {$G, k, d, K$}
		%		\Return $\min($SolveHigh$(G, k, d, K)$, SolveShort$(G, k, d, K))$
		%		\EndFunction

	\end{algorithmic}
\end{algorithm}

For clarity, in the pseudocode we skipped some constraints that we enforce on the sets $L_i$ and sequence $\delta$. We state them below.
\begin{enumerate}[(C1)]
	\item $L_{K+1} \subseteq R \subseteq V, \myroot \in R \setminus L_{K+1}, |L_{K+1}| \le \frac{n-|R|}{K}$
	\item $\{\delta_v\}_{v\in R}$ is a rooted out-tree sequence, i.e., for all $v\in R$ we have $\delta_v \in \Zq, \sum_{v \in R} \delta(v) = |R| - 1$; 
	also $\delta_\myroot\ge 1$ if $|R|\ge 2$ and $\delta_\myroot=0$ if $|R|=1$.
	\item for every $v \in L_{K+1}$ we have $\delta_v = 0$ and for every $v \in R\setminus (L_{K+1}\cup\{\myroot\})$ we have $\delta_v \ge 1$
	\item $L_1\uplus L_2\uplus\cdots\uplus L_K=V\setminus R$
	\item $|L_i| \ge |L_{i+1}|$ for $i=1,\ldots,K$.
\end{enumerate}
It is clear that all these possibilities can be enumerated in time proportional to their total number times $O(n)$.

Let us provide some further intuition about Algorithm~\ref{alg:polyspace}.
Consider an optimum solution $m$ of $I$ and any outbranching $B$ in $m$ rooted at $\myroot$.
In Algorithm~\ref{alg:polyspace}, for any $i=1,\ldots K+1$, the set $L_i$ is a guess of the leaf layer $L_i(B)$, while $R$ is a guess of $V\setminus(L_1(B)\cup\cdots\cup L_K(B))$. Finally, $\delta$ is a guess of the outdegree sequence of the out-tree $B[R]$. 

In Line 8 we create a flow network, and in line 9 a minimum cost maximum flow is found in polynomial time. In the next section we discuss the flow network and properties of the flow.

\subsection{The flow}

In this section we consider a run of Algorithm~\ref{alg:polyspace}, and in particular we assume that the variables $R,\delta,L_1,\ldots,L_{K+1}$ have been assigned accordingly.
Function {\sc CreateNetwork} in our algorithm builds a flow network $F = (V(F), E(F), \capa, \cost)$, where $E(F)$ is a set of directed edges and $\capa$ and $\cost$ are functions from edges to integers denoting capacities and costs of corresponding edges. 
As usual, the function $\cost$ extends to flow functions in a natural way, i.e., $\cost(f)=\sum_{e\in E(F)}f(e)\cost(e)$.
We let $V(F) = \{s, t\} \cup \{v^I, v^O \mid v \in V(G)\} \cup \{v^C \mid v \in V\setminus R\}$, where $s$ and $t$ denote the source and the sink of $F$.

We put following edges into $E(F)$:
\begin{itemize}
	\item $(s, v^O)$, where $\capa(s, v^O) = \Out'(v), \cost(s, v^O) = 0$ for every $v \in V(G)$
	\item $(v^I, t)$, where $\capa(v^I, t) = \In'(v), \cost(v^I, t) = 0$ for every $v \in R$ 
	\item $(v^I, t)$, where $\capa(v^I, t) = \In'(v)-1, \cost(v^I, t) = 0$ for every $v \notin R$
	\item $(v^C, t)$, where $\capa(v^C, t) = 1, \cost(v^C, t) = 0$ for every $v \notin R$
	\item $(u^O, v^I)$, where $\capa(u^O, v^I) = \infty, \cost(u^O, v^I) = d(u, v)$ for every $(u, v) \in E(G)$
	\item $(u^O, v^C)$, where $\capa(u^O, v^C) = \infty, \cost(u^O, v^C) = d(u, v)$ for every $v \in L_i, u \in R \cup L_{i + 1} \cup \ldots \cup L_K, (u, v) \in E(G)$. 
\end{itemize}

We will say that $F$ has a \emph{full flow} if it has a flow $f$ with value $|f|=\sum_{v \in V} \Out'(v)$. By the construction of $F$, then all edges leaving source are saturated, i.e., carry flow equal to their capacity. Since $\sum_{v \in V} \Out'(v)=\sum_{v \in V} \In'(v)$, also all edges that enter the sink are saturated.

Essentially, the network above results from extending the standard network used to get Observation~\ref{obs:flows} by vertices $v^C$. 
The flow between $\{v^O\mid v\in V\}$ and $\{v^I\mid v\in V\}\cup \{v^C \mid v \in V\setminus R\}$ represents the resulting solution. 
In a full flow the edges leaving $v^C$ are saturated, so a unit of flow enters every vertex $v^C$, which results in connecting $v$ in the solution to a higher layer or to $R$.
Thanks to that the solution resulting from adding the out-tree $T_R$ to the solution extracted from $f$ contains an outbranching.

\begin{lemma} \label{clm:flow-to-sol}
	If $f$ is a full flow of minimum cost in $F$ then there exists a solution of $I$ with cost $\cost(f) + d(T_R)$.
	Moreover, the solution can be extracted from $f$ in polynomial time.
\end{lemma}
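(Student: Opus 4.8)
The plan is to extract from a full minimum-cost flow $f$ a multiplicity function $m$ on $V^2$ and show it is a valid solution of $I$, i.e.\ that it satisfies the degree constraints and contains an outbranching rooted at $\myroot$, and that its cost equals $\cost(f)+d(T_R)$. First I would read off from $f$ the ``outside'' multiplicity function: for every edge $(u,v)\in E(G)$ set
\[
m_{\mathrm{out}}(u,v) = f(u^O,v^I) + f(u^O,v^C),
\]
where the second term is present only when $v\notin R$, since only then does $v^C$ exist. I would then define the claimed solution as $m = m_{\mathrm{out}} + m_{T_R}$, so that the cost bookkeeping is immediate: by construction every unit of flow on an edge $(u^O,v^I)$ or $(u^O,v^C)$ is charged $d(u,v)$, edges incident to $s$ and $t$ cost $0$, so $\cost(f)=d(m_{\mathrm{out}})$, and hence $d(m)=\cost(f)+d(T_R)$.

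Next I would verify the degree constraints. Here the accounting is entirely local and uses that $f$ is a \emph{full} flow, so all source edges and all sink edges are saturated. For the outdegrees: the edge $(s,v^O)$ carries exactly $\Out'(v)$ units, which split among edges $(v^O,\cdot)$, so $\outdeg_{m_{\mathrm{out}}}(v)=\Out'(v)=\Out(v)-\outdeg_{T_R}(v)$, giving $\outdeg_m(v)=\Out(v)$. For indegrees I would split on whether $v\in R$. If $v\in R$, only $(v^I,t)$ enters $v$ on the sink side and it is saturated at capacity $\In'(v)$, so $\indeg_{m_{\mathrm{out}}}(v)=\In'(v)$. If $v\notin R$, the flow entering $v$ arrives at both $v^I$ (saturated at $\In'(v)-1$) and $v^C$ (saturated at capacity $1$), summing to $\In'(v)$; in either case $\indeg_{m_{\mathrm{out}}}(v)=\In'(v)=\In(v)-\indeg_{T_R}(v)$, so $\indeg_m(v)=\In(v)$, as required.

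The main obstacle is the connectivity argument: I must show $G_m$ contains an outbranching rooted at $\myroot$. The intuition in the excerpt is that the $v^C$ vertices force each $v\notin R$ to connect ``upward''. Concretely, for $v\in L_i$ the one saturated unit on $(v^C,t)$ comes, by the capacity-$\infty$ edges $(u^O,v^C)$, from some $u\in R\cup L_{i+1}\cup\cdots\cup L_K$, and $m_{\mathrm{out}}(u,v)\ge 1$ for that $u$; I would designate this $u$ as a parent of $v$. The key step is then that following parent pointers from any $v\in V\setminus R$ strictly increases the layer index (from $L_i$ to $R\cup L_{i+1}\cup\cdots\cup L_K$), so after finitely many steps one reaches a vertex of $R$; together with the out-tree $T_R$ spanning $R$ and rooted at $\myroot$ (which exists by Lemma~\ref{lem:berger:outbranching} and conditions (C2)--(C3)), these parent edges exhibit a connected spanning subgraph of $G_m$ containing an outbranching rooted at $\myroot$. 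I would phrase the parent-edge set as an out-tree attached to $T_R$ and invoke that every edge used lies in $G_m$ because its multiplicity is at least one. All of this is constructive, so extraction of $m$ from $f$ runs in polynomial time, completing the proof.
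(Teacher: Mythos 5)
Your proof is correct and follows essentially the same route as the paper's: the same definition of $m$ (flow on $(u^O,v^I)$ and $(u^O,v^C)$ plus the indicator of $T_R$), the same saturation argument for the degree constraints, and the same ``parent in a higher layer or in $R$'' argument for reachability from $\myroot$. The only thing to add is the paper's one-line remark that, since all capacities are integral, one may assume the minimum-cost full flow $f$ is integral --- your construction of $m$ and the claim that $m_{\mathrm{out}}(u,v)\ge 1$ for the designated parent implicitly require this.
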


\begin{proof}
	By standard arguments, since all capacities in $F$ are integer, we infer that there is an integral flow of minimum cost (and it can be found in polynomial time), so we assume w.l.o.g.\ that $f$ is integral.
	
	Let $b : V^2 \to \{0, 1\}$ denote a function such that $b(u, v) = [(u, v)\in T_R]$.
	Now we construct a solution $m:V^2\rightarrow\Zq$ of $I$.
	
	 \[
	m(u,v) =\begin{cases}
	f(u^O, v^I) + b(u, v) & \text{if } v \in R\\
	f(u^O, v^I) + f(u^O, v^C) & \text{if } v \not\in R.
	\end{cases}
	\]
	
	In other words, $m$ describes how many times edge $(u, v)$ was used by the out-tree $T_R$ and flow $f$ in total. Let us verify that $m$ is a feasible solution for $I$.
	The degree constraints are easy to verify, so we are left with showing that $m$ contains an outbranching rooted at $\myroot$.
	To this end it suffices to show that every vertex $v$ is reachable from $\myroot$ in $G_m$.
	Clearly, this holds for vertices in $R$, thanks to the out-tree $T_R$.
	Pick an arbitrary vertex $v\not\in R$.
	Then $v \in L_i$ for some $i=1,\ldots,K$.
	We know that $f(v^C, t) = 1$, so there exists $u$ such that $f(u^O, v^C) = 1$. 
	Therefore, $v$ is connected in $G_m$ to a vertex from $R \cup L_{i+1} \cup \ldots L_{K}$. Since $v$ in $G_m$ has an in-neighbor either in $R$ or in a layer with a higher index,
	we can conclude that there is a path in $G_m$ from $R$ to $v$. 
	Hence $m$ indeed contains the required outbranching.
	
	Finally, it can be easily checked that $d(m)=\cost(f) + d(T_R)$, what concludes this proof.
\end{proof}

Let $m$ be a feasible solution for $I$.
Let $R$, $L_i$ for $i=1,\ldots,K+1$ be sets of vertices and $\delta$ an out-tree sequence on $R$, 
as in Algorithm~\ref{alg:polyspace}.
We say that $m$ is {\em compliant} with $R$, $L_1,\ldots,L_{K+1}$ and $\delta$ when $m$ contains an outbranching $T$ rooted at $\myroot$ such that $R_K(T)=R$, $L_i(T)=L_i$ for $i=1,\ldots,K+1$ and $\delta$ is equal to the outdegree sequence of $T[R]$.

\begin{lemma} \label{clm:sol-to-flow}
	Assume that there exists a solution $m$ of $I$ that is compliant with $R$, $L_1,\ldots,L_{K+1}$ and $\delta$. Then $F$ has a full flow $f$ such that $\cost(f) + d(T_R) \le d(m)$.
\end{lemma}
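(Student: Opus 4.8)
The statement to prove is Lemma~\ref{clm:sol-to-flow}: given a solution $m$ compliant with $R, L_1,\ldots,L_{K+1}$ and $\delta$, the network $F$ has a full flow $f$ with $\cost(f)+d(T_R)\le d(m)$. This is the converse direction to Lemma~\ref{clm:flow-to-sol}, so the overall plan is to \emph{extract} a flow from the solution $m$ rather than a solution from a flow.

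\begin{proof}[Proof proposal]
The plan is to take the outbranching $T$ witnessing compliance and use it to split $m$ into two parts: the part that will be ``paid for'' by the out-tree $T_R$ on $R$, and the part that the flow $f$ must carry. First I would observe that since $m$ is compliant, it contains an outbranching $T$ with $R_K(T)=R$, $L_i(T)=L_i$, and $\delta$ equal to the outdegree sequence of $T[R]$. Note that $T_R$ is the \emph{cheapest} $\delta$-out-tree spanning $R$, so $d(T_R)\le d(T[R])$; this inequality is what ultimately lets me absorb the cost of $T[R]$ into $d(T_R)$ and get the ``$\le$'' in the conclusion rather than equality. I would set $m'=m-m_{T_R}$ on the edges inside $R$, or more carefully subtract exactly $m_{T[R]}$ and then argue $d(T_R)\le d(T[R])$ separately, to be defined as the residual multiplicity that the flow is responsible for routing.

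The heart of the construction is defining $f$ on each edge type of $F$. For every edge $(u,v)\in E(G)$ with $v\in R$, I would put $f(u^O,v^I)=m(u,v)-b(u,v)$, mirroring the formula in the proof of Lemma~\ref{clm:flow-to-sol}; this is nonnegative because $T[R]$ uses each such edge at most $m(u,v)$ times. For $v\notin R$, say $v\in L_i$, the vertex $v$ has exactly one parent in $T$, which lies in $R\cup L_{i+1}\cup\cdots\cup L_K$ by the definition of leaf layers (a parent cannot be in the same or a lower-indexed layer). I would route \emph{that one} tree-edge through the $v^C$ gadget, i.e.\ set $f(u^O,v^C)=1$ for the parent edge $(u,v)\in T$, and route the remaining $m(u,v)-[(u,v)\in T]$ copies through $f(u^O,v^I)$. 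The key structural fact making the $v^C$ edges available is precisely that the parent of $v\in L_i$ lands in $R\cup L_{i+1}\cup\cdots\cup L_K$, which matches the constraint on the $(u^O,v^C)$ edges in the network definition.

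Once $f$ is defined edge-by-edge, I would verify three things. First, \emph{capacity feasibility}: the $s$-edges carry $\Out'(v)=\Out(v)-\outdeg_{T_R}(v)$, which equals the outdegree of $v$ in $m'$, so they are exactly saturated; the $(v^I,t)$ edges for $v\in R$ have capacity $\In'(v)$ and for $v\notin R$ capacity $\In'(v)-1$, the ``$-1$'' accounting for the single tree in-edge of $v$ that is diverted to the $v^C$ branch, and the $(v^C,t)$ edge of capacity $1$ absorbs exactly that diverted unit. Second, \emph{flow conservation} at each $u^O$, $v^I$, $v^C$, which follows from $m$ and $T$ satisfying their respective degree identities. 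Third, that $|f|=\sum_v \Out'(v)$, i.e.\ $f$ is a full flow. Finally, for the cost I would compute $\cost(f)=d(m)-d(m_{T[R]})-\sum_{v\notin R} d(\text{parent edge of }v)=d(m)-d(T)$; adding $d(T_R)\le d(T[R])\le d(T)$ gives $\cost(f)+d(T_R)\le d(m)$ as required.

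I expect the main obstacle to be the bookkeeping around vertices \emph{not} in $R$: I must argue that exactly one unit of in-flow per such vertex is diverted to its $v^C$ edge (so the $(v^I,t)$ capacity of $\In'(v)-1$ is not violated and the $(v^C,t)$ edge is saturated), and that the parent edge of $v$ in $T$ genuinely has its tail in $R\cup L_{i+1}\cup\cdots\cup L_K$ so that the corresponding $(u^O,v^C)$ edge exists in $F$. This second point is the crucial use of the leaf-layer structure from Lemma~\ref{obs:leaf-layers}, and getting the indices exactly right — that a vertex in layer $L_i$ cannot have its parent in $L_1\cup\cdots\cup L_i$ — is where the argument is most error-prone.
\end{proof}
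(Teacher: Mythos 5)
Your construction is essentially the paper's: take the witnessing outbranching $T$, saturate every source and sink edge, route the unique tree in-edge of each $v\notin R$ through $v^C$ (which exists in $F$ precisely because the parent of $v\in L_i$ lies in $R\cup L_{i+1}\cup\cdots\cup L_K$), set $f(u^O,v^I)=m(u,v)-[(u,v)\in T]$, and finish with $d(T_R)\le d(T[R])$. The one genuine slip is the cost identity: the edges $(u^O,v^C)$ have cost $d(u,v)$, not $0$, so diverting the parent edge of $v\notin R$ through the $v^C$ gadget does not remove its cost from the flow. Hence $\cost(f)=d(m)-d(T[R])$, not $d(m)-d(T)$ as you claim. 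The conclusion survives, but only via $\cost(f)+d(T_R)=d(m)-d(T[R])+d(T_R)\le d(m)$, i.e., you genuinely need $d(T_R)\le d(T[R])$ and cannot get away with the weaker $d(T_R)\le d(T)$. Also commit to $b$ being the indicator of $E(T)$ (equivalently of $T[R]$ for heads in $R$) rather than of $T_R$: since $m$ need not contain $T_R$ at all, $m(u,v)-[(u,v)\in T_R]$ could be negative; you flag this ambiguity yourself, but the final formulas should resolve it.
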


\begin{proof}
	Let $T$ be an outbranching in $m$ which certifies that $m$ is compliant with $R$, $L_1,\ldots,L_{K+1}$ and $\delta$.
	Let $p : V^2 \to \{0, 1\}$ be a function such that for every $u,v\in V$ we have $p(u,v)=[(u,v) \in T]$.
	
	We set $f(s, u) = \capa(s, u)$ for all edges $(s,u) \in E(F)$ and 
	       $f(u, t) = \capa(u, t)$ for all edges $(u,t) \in E(F)$.
	If $v \in V\setminus R$ then we set $f(u^O, v^C) = p(u, v)$. 
	For all $u, v \in V(G)$ we set $f(u^O, v^I) = m(u, v) - p(u, v)$. 
	It can be easily checked that such function $f$ is a full flow and $\cost(f)=d(m) - d(T[R])$.
	However, since $T[R]$ is a $\delta$-out-tree rooted at $\myroot$ and $T_R$ is a cheapest such out-tree, $d(T_R)\le d(T[R])$.
	It follows that $\cost(f)\le d(m) - d(T_R)$, so $\cost(f) + d(T_R) \le d(m)$ as required.
\end{proof}

Consider a  {\em minimum cost} full flow $f'$ in $F$ that is found by Algorithm~\ref{alg:polyspace} for a choice of $R, L_1,\ldots,L_{K+1}, \delta$. 
The claim above implies that $\cost(f') + d(T_R) \le d(m)$.
However, notice that we do not claim that $\cost(f')$ is the cost of optimal completion of $T_R$ consistent with all guesses, as the intuitions we described earlier might suggest. 
It could be the case that in the solution resulting from $f'$, a vertex which was guessed to belong to $L_i$ does not have any out-neighbor that was guessed to belong to $L_{i-1}$, what would mean that this vertex should be in an earlier layer.
However, that is not an issue for the extraction of the global optimum solution of $I$, because we may get only better solutions than the optimum completion for that particular guess. 

\subsection{Correctness}

\begin{lemma}
	Function {\sc Solve} returns the cost of an optimal solution of $I$.
\end{lemma}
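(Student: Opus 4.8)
The plan is to prove the two inequalities $\best \ge \OPT$ and $\best \le \OPT$, where $\OPT = d(m^*)$ for an optimal solution $m^*$ of $I$; together they give $\best = \OPT$. The two directions match the two helper lemmas already established (Lemma~\ref{clm:flow-to-sol} and Lemma~\ref{clm:sol-to-flow}), so the substance of the argument is (a) confirming that every value $\best$ can take is a genuine solution cost, and (b) exhibiting one choice of guesses that witnesses optimality.

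For soundness ($\best \ge \OPT$), note that $\best$ is updated only when the flow $f$ returned by \textsc{MinCostMaxFlow} is full, that is $|f| = \sum_{v} \Out'(v)$. Since a full flow is necessarily a maximum flow, such an $f$ is a full flow of minimum cost, and Lemma~\ref{clm:flow-to-sol} then yields a feasible solution of $I$ of cost exactly $\cost(f) + d(T_R)$. Hence every value ever assigned to $\best$ is the cost of some feasible solution of $I$, so $\best \ge \OPT$.

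For completeness ($\best \le \OPT$), I would fix an optimal solution $m^*$ and an arbitrary outbranching $B$ in $m^*$ rooted at $\myroot$. Setting $L_i := L_i(B)$ for $i = 1, \ldots, K+1$, $R := R_K(B)$, and letting $\delta$ be the outdegree sequence of $B[R]$ makes $m^*$ compliant with $(R, L_1, \ldots, L_{K+1}, \delta)$ by definition. The crux is to verify that this tuple obeys the constraints (C1)--(C5), so that the outer and inner \textbf{for} loops actually enumerate it. Here (C1) (in particular $\myroot \in R \setminus L_{K+1}$ and the bound $|L_{K+1}| \le (n-|R|)/K$) and (C5) follow from Lemma~\ref{obs:leaf-layers}; (C2) and (C3) follow because $\delta$ is the outdegree sequence of the out-tree $B[R]$, whose leaf set is exactly $L_{K+1}$, together with Lemma~\ref{lem:out-sequence}; and (C4) is immediate from the definitions of the leaf layers and of $R$.

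Once this valid guess is enumerated, Lemma~\ref{clm:sol-to-flow} supplies a full flow $f$ in the corresponding network $F$ with $\cost(f) + d(T_R) \le d(m^*) = \OPT$. Because a full flow exists, the flow $f'$ computed by \textsc{MinCostMaxFlow}$(F)$ is itself full and of minimum cost, so $\cost(f') \le \cost(f)$; the conditional guard therefore passes and $\best$ is driven down to at most $\cost(f') + d(T_R) \le \OPT$. Combining the two directions gives $\best = \OPT$. I expect the only real obstacle to be bookkeeping: checking that the layer-size and root constraints (C1) and (C5) genuinely hold for the guesses read off from $B$ --- precisely the role of Lemma~\ref{obs:leaf-layers} --- and making sure the min-cost \emph{maximum} flow returned by the algorithm agrees with a min-cost \emph{full} flow whenever one exists.
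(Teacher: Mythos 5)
Your proposal is correct and follows essentially the same route as the paper's proof: soundness via Lemma~\ref{clm:flow-to-sol}, and completeness by reading the guesses $R = R_K(T)$, $L_i = L_i(T)$, $\delta$ off an outbranching in an optimal solution, verifying (C1)--(C5) (where the paper additionally treats the degenerate case $|R|\le 1$ for (C3) separately), and invoking Lemma~\ref{clm:sol-to-flow}. Your explicit observation that a full flow is automatically a maximum flow, so the min-cost max-flow subroutine returns a min-cost full flow whenever one exists, is a detail the paper leaves implicit but is a welcome clarification.
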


\begin{proof}
	From Lemma~\ref{clm:flow-to-sol} we infer that {\sc Solve} returns the cost of a feasible solution of $I$. 
	It remains to show that it returns a value that smaller or equal to the cost of an optimal solution of $I$. 
	To this end, let $m$ be an arbitrary optimal solution of $I$ and let $T$ be an arbitrary outbranching rooted at $\myroot$ in $G_m$.
	Let $R=R_K(T)$, $L_i=L_i(T)$ for $i=1,\ldots,K+1$ and let $\delta$ be the outdegree sequence of $T[R]$.
	
	Let us verify that $R, L_1,\ldots,L_{K+1}$ and $\delta$ satisfy constraints (C1)--(C5).
	We get (C1) and (C5) by Lemma~\ref{obs:leaf-layers}.
	(C2) follows from the definition of $\delta$.
	For (C3), consider two cases.
	If $|R|>1$, then  $L_{K+1}$ is the set of leaves in $R$ and hence indeed for every $v \in L_{K+1}$ we have $\delta_v = 0$ and for every $v \in R\setminus (L_{K+1}\cup\{\myroot\})$ we have $\delta_v \ge 1$.
	When $|R|\le 1$, we have $L_{K+1}=\emptyset$ and since $\myroot\in R$ by Lemma~\ref{obs:leaf-layers}, $R=\{\myroot\}$. Then both sets $L_{K+1}$ and $R\setminus (L_{K+1}\cup\{\myroot\})$ are empty, so (C3) trivially holds.
	Finally, (C4) follows by the definition of leaf layers.

	Since $R, L_1,\ldots,L_{K+1}$ and $\delta$ satisfy constraints (C1)--(C5), then {\sc Solve} reaches this particular evaluation of the variables $R, L_1,\ldots,L_{K+1}$ and $\delta$.
	Then, based on Lemma \ref{clm:sol-to-flow}, the network $F$ has a full flow $f$ such that $\cost(f) + d(T_R) \le d(m)$, and it follows that {\sc Solve} returns a value $\best\le \cost(f) + d(T_R) \le d(m)$, as required.
\end{proof}

Obviously, {\sc Solve} can be easily adapted to return a solution of $I$ with the cost it returns, but we have not taken this into account in Algorithm~\ref{alg:polyspace} for the sake of its readability.

\subsection{Running time}

Having a correct algorithm solving \probFixDegOutSub in polynomial space, let us analyze its complexity depending on $K$.

Let us denote $r=|R|$ and $c=|L_{K+1}|$.
Recall that $1\le r \le n$ and $0\le c\le \lfloor \frac{n - r}{K} \rfloor$.

If we fix $r$ and $c$, then there are ${n-1 \choose r-1}$ guesses for $R$ (it has to contain $\myroot$) and at most ${r-1 \choose c}$ guesses for $L_{K+1}$.
Let us bound the number of guesses for $\delta$.
By (C2) and (C3), $\sum_{v\in R}\delta_v=r-1$, and $\delta_v=0$ iff $v\in L_{K+1}$ so essentially we put $r-1$ balls into $r-c$ bins that must be nonempty, which is ${r - 2 \choose c - 1}$ by standard combinatorics. In the special case $c=0$ there is one choice for $\delta$, where $\delta_{\myroot} = 0$.

In total, there are at most ${n \choose r} {r \choose c}^2$ guesses for all $R, L_{K+1}, \delta$ simultaneously. For each of these guesses, using Lemma~\ref{lem:berger:outbranching} function {\sc Solve} calculates an optimal $\delta$-out-tree spanning $R$, which takes time $\Oh(4^{n+o(n)})$. 
It follows that that part takes time 
$\Ohstar(\sum_r {n \choose r} 4^{r+o(r)} \sum_c {r \choose c}^2).$
Then, {\sc Solve} guesses a partition of $V \setminus R$ into $L_1, \ldots, L_K$ in at most $K^{n-r}$ ways. 
For each such guess, {\sc Solve} spends polynomial time, so that part takes $\Ohstar (\sum_r {n \choose r} K^{n-r} \sum_c  {r \choose c}^2)$ time.
Hence the total running time can be bounded by $$\Ohstar  \left( 2^{o(n)}\sum_{r=1}^{n} \sum_{c=0}^{\lfloor \frac{n - r}{K} \rfloor}  \underbrace{{n \choose r} (K^{n-r} + 4^r) {r \choose c}^2}_{\xi(r,c)} \right) .$$

Since there are polynomially many guesses for $r$ and $c$, we can actually replace sums with maximums in the expression above and focus on the expression $\xi(r,c)={n \choose r} (K^{n-r} + 4^r) {r \choose c}^2$.

We will heavily use the well-known bound ${n \choose \alpha n} < 2^{h(\alpha)n}$, where $h(\alpha)=-\alpha\log_2\alpha-(1-\alpha)\log_2(1-\alpha)$ is the binary entropy function (see e.g.~\cite{expalg-book}). 
For readability, let us denote $f(\alpha) = 2^{h(\alpha)}$ and let us point out that $f$ is increasing on interval $[0, \frac{1}{2}]$ and decreasing on interval $[\frac12, 1]$. 
Let us denote $\beta \coloneqq \frac{r}{n}$.
We are going to distinguish two cases here.

\begin{enumerate}
	\item $\frac{n-r}{K} \ge \frac{r}{2}$
	
	This inequality can be rephrased as $r \le \frac{2}{K+2} n$, which is equivalent to $\beta \le \frac{2}{K+2}$. 
	We will use here a trivial bound ${r \choose c} \le 2^r$.
	Then, $\xi(r,c)\le f(\beta)^n ((K^{1-\beta})^n + 4^{\beta n}) 4^{\beta n} =
	(f(\beta)K^{1-\beta}4^{\beta})^n + (f(\beta)4^{2\beta})^n$
	
	\item $\frac{n-r}{K} < \frac{r}{2}$
	
	In that case we know that $\max_{c=0}^{\lfloor \frac{n-r}{K} \rfloor} {r \choose c}^2$
	is attained when $c = \lfloor \frac{n-r}{K} \rfloor$ and for that particular value of $c$ we can use the following bound.
	$${r \choose c}^2 = {r \choose \frac{\lfloor \frac{n-r}{K} \rfloor}{r} \cdot r}^2 =
	\Ohstar \left( f\left( \frac{\lfloor \frac{n-r}{K} \rfloor}{r} \right) ^{2r}\right) =
	\Ohstar \left( f\left( \frac{\frac{n-r}{K}}{r} \right) ^{2r}\right) =$$
	$$	=\Ohstar \left( f \left( \frac{1 - \beta}{K \beta} \right) ^{2\beta n} \right) $$
	In the third equality above we used fact that $f$ is increasing in interval $[0, \frac12]$. 
	To sum up, in this case, 
	$$\xi(r,c)=\Ohstar \left( \left( f(\beta)K^{1-\beta}  f \left( \frac{1 - \beta}{K \beta} \right) ^{2\beta } \right)^n +
	\left( f(\beta)4^{\beta}  f \left( \frac{1 - \beta}{K \beta} \right) ^{2\beta } \right)^n \right).$$

\end{enumerate}

Our numerical analysis shows that it is optimal to choose $K=4$. 
For that particular value of $K$, the first case applies if and only if $\beta \le \frac13$. 
Then,
\[\xi(r,c)=(4f(\beta))^n+(4^{2\beta}f(\beta))^n=\Oh((4f(\beta))^n)=\Oh((4f(\tfrac13))^n)=\Oh(7.56^n).\]
Let us now investigate the second case, when $\beta > \frac13$.
For $\frac13 < \beta < \frac12$ the summand $ \left( f(\beta)4^{1-\beta}  f \left( \frac{1 - \beta}{4 \beta} \right) ^{2\beta } \right)^n$ dominates, and for $\beta \ge \frac12$ the summand 
$\left( f(\beta)4^{\beta}  f \left( \frac{1 - \beta}{4 \beta} \right) ^{2\beta } \right)^n$ dominates. 
We have numerically verified that 
\[f(\beta)4^{1-\beta}  f \left( \frac{1 - \beta}{4 \beta} \right) ^{2\beta } \le 7.68 \text{ for } \beta \in (\tfrac13,\tfrac12)\]
% max 4^(1-a)/(a^a(1-a)^(1-a)) * (1/(((1-a)/(4a))^((1-a)/(4a))* (1-((1-a)/(4a)))^(1-((1-a)/(4a)))))^(2a), 1/3<a<1/2
 and 
 \[f(\beta)4^{\beta}  f \left( \frac{1 - \beta}{4 \beta} \right) ^{2\beta } \le 7.871 \text{ for } \beta\in[\tfrac12,1].\]
%max 4^a/(a^a(1-a)^(1-a)) * (1/(((1-a)/(4a))^((1-a)/(4a))* (1-((1-a)/(4a)))^(1-((1-a)/(4a)))))^(2a), a>1/2 
Hence, we can conclude that for $K=4$ and our algorithm runs in time $\Ohstar(7.871^{n+o(n)})=\Oh(7.88^n)$ and in polynomial space. 
This concludes the proof of Theorem~\ref{thm:polyspace}.

%\todo[inline]{wywalić?}
%In our algorithm we do not necessarily need the reduction from Section~\ref{sec:reduction}. 
%However, then our algorithm uses multiple minimum cost maximum flow computations for the possibly large demands of the initial instance.
%Instead, since our algorithm for Theorem~\ref{thm:polyspace} works actually for the more general \probFixDegOutSub problem,
%we can also use the reduction from Theorem~\ref{thm:kernel} and continue with demands bounded by $\Oh(n^2)$.

\section{$(1+\epsilon)$-approximation}
\label{sec:approx}
% !TeX root = many-visits-tsp.tex

In this section we show theorem \ref{thm:aprox}, i.e. we present an algorithm for \probMVTSP which finds a $(1+\epsilon)$-approximation in $\Ohstar\left(\frac{2 ^ n}{\epsilon}\right)$ time and polynomial space.

To achieve this we consider a more general problem, namely \probFixDegConSub.
The main idea is to round weights of edges of the given instance, so that we can use the algorithm for polynomially bounded weights from Lemma \ref{lem:bnd-tsp-small-demands} which is an analog of theorem \ref{thm:bnd-tsp} for \probFixDegConSub.

Let us first consider the case with degrees bounded by a polynomial.

\begin{lemma}
	\label{lem:aprx_bounded_degree}
	For given $\epsilon > 0$ and an instance $I=(d,\In,\Out)$ of \probFixDegConSub such that $\In(v), \Out(v) \le \Oh(n^2)$ for every vertex $v$
	there exists an algorithm finding a $(1+\epsilon)$-approximate solution in $\Ohstar\left(\frac{2 ^ n}{\epsilon}\right)$ time and polynomial space.
\end{lemma}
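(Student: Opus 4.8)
The plan is to use a standard weight-rounding argument that trades exact costs for a controlled multiplicative error, while keeping the rounded weights polynomially bounded so that Lemma~\ref{lem:bnd-tsp-small-demands} applies. The key quantities are the optimum cost $\OPT$ and the total tour length $\ell=\sum_{v\in V}\In(v)=\Oh(n^3)$, which bounds the number of edges used by any solution. The obstacle is that the rounding scale must depend on $\OPT$, which we do not know in advance, so first I would address how to obtain a good estimate.

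\textbf{Guessing the scale.}
First I would observe that for a fixed rounding threshold, the absolute error introduced per edge is at most the rounding granularity, and since a solution uses exactly $\ell=\Oh(n^3)$ edges, the total additive error is at most $\ell$ times the granularity. To turn this additive error into a $(1+\epsilon)$ multiplicative guarantee we want the granularity to be roughly $\epsilon\cdot\OPT/\ell$. Since $\OPT$ is unknown, the plan is to try all candidate scales of the form $2^i$ for $i=0,1,\ldots,\lceil\log_2(\ell D)\rceil$, where $D$ is the maximum finite distance; one of these is within a factor $2$ of the ideal scale $\epsilon\cdot\OPT/\ell$. This adds only a $\Oh(\log(\ell D))=\poly(n)+\Oh(\log D)$ multiplicative overhead, which is absorbed into the $\Ohstar$ notation (here I would note that $\log D$ is polynomial in the input size, so it is a legitimate $\poly$ factor).

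\textbf{The rounding and the call to the exact algorithm.}
For a fixed guessed scale $s$, I would define rounded costs $d'(u,v)=\lfloor d(u,v)/s\rfloor$ for finite $d(u,v)$, and $d'(u,v)=\infty$ otherwise. This makes the maximum finite rounded cost $D'=\lfloor D/s\rfloor$, and for the correct scale $s\approx\epsilon\OPT/\ell$ we have $D'\le D\ell/(\epsilon\OPT)$. Since $\OPT\ge D$ whenever the instance uses the maximum-cost edge (and more carefully $\OPT$ is at least the largest finite cost that must be paid), one gets $D'=\Oh(\ell/\epsilon)=\Oh(n^3\epsilon^{-1})$, so the rounded instance has polynomially bounded costs scaled by $\epsilon^{-1}$. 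Then I would invoke Lemma~\ref{lem:bnd-tsp-small-demands} with cost function $d'$; since the degrees satisfy $\In(v),\Out(v)=\Oh(n^2)$ so that $M=\Oh(n^2)$, and since $D'=\Oh(n^3\epsilon^{-1})$, the running time is $\Ohstar(2^nD'\poly(M))=\Ohstar(2^n\epsilon^{-1})$ and the space is polynomial.

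\textbf{Bounding the error.}
The final step is to verify the approximation ratio for the correct scale. Let $m^*$ be an optimal solution with cost $d(m^*)=\OPT$, and let $m'$ be the solution returned by the exact algorithm on $d'$; since the same family $\Ff$ (oriented trees) is used, $m'$ is feasible for the original instance. Using $s\,d'(u,v)\le d(u,v)\le s\,(d'(u,v)+1)$ edgewise, summing over the $\ell$ edges gives
\[
d(m')\le s\,d'(m')+s\,\ell\le s\,d'(m^*)+s\,\ell\le d(m^*)+s\,\ell=\OPT+s\,\ell,
\]
where the middle inequality uses optimality of $m'$ under $d'$. For the guessed scale with $s\le 2\epsilon\OPT/\ell$ (say, after rescaling the target so that $s\ell\le\epsilon\OPT$), this yields $d(m')\le(1+\epsilon)\OPT$. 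Reporting the minimum-cost solution over all scale guesses therefore gives a $(1+\epsilon)$-approximation. The only subtlety I expect to have to handle carefully is the corner case where $\OPT$ is very small (e.g.\ zero or comparable to $\ell$), where the smallest scale $s=1$ already corresponds to exact computation and the $\Ohstar(2^nD)$ bound of Lemma~\ref{lem:bnd-tsp-small-demands} with $D=\Oh(\poly(n))$ suffices directly.
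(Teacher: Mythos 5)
Your overall strategy (guess a scale, round, call Lemma~\ref{lem:bnd-tsp-small-demands} on the rounded instance, and bound the additive error by the number of edges times the granularity) is the same rounding paradigm the paper uses, and your error chain $d(m')\le s\,d'(m')+s\ell\le s\,d'(m^*)+s\ell\le \OPT+s\ell$ is fine. However, there is a genuine gap in the running-time analysis: you never truncate expensive edges to $\infty$, so the maximum finite rounded cost is $D'=\lfloor D/s\rfloor$ where $D$ is the maximum finite distance of the \emph{whole instance}, not of the optimal solution. Your justification that $D'=\Oh(\ell/\epsilon)$ rests on ``$\OPT\ge D$ whenever the instance uses the maximum-cost edge,'' but the optimal solution need not use that edge: an instance can contain a finite edge of cost $2^{100}$ that no good solution touches, while $\OPT=10$. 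For the ``correct'' scale $s\approx\epsilon\OPT/\ell$ you then get $D'\approx 2^{100}\ell/(\epsilon\OPT)$, and since the running time of Lemma~\ref{lem:bnd-tsp-small-demands} is $\Ohstar(2^nD'\poly(M))$, the claimed bound $\Ohstar(2^n\epsilon^{-1})$ fails. The parenthetical ``more carefully, $\OPT$ is at least the largest finite cost that must be paid'' gestures at the fix but does not implement it.

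The paper closes exactly this hole by first guessing the cost $E$ of the most expensive edge \emph{used by the optimal solution} (at most $n^2$ candidates), setting $d'(u,v)=\infty$ whenever $d(u,v)>E$, and rounding the remaining costs to multiples of roughly $\epsilon E/(Cn^3)$; then every finite rounded cost is at most $\lceil Cn^3/\epsilon\rceil$ by construction, and $E\le\OPT$ gives the $(1+\epsilon)$ guarantee. Your proof becomes correct if you add the analogous truncation: together with each scale guess $s$, discard (set to $\infty$) all edges of cost exceeding the corresponding guessed upper bound on $\OPT$ (or on the largest edge cost in $\OPT$), and note that for the correct guess this removes no edge of the optimal solution, so feasibility and the error bound are unaffected while $D'$ becomes $\Oh(n^3\epsilon^{-1})$.
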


\begin{proof}
Let us denote the optimal solution for $I$ by $\OPT$.
First, our algorithm guesses the most expensive edge used by $\OPT$.
Let us denote its cost by $E$, in particular 
\setcounter{equation}{0}
\begin{equation}
	\label{eq:expensEdge}
 	E \le d(\OPT).
\end{equation}

Let us denote by $C$ the universal constant such that $\In(v), \Out(v) \le C n^2$ for every vertex $v$
and let us round $d$ in the following way
\begin{equation}
	\label{eq:def}
	d'(u,v) := \left\{ \begin{array}{ll}
	\ceil{\frac{C n^3}{\epsilon E}d(u,v)} & \textrm{if $d(u,v)\le E$}\\
	\infty & \textrm{if $d(u,v) > E$}
	\end{array} \right.
\end{equation}

If $d'(u,v)$ is finite then it is bounded by $\ceil{\frac{C n^3}{\epsilon E} E} = \ceil{\frac{C n^3}{\epsilon}}$. Our algorithm simply returns the optimal solution for instance $I' = (d',\In,\Out)$ which can be found in $\Ohstar\left(\frac{2 ^ n}{\epsilon}\right)$ time using the algorithm from Lemma \ref{lem:bnd-tsp-small-demands} with $D = \ceil{\frac{C n^3}{\epsilon}}$.
Let us denote this solution by $\ALG$.
Now we only need to prove that $\ALG$ is $(1 + \epsilon)$-approximation for the original instance $I$.
We know that $\ALG$ is an optimal solution for $I'$, in particular 
\begin{equation}
	\label{eq:optInD'}
	d'(\ALG)\le d'(\OPT).
\end{equation}

For every $v$ we have $\Out(v) \le C n^2$, so 
\begin{equation}
	\label{eq:cntEdges}
	\sum_{(u,v) \in V^2} \OPT(u,v) = \sum_{u \in V} \Out(u) \le n \cdot C n^2.
\end{equation}

The following chain of inequalities finishes the proof.
$$d(\ALG) \stackrel{(\ref{eq:def})}{\le} \frac{\epsilon E}{C n^3} d'(\ALG) \stackrel{(\ref{eq:optInD'})}{\le} \frac{\epsilon E}{C n^3} d'(\OPT) \stackrel{(\ref{eq:def})}{\le} \frac{\epsilon E}{C n^3} \sum_{(u,v) \in V^2} \OPT(u,v) \left(\frac{d(u,v) C n^3}{\epsilon E} + 1\right) = $$
$$ = d(\OPT) + \frac{\epsilon E}{C n^3} \sum_{(u,v) \in V^2} \OPT(u,v) \stackrel{(\ref{eq:cntEdges})}{\le} d(\OPT) + \frac{\epsilon E}{C n^3} Cn^3 \stackrel{(\ref{eq:expensEdge})}{\le} (1 + \epsilon) d(\OPT) $$

\end{proof}

Now we can generalize the algorithm from Lemma \ref{lem:aprx_bounded_degree} using it as a black box for the general case.

\begin{lemma}
	\label{lem:aproxConSub}
	For a given $\epsilon > 0$ and an instance $I=(d,\In,\Out)$ of \probFixDegConSub
	there exists an algorithm finding a $(1+\epsilon)$-approximate solution in $\Ohstar\left(\frac{2 ^ n}{\epsilon}\right)$ time and polynomial space.
\end{lemma}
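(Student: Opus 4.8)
The plan is to reduce the general instance to the polynomially-bounded-degree case handled in Lemma~\ref{lem:aprx_bounded_degree}, using the kernelization of Theorem~\ref{thm:kernel} as a black box. Recall that Theorem~\ref{thm:kernel} takes an instance $I=(d,\In,\Out)$ of \probFixDegFSub and produces in polynomial time a reduced instance $I'=(d,\In',\Out')$ with $\In'(v),\Out'(v)=\Oh(n\cdot s_n(\Ff))$ for every vertex, together with a function $f:V^2\rightarrow\Zq$ such that for any optimal solution $m^*$ of $I'$, the function $f+m^*$ is an optimal solution of $I$. For \probFixDegConSub the family $\Ff$ is the family of oriented trees on $n$ vertices, so $s_n(\Ff)=n-1$ and hence $\In'(v),\Out'(v)=\Oh(n^2)$, exactly the regime required by Lemma~\ref{lem:aprx_bounded_degree}.

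First I would apply Theorem~\ref{thm:kernel} to $I$, obtaining $I'$ and $f$ with degrees bounded by $\Oh(n^2)$. Then I would run the algorithm of Lemma~\ref{lem:aprx_bounded_degree} on $I'$ with the given $\epsilon$, which in $\Ohstar(2^n\epsilon^{-1})$ time and polynomial space returns a solution $m'$ of $I'$ with $d(m')\le(1+\epsilon)\,d(m^*)$, where $m^*$ is an optimal solution of $I'$. Finally I would output $m=f+m'$. The running time and space are dominated by the call to Lemma~\ref{lem:aprx_bounded_degree}, as kernelization runs in polynomial time; both remain $\Ohstar(2^n\epsilon^{-1})$ and polynomial space respectively.

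The correctness argument hinges on relating the additive shift $f$ to the approximation guarantee multiplicatively. Writing $\OPT=f+m^*$ for an optimal solution of $I$ (via Theorem~\ref{thm:kernel}(ii)) and noting that $d$ is additive over multiplicity functions, we have $d(m)=d(f)+d(m')$ and $d(\OPT)=d(f)+d(m^*)$. Since $m^*$ is optimal for $I'$ and $m'$ is a $(1+\epsilon)$-approximation for $I'$, we get $d(m')\le(1+\epsilon)\,d(m^*)$. Therefore
\[
d(m)=d(f)+d(m')\le d(f)+(1+\epsilon)\,d(m^*)\le (1+\epsilon)\bigl(d(f)+d(m^*)\bigr)=(1+\epsilon)\,d(\OPT),
\]
where the middle inequality uses $d(f)\ge 0$. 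This shows $m$ is a $(1+\epsilon)$-approximate solution for $I$, and it is feasible for $I$ by Theorem~\ref{thm:kernel}.

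The main subtlety I anticipate is verifying that the notion of ``optimal solution of $I'$'' used by Lemma~\ref{lem:aprx_bounded_degree} matches the one in Theorem~\ref{thm:kernel}, i.e.\ that the connectivity constraint ($G_{m'}$ containing an oriented tree) is preserved consistently on both sides so that $f+m'$ genuinely contains a spanning connected subgraph; this follows because $m'$ already contains a tree and adding $f$ only increases multiplicities, exactly as argued in the proof of Theorem~\ref{thm:kernel}(ii). Beyond that, the reasoning is a routine composition of a polynomial-time kernel with an approximation black box, and the only real content is the observation that the nonnegative additive term $d(f)$ can be absorbed into a multiplicative $(1+\epsilon)$ factor.
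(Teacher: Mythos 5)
Your proposal is correct and follows essentially the same route as the paper's proof: kernelize via Theorem~\ref{thm:kernel} to get degrees bounded by $\Oh(n^2)$, run the algorithm of Lemma~\ref{lem:aprx_bounded_degree} on the reduced instance, add back $f$, and absorb the nonnegative additive term $d(f)$ into the multiplicative $(1+\epsilon)$ factor via the same chain of inequalities. Your additional remarks on feasibility of $f+m'$ and on $s_n(\Ff)=n-1$ only make explicit what the paper leaves implicit.
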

\begin{proof}
	First let us use algorithm from Theorem \ref{thm:kernel} which outputs an instance $I'=(d,\In',\Out')$ of \probFixDegConSub and a function $f:V^2\rightarrow\Zq$.
	Let us denote the optimal solution for $I'$ by $\OPT'$. By Theorem \ref{thm:kernel} the optimal solution for $I$ equals $\OPT' + f$. Moreover, we know that $\In', \Out'(v) \le \Oh(n^2)$ for every vertex $v$.
	In particular we can use algorithm from Lemma \ref{lem:aprx_bounded_degree} to get a solution $\ALG'$ for instance $I'$ such that $d(\ALG') \le (1 + \epsilon) d(\OPT')$.
	Our algorithm simply returns solution $\ALG' + f$, which is a solution for $I$ because $\ALG$ is connected and $f$ increases degrees exactly by difference between $I$ and $I'$.
	To prove $\ALG' + f$ is $(1+\epsilon)$-approximation we just need to observe that
	$$d(\ALG' + f) \le (1 + \epsilon)d(\OPT') + d(f) \le (1 + \epsilon)d(\OPT' + f).$$
\end{proof}

\probFixDegConSub is a generalization of \probMVTSP so the algorithm from Lemma \ref{lem:aproxConSub} proves Theorem \ref{thm:aprox}.

\section{Further Research}
\label{sec:further}

Since TSP is solvable in time $\Ohstar(2^n)$ and exponential space~\cite{BellmanTSP,HeldKarpTSP} and time $\Oh(4^{n+o(n)})$ and polynomial space~\cite{GurevichShelah}, the main remaining question is whether these bounds can be achieved for \probMVTSP avoiding in the running time bound the linear dependence on maximum distance $D$.
Another interesting goal is a deterministic version of Theorem~\ref{thm:bnd-decision}.

\anonymyze{
\section*{Acknowledgements}
The research leading to the results presented in this paper was partially carried out during the Parameterized Algorithms Retreat of the University of Warsaw, PARUW 2020, held in Krynica-Zdrój in February 2020. This workshop was supported by a project that has received funding from the European Research Council (ERC) under the European Union's Horizon 2020 research and innovation programme under grant agreement No 714704 (PI: Marcin Pilipczuk).
}

\bibliographystyle{abbrv}
\bibliography{many-visits-tsp.bib}

\end{document}